\newcommand{\ptilde}{{\widetilde p}}
\newcommand{\ytilde}{{\widetilde y}}
\newcommand{\Ytilde}{{\widetilde Y}}
\newcommand{\bx}{\mathbf{ x}}
\newcommand{\bz}{\mathbf{ z}}
\newcommand{\bD}{\mathbf{ D}}
\newcommand{\bM}{\mathbf{ M}}
\newcommand{\bS}{\mathbf{ S}}
\newcommand{\bT}{\mathbf{ T}}
\newcommand{\bX}{\mathbf{ X}}
\newcommand{\bY}{\mathbf{ Y}}
\newcommand{\bbE}{\mathbb{ E}}
\newcommand{\bbI}{\mathbb{ I}}
\newcommand{\bbR}{\mathbb{ R}}
\newcommand{\deltatilde}{{\widetilde\delta}}
\newcommand{\Deltatilde}{{\widetilde\Delta}}
\newcommand{\Omegatilde}{{\widetilde\Omega}}
\newcommand{\bbeta}{\boldsymbol{\beta}}
\newcommand{\blambda}{\boldsymbol{\lambda}}
\newcommand{\bnu}{\boldsymbol{\nu}}
\newcommand{\bpi}{\boldsymbol{\pi}}
\newcommand{\bphi}{\boldsymbol{\phi}}
\newcommand{\bpsi}{\boldsymbol{\psi}}
\newcommand{\bDelta}{\boldsymbol{\Delta}}
\newcommand{\bbetahat}{{\widehat\bbeta}}
\newcommand{\bpsihat}{{\widehat\bpsi}}
\DeclareMathOperator\Var{Var}
\DeclareMathOperator\sign{sign}
\DeclareMathOperator*{\argmin}{arg\,min}
\newcommand{\sumin}{\sum_{i=1}^n}
\newcommand{\transpose}{{\mathsf{\scriptscriptstyle T}}}
\newcommand{\trans}{^{\transpose}}
\newcommand{\indep}{\mathrel{\perp\mspace{-10mu}\perp}}
\newcommand{\blind}{0} 
\def\spacingset#1{\renewcommand{\baselinestretch}%
{#1}\small\normalsize} \spacingset{1}
\newtheorem{theorem}{Theorem}
\newtheorem{lemma}{Lemma}
\newtheorem{assumption}{Assumption}
\begin{document}

\def\mytitle{Penalized Estimation of Frailty-Based Illness-Death Models for Semi-Competing Risks}

\if0\blind
{
  \title{\bf \mytitle}
  \author{Harrison T. Reeder, Junwei Lu, and Sebastien J. Haneuse\hspace{.2cm}\\
    Department of Biostatistics, Harvard T.H. Chan School of Public Health}
  \maketitle
} \fi

\if1\blind
{
  \title{\bf \mytitle}
  \author{}
  \maketitle
} \fi

\begin{abstract}
\noindent Semi-competing risks refers to the time-to-event analysis setting where the occurrence of a non-terminal event is subject to whether a terminal event has occurred, but not vice versa. Semi-competing risks arise in a broad range of clinical contexts, including studies of preeclampsia, a condition that may arise during pregnancy and for which delivery is a terminal event. Models that acknowledge semi-competing risks enable investigation of relationships between covariates and the joint timing of the outcomes, but methods for model selection and prediction of semi-competing risks in high dimensions are lacking. Moreover, in such settings researchers commonly analyze only a single or composite outcome, losing valuable information and limiting clinical utility---in the obstetric setting, this means ignoring valuable insight into timing of delivery after preeclampsia has onset. To address this gap we propose a novel penalized estimation framework for frailty-based illness-death multi-state modeling of semi-competing risks. Our approach combines non-convex and structured fusion penalization, inducing global sparsity as well as parsimony across submodels. We perform estimation and model selection via a pathwise routine for non-convex optimization, and prove statistical error rate results in this setting. We present a simulation study investigating estimation error and model selection performance, and a comprehensive application of the method to joint risk modeling of preeclampsia and timing of delivery using pregnancy data from an electronic health record.
\end{abstract}

\noindent%
{\it Keywords:} multi-state model, risk prediction, semi-competing risks, survival analysis, structured sparsity, variable selection
\vfill

{\tiny
\noindent This is the accepted version of the following article: \\

\noindent Reeder, H. T., Lu, J., \& Haneuse, S. (2023). Penalized estimation of frailty‐based illness–death models for semi‐competing risks. Biometrics, 79(3), 1657-1669.

\noindent which has been published in final form at \url{https://doi.org/10.1111/biom.13761}. This article may be used for noncommercial purposes in accordance with the \href{http://www.wileyauthors.com/self-archiving}{Wiley Self-Archiving Policy}.

}



\newpage
\spacingset{1.15} 

\section{Introduction}
\label{sec:intro}

Semi-competing risks refers to the time-to-event analysis setting where a non-terminal event of interest can occur before a terminal event of interest, but not vice versa \citep{fine2001semi}. Semi-competing risks are ubiquitous in health research, with example non-terminal events of interest for which death is a semi-competing terminal event including hospital readmission \citep{lee2015bayesian} and cancer progression \citep{jazic2016beyond}. An example where death is not the terminal outcome is preeclampsia (PE), a pregnancy-associated hypertensive condition that complicates between 2-8\% of all pregnancies and represents a leading cause of maternal and fetal/neonatal mortality and morbidity worldwide \citep{jeyabalan2013epidemiology}. PE can develop during the pregnancy starting at 20 weeks of gestation, but once an individual has given birth they can no longer develop PE, so PE onset and delivery form semi-competing risks. Clinically, the timing of these events is of vital importance. Once PE arises, maternal health risks increase as the pregnancy continues, while giving birth early to alleviate these risks may in turn pose risks to neonatal health and development. Therefore, we are motivated to develop risk models that identify covariates affecting risk and timing of PE while also characterizing the timing of delivery after PE has onset.

Semi-competing risks data represent a unique opportunity to learn about outcomes jointly, by (1) modeling the interplay between the events and baseline covariates, and (2) predicting the covariate-specific risk of experiencing combinations of the outcomes across time. Unfortunately, analysts commonly collapse this joint outcome, considering either the non-terminal or terminal event alone, or a composite endpoint \citep{jazic2016beyond}. While this enables the use of prediction methods for univariate binary or time-to-event outcomes, modeling risk for one outcome is both a lost opportunity and a severe misalignment with how health-related decisions are actually made; as the PE setting illustrates, clinical care is informed by the joint timing of PE onset and subsequent delivery, not just risk of PE.

Instead, frailty-based illness-death multi-state models \citep{xu2010statistical, lee2015bayesian} characterize the dependency of semi-competing risks and covariates, while also enabling absolute joint risk prediction across time \citep{putter2007tutorial}. These methods comprise three cause-specific hazard submodels for: (i) the non-terminal event; (ii) the terminal event without the non-terminal event; and, (iii) the terminal event after the non-terminal event. Different covariates can affect each hazard differently, and the interplay of these submodels determines the overall covariate-outcome relationship. A person-specific random frailty shared across the submodels captures residual dependence between the two events.

Motivated by application to PE, we consider the task of developing joint risk models for semi-competing risks, specifically in high-dimensional settings such as electronic health records-based studies. Two questions framing model development emerge: (1) which covariates should be included in each submodel, and (2) can information about covariate effects be shared across submodels? To our knowledge only two published papers consider variable selection for these (and related) models, each with important limitations. \citet{sennhenn2016structured} propose $\ell_1$-penalized estimation for general multistate models, with parameter-wise penalties inducing sparsity in each submodel and a fused penalty coercing effects for a given covariate to be the same across submodels. This framework, however, does not permit a shared frailty in the model specification, focuses solely on $\ell_1$-penalization, and uses a Newton-type algorithm that does not scale to high dimensions. Instead, \citet{chapple2017bayesian} propose a Bayesian spike-and-slab variable selection approach for frailty illness-death models. This framework, however, does not consider linking coefficients across submodels, and is computationally intensive even in low dimensions.

In this paper we propose a novel high-dimensional estimation framework for penalized parametric frailty-based illness-death models. A critical challenge in this setting, however, is that the likelihood-based loss function is non-convex. This renders the development of theoretical results and efficient computational tools particularly difficult. Moreover, to our knowledge no prior literature has examined theoretical properties of penalized frailty-based illness-death models.
In relevant work, \citet{loh2015regularized} prove error bounds for non-convex loss functions with non-convex penalties, but the conditions underlying their result do not directly apply to this setting. 
Taking into account these various issues, the contributions of this paper are threefold. First, we propose a framework for selecting sparse covariate sets for each submodel via individual non-convex penalties, while inducing parsimony via a fused penalty on effects shared across submodels. Second, we develop a proximal gradient optimization algorithm with a pathwise routine for tuning the model over a grid of regularization parameters. Finally, we prove a high-dimensional statistical error rate for the penalized frailty-based illness-death model estimator. We present a simulation study investigating estimation and model selection properties, and develop a joint risk model for PE and delivery using real pregnancy outcome data from electronic health records.

\section{Penalized Illness-Death Model Framework} \label{sec:methods}
\subsection{Illness-Death Model Specification} \label{subsec:model}
Let $T_1$ and $T_2$ denote the times to the non-terminal and terminal events, respectively. As outlined in \citet{xu2010statistical}, the illness-death model characterizes the joint distribution of $\bT=(T_1,T_2)$ by three hazard functions: a cause-specific hazard for the non-terminal event; a cause-specific hazard for the terminal event in the absence of the non-terminal event; and, a hazard for the terminal event conditional on $T_1=t_{1}$. These three hazards can be structured as a function of covariates, denoted $\bX$, and an individual-specific random frailty, denoted $\gamma$, to add flexibility in the dependence structure between $T_1$ and $T_2$, as follows:
\begin{align}
\label{eq:h1formal}
	 h^c_1(t_1\mid \bX_1,\gamma)\ & =\ \lim_{\Delta \downarrow 0} \Delta^{-1} \Pr(T_1\in [t_1,t_1+\Delta)\mid T_1 \geq t_1,T_2 \geq t_1,\bX_1,\gamma),\quad t_1 > 0, \\
\label{eq:h2formal}
	 h^c_2(t_2\mid \bX_2,\gamma)\ & =\ \lim_{\Delta \downarrow 0} \Delta^{-1} \Pr(T_2\in [t_2,t_2+\Delta)\mid T_1 \geq t_2,T_2 \geq t_2,\bX_2,\gamma),\quad t_2 > 0, \\
\label{eq:h3formal}
	 h^c_3(t_2\mid t_1,\bX_3,\gamma)\ & =\ \lim_{\Delta \downarrow 0} \Delta^{-1} \Pr(T_2\in [t_2,t_2+\Delta)\mid T_1 =t_1,T_2 \geq t_2,\bX_3,\gamma),\quad t_2 > t_1 > 0,
\end{align}
where $\bX_1$,  $\bX_2$, and $\bX_3$ are each subsets of $\bX$. Practically, in order to make progress, one must specify some form of structure regarding the dependence of each hazard on $\bX$ and $\gamma$. In this paper we focus on the class of multiplicative-hazard regression models, of the form:
\begin{align}
	\label{eq:h1model}
	 h^c_1(t_1\mid \bX_1,\gamma)\ & =\ \gamma h_{1}(t_1\mid \bX_1) = \gamma h_{01}(t_1)\exp(\bX_1\trans\bbeta_1) , \quad t_1>0, \\
	 \label{eq:h2model}
	 h^c_2(t_2\mid \bX_2,\gamma)\ & =\ \gamma h_{2}(t_2\mid \bX_2) = \gamma  h_{02}(t_2)\exp(\bX_2\trans\bbeta_2) , \quad t_2>0, \\ 
	\label{eq:h3model}
	 h^c_{3}(t_2\mid t_1,\bX_3,\gamma)\ & =\ \gamma h_{3}(t_2\mid t_1,\bX_3) = \gamma  h_{03}(t_2\mid t_1)\exp(\bX_3\trans\bbeta_3), \quad t_2>t_1>0,
\end{align}
where $h_{0g}$ is a transition-specific baseline hazard function, $g=1,2,3$, and $\bbeta_g \in \bbR^{d_g}$ is a $d_g$-vector of transition-specific log-hazard ratio regression coefficients. For ease of notation of the hazard functions, we subsequently suppress conditionality on $\gamma$ and $\bX_g$.

Within this class of models, analysts must make several choices about the structure of the specific model to be adopted.  First, it must be decided how exactly the non-terminal event time $T_1$ affects $h_{03}(t_2\mid t_1)$ in submodel \eqref{eq:h3model}. The so-called Markov structure sets $h_{03}(t_2\mid t_1) = h_{03}(t_2)$, meaning the baseline hazard is independent of $t_1$. Alternatively, the semi-Markov structure sets $h_{03}(t_2\mid t_1) = h_{03}(t_2-t_1)$, so the time scale for $h_3$ becomes the time from the non-terminal event to the terminal event (sometimes called the sojourn time) \citep{putter2007tutorial,xu2010statistical}. Semi-Markov specification also allows functions of $t_1$ as covariates in the $h_3$ submodel, to further capture dependence between $T_1$ and $T_2$. While the choice of structure changes the interpretation of $h_3$ and $\bbeta_3$, the proposed penalization framework and theory allow either specification. Rather than focus on one or the other, we use semi-Markov specification when writing equations to simplify notation, as well as in the simulation study. However, in our application to pregnancy data we use Markov models to facilitate interpretation of the resulting estimates on the scale of gestational age.

A second important choice concerns the form of the three baseline hazard functions. Given the overarching goals of this paper, we focus on parametric specifications with a fixed-dimensional $k_g$-vector of unknown parameters, $\bphi_g = (\phi_{g1},\dots,\phi_{gk_g})\trans$, for the $g$th baseline hazard. For example, one could adopt a form arising from some specific distribution such as the Weibull distribution: $h_{0g}(t) = \exp(\phi_{g1} + \phi_{g2})\cdot t^{\exp(\phi_{g1}) - 1}$. More flexible options include the piecewise constant baseline hazard defined as $h_{0g}(t) = \sum_{j=1}^{k_g} \exp(\phi_{gj}) \mathbb{I}(t^{(j)} \leq{} t < t^{(j+1)})$, with a user-defined set of breakpoints $0 = t^{(1)} <\dots < t^{(k_g)} < t^{(k_g+1)}=\infty$. Web Appendix~\ref{sec:splines} describes other possible flexible spline-based baseline hazard specifications.
  
Finally, one must choose a distribution for $\gamma$, the individual-specific frailties. These terms serve to capture additional within-subject correlation between $T_1$ and $T_2$ beyond covariate effects, and increase flexibility beyond the assumed baseline hazard specification and Markov or semi-Markov model structure \citep{xu2010statistical}. In this, the frailties play a role that is analogous to that of random effects in generalized linear mixed models. As discussed in Web Appendix~\ref{subsec:predfrail}, inclusion of frailties also helps to characterize variability in individualized risk predictions.
While in principle one could adopt any distribution for $\gamma$, we focus on the common choice of $\gamma \sim \text{Gamma}(e^{-\sigma},e^{-\sigma})$. This distribution has mean 1 and variance $e^{\sigma}$, and uniquely yields a closed form marginal likelihood, as shown in \eqref{eq:likelihoodi:frailty}. This log-variance parameter $\sigma$ can be interpreted as characterizing residual variability of the outcomes beyond the specified baseline hazard and transition model structure.

\subsection{The Observed Data Likelihood}
\label{subsec:joint}

Now, let $C$ denote the right-censoring time. The observable outcome data for the $i$th subject is then $\mathcal{D}_i = \{Y_{1i}, \Delta_{1i}, Y_{2i}, \Delta_{2i}, \bX_{i}\}$ where $Y_{1i}=\min(C_i,T_{1i},T_{2i})$, $Y_{2i}=\min(C_i,T_{2i})$, $\Delta_{1i}=\bbI(Y_{1i}=T_{1i})$, and $\Delta_{2i}=\bbI(Y_{2i}=T_{2i})$. We denote the corresponding observed outcome values as $y_{1i}$, $y_{2i}$, $\delta_{1i}$, and $\delta_{2i}$. Given specification of models \eqref{eq:h1model}, \eqref{eq:h2model} and \eqref{eq:h3model}, let $\bphi = (\bphi_{1}\trans,\bphi_{2}\trans,\bphi_{3}\trans)\trans$ denote the $k \times 1$ vector of baseline hazard components, with $k = k_1+k_2+k_3$, and $\bbeta = (\bbeta_{1}\trans,\bbeta_{2}\trans,\bbeta_{3}\trans)\trans$ the $d\times 1$ vector of log-hazard ratios, with $d = d_1+d_2+d_3$. Finally, let $\bpsi = (\bbeta\trans, \bphi\trans,  \sigma)\trans$ denote the full set of $d + k + 1$ unknown parameters.

To develop the observed data likelihood, we assume independencies between: frailty and covariates, $\gamma \indep \bX$; frailty and censoring time given covariates, $\gamma \indep C \mid  \bX$; and, censoring time and event times, given covariates and frailty, $C\indep \bT \mid  (\gamma,\bX)$. Illustrating under semi-Markov specification, the $i$th likelihood contribution given $\gamma_i$ is $\mathcal{L}(\bbeta,\bphi\mid \gamma_i,\mathcal{D}_i) = h^c_{1}(y_{1i})^{\delta_{1i}}h^c_{2}(y_{1i})^{(1-\delta_{1i})\delta_{2i}}h^c_{3}(y_{2i}-y_{1i})^{\delta_{1i}\delta_{2i}} \exp\{ -H^c_1(y_{1i}) - H^c_2(y_{1i}) - \delta_{1i}H^c_3(y_{2i}-y_{1i}) \}$,
where $H^{c}_g(t) = \int_{0}^t h^{c}_g(s)ds$. Finally, integrating out the gamma-distributed frailty, the $i$th marginal likelihood contribution takes the closed form
\begin{equation}\label{eq:likelihoodi:frailty}
\begin{aligned} 
\mathcal{L}(\bpsi\mid \mathcal{D}_i)\ =\ & h_{1}(y_{1i})^{\delta_{1i}}h_{2}(y_{1i})^{(1-\delta_{1i})\delta_{2i}}h_{3}(y_{2i}-y_{1i})^{\delta_{1i}\delta_{2i}} (1+e^{\sigma})^{\delta_{1i}\delta_{2i}}
\\ & \times (1+e^{\sigma}\{ H_1(y_{1i}) + H_2(y_{1i}) + H_3(y_{2i}-y_{1i}) \})^{-\exp(-\sigma)-\delta_{1i}-\delta_{2i}}.
\end{aligned}
\end{equation} 

\subsection{Penalization for Sparsity and Model Parsimony}
\label{subsec:pen}

Given an i.i.d sample of size $n$, let $\ell(\bpsi) = -n^{-1}\sum_{i=1}^n \log \mathcal{L}(\bpsi\mid \mathcal{D}_i)$ denote the negative log-likelihood. Penalized likelihood estimation follows via the introduction of a penalty function $P_{\blambda}(\bpsi)$, yielding a new objective function of the form $Q_{\blambda}(\bpsi) = \ell(\bpsi) + P_{\blambda}(\bpsi).$

Letting $\bX=\bX_1=\bX_2=\bX_3$, the illness-death model's hazards allow each $X_j$ to potentially have three different coefficients: a cause-specific log-hazard ratio for the non-terminal event ($\beta_{1j}$), a cause-specific log-hazard ratio for the terminal event ($\beta_{2j}$), and a log-hazard ratio for the terminal event given the non-terminal event has occurred ($\beta_{3j}$). We propose a structured $P_{\blambda}(\bpsi)$ simultaneously targeting two properties: (i) sparsity, by identifying important non-zero covariate effects, and (ii) parsimony, by identifying relationships between the effects of each covariate across the three submodels. We propose the general form
\begin{equation}\label{eq:genpen}
	P_{\blambda}(\bpsi)\ =\ \sum_{g=1}^3 \sum_{j=1}^{d_{g}}p_{\lambda_1}(\lvert\beta_{gj}\rvert)  + \sum_{g\neq g'}\sum_{j=1}^{d_g} p_{\lambda_2}\left(\lvert\beta_{gj}-\beta_{g'j}\rvert\right).
\end{equation}
The first component, regulated by $\lambda_1$, induces sparsity by setting unimportant covariate effects to zero. The second component, regulated by $\lambda_2$, induces parsimony by regularizing the cofficients of each covariate $X_j$ towards being similar or shared across submodels. 

For each component, one could, in principle, consider any of a wide array of well-known penalties, such as the Lasso $\ell_1$ penalty \citep{tibshirani1996regression} or non-convex penalties like smoothly-clipped absolute deviation (SCAD) \citep{fan2001variable}:
\begin{align} \label{eq:scad}
	p_{\lambda}(\lvert\beta\rvert) = \begin{cases} \lambda\lvert\beta\rvert, & \lvert\beta\rvert\leq{} \lambda, \\ -\left\{ \beta^2 - 2 \xi \lambda\lvert\beta\rvert + \lambda^2 \right\}/\{2(\xi -1)\}, & \lambda < \lvert\beta\rvert \leq{} \xi \lambda, \\ (\xi +1)\lambda^2/2, & \lvert\beta\rvert > \xi \lambda, \end{cases}
\end{align}
with $\xi >2$ controlling the level of non-convexity. This penalty behaves like the Lasso near zero, but flattens out for larger values, reducing bias on truly non-zero estimates. 

The motivation for this form of penalty is both clinical and statistical, and we emphasize that depending on the application, the fusion penalties between $\bbeta_1$, $\bbeta_2$ and/or $\bbeta_3$ can be included or omitted from $P_{\blambda}(\bpsi)$. Clinically, fusion penalties are valuable when there is subject matter knowledge indicating that covariates likely have similar effects in two or more submodels. For example, $\bbeta_2$ and $\bbeta_3$ both represent log-hazard ratios for the terminal event, with $\bbeta_2$ representing cause-specific effects in the absence of the non-terminal event, and $\bbeta_3$ representing effects conditional on the occurrence of the non-terminal event. Therefore, fusing $\bbeta_2$ and $\bbeta_3$ imposes a structure where covariate effects on the terminal event are similar whether or not the non-terminal event has occurred. Relatedly, $\bbeta_1$ and $\bbeta_2$ both represent cause-specific log-hazard ratios, for the non-terminal and terminal event respectively. Therefore, in settings where both non-terminal and terminal events represent negative health outcomes, like cancer progression and death, fusing these components induces each covariate to have similar or shared cause-specific hazard ratio estimates for the two events. In any case, well-chosen structured fusion penalties can be used to encode clinically meaningful subject matter knowledge into the estimation framework. 

Statistically, fusion penalties may also be valuable when there is relatively little information on one of these submodels, and the goal is to impose structure and stabilize estimation. For example, in settings where the non-terminal event is rare relative to the terminal event, there will be more information available for estimating $\bbeta_2$ relative to $\bbeta_1$ or $\bbeta_3$. Therefore, adding a fusion penalty between $\bbeta_2$ and either $\bbeta_1$ and/or $\bbeta_3$ regularizes the more variable estimates of $\bbeta_1$ and/or $\bbeta_3$ towards the more precise estimates of $\bbeta_2$, effectively borrowing information across submodels. As with all regularized estimation, this directly reflects a bias-variance trade off: imposing structure on covariate effects across hazards to reduce variance, or leaving effects unstructured across hazards to reduce bias.


\section{Optimization}

Practically, minimizing the objective function $Q_{\blambda}(\bpsi)$ with respect to $\bpsi$ poses several interconnected challenges: the loss function $\ell$ is non-convex due to the marginalized random frailty; the penalty functions $p_{\lambda}$ may also be non-convex; and, the fusion penalty component does not admit standard algorithms for general fused Lasso tailored to linear regression \citep{tibshirani2011solution}. Finally,  the combination of penalties requires tuning a two-dimensional regularization parameter $\blambda$. In this section, we propose a comprehensive optimization routine to simultaneously and efficiently handle these challenges.

\subsection{Proximal Gradient Descent with a Smoothed Fusion Penalty}

Proximal gradient descent iteratively minimizes objective functions like $Q_{\blambda}$ defined as the sum of a differentiable loss function and a non-differentiable penalty. When $P_{\blambda}(\bpsi)$ is the standard Lasso $\ell_1$-penalty $\lambda\|\bbeta\|_{1}$, the algorithm reduces to standard gradient descent with an added soft-thresholding operation. To leverage this property, we combine two techniques to recast $Q_{\blambda}$ from a loss with a complex penalty into a loss with a simple Lasso penalty.

First we decompose each $p_\lambda$ in \eqref{eq:genpen} into the sum of a smooth concave term and a simple $\ell_1$ penalty term, of the form $p_\lambda(\lvert x\rvert)  = \ptilde_\lambda(\lvert\beta\rvert) + \lambda\lvert\beta\rvert$,
where $\ptilde_{\lambda}$ is a Lipschitz-smooth concave function \citep{zhao2018pathwise}. The goal is to treat the smooth component as part of the likelihood, leaving only a simpler $\ell_1$ penalty. This decomposition can be done to both the parameterwise penalties $p_{\lambda_1}$ and the fusion penalties $p_{\lambda_2}$, rewriting \eqref{eq:genpen} as
\begin{equation}\label{eq:genpendecomp}
	P_{\blambda}(\bpsi) = \sum_{g=1}^3 \sum_{j=1}^{d_{g}}\ptilde_{\lambda_1}(\lvert\beta_{gj}\rvert) + \sum_{g\neq g'}\sum_{j=1}^{d_g} \ptilde_{\lambda_2}\left(\lvert\beta_{gj}-\beta_{g'j}\rvert\right) + \lambda_{1}\|\bbeta\|_{1} + \Omega_{\lambda_2}(\bpsi),
\end{equation}
where $\Omega_{\lambda_2}(\bpsi) = \sum_{g\neq g'}\sum_{j=1}^{d_g} \lambda_2 \lvert\beta_{gj}-\beta_{g'j}\rvert$ denotes the fusion $\ell_1$ penalty.

However, this fusion penalty $\Omega_{\lambda_2}(\bpsi)$ still complicates optimization, so next we use Nesterov smoothing to substitute it with a smoothed, differentiable surrogate   \citep{chen2012smoothing}. Defining $\bD_{\lambda_2}$ as a contrast matrix such that $\Omega_{\lambda_2}(\bpsi) = \|\bD_{\lambda_2}\bpsi\|_1$, the surrogate is
\begin{equation}\label{eq:pensmooth}
\Omegatilde_{\lambda_2,\mu}(\bpsi) = \max_{\|\bz\|_{\infty}\leq{} 1}\left( \bz\trans \bD_{\lambda_2}\bpsi - \mu\|\bz\|^2_2/2 \right) =  (\bz^*)\trans \bD_{\lambda_2}\bpsi - \mu\|\bz^*\|^2_2/2,
\end{equation}
where $\bz^* = \bS(\bD_{\lambda_2}\bpsi/\mu)$ and $\bS(\bx)$ is the vector-valued projection operation onto the unit box, defined at the $j$th element by $[\bS(\bx)]_j = \sign(x_j)\max(1, \lvert x_j\rvert)$ and $\mu>0$ is a user-chosen smoothness parameter. Smaller $\mu$ yields a tighter approximation, with the gap between penalty and surrogate bounded by $\Omega_{\lambda_2}(\bpsi) - \mu J / 2 \leq{} \Omegatilde_{\lambda_2,\mu}(\bpsi) \leq{} \Omega_{\lambda_2}(\bpsi)$, where $J$ is the number of pairwise fusion terms. Web Appendix~\ref{sec:algodetails} details tuning methods for $\mu$.

Together, \eqref{eq:genpendecomp} and \eqref{eq:pensmooth} recast the objective function $Q_{\blambda}(\bpsi)$ as an $\ell_1$-penalized objective:
\begin{equation}\label{eq:optimpensep}
Q_{\blambda,\mu}(\bpsi) = \widetilde{\ell}_{\blambda,\mu}(\bpsi) + \lambda_1\|\bbeta\|_1,
\end{equation}
where $\widetilde{\ell}_{\blambda,\mu}(\bpsi) = \ell(\bpsi) + \sum_{g=1}^3 \sum_{j=1}^{d_g}\ptilde_{\lambda_1}(\lvert\beta_{gj}\rvert) +  \sum_{g\neq g'}\sum_{j=1}^{d_g} \ptilde_{\lambda_2}(\lvert\beta_{gj}-\beta_{g'j}\rvert) +  \Omegatilde_{\lambda_2,\mu}(\bpsi).$ Towards optimizing \eqref{eq:optimpensep}, define the vector-valued soft thresholding operation $\bS_{\lambda}(\bx)$ at the $j$th element by $[\bS_{\lambda}(\bx)]_j = \sign(x_j)\max(0, \lvert x_j\rvert - \lambda)$. Then the $m$th step of the iterative proximal gradient algorithm is $\bpsi^{(m)} \leftarrow \bS_{\lambda_1}\left\{ \bpsi^{(m-1)} - r^{(m)} \cdot \nabla \widetilde{\ell}_{\blambda,\mu}(\bpsi^{(m-1)}) \right\}$, where $r^{(m)}$ is an adaptive step size determined by backtracking line search \citep[see, e.g., `Algorithm 3' of][]{wang2014optimal}. Iterations continue until change in objective function $\lvert Q_{\blambda,\mu}(\bpsi^{(m)}) - Q_{\blambda,\mu}(\bpsi^{(m-1)})\rvert$ falls below a given threshold (e.g., $10^{-6}$ in this paper).

\subsection{Tuning Regularization Parameters via Pathwise Grid Search}

For non-convex penalized problems with a single regularization parameter $\lambda$, recent path-following routines apply proximal gradient descent or coordinate descent over a decreasing sequence of regularization parameters \citep{wang2014optimal,zhao2018pathwise}. At each new $\lambda$, these routines initialize at the solution of the prior $\lambda$. The result is a sequence of estimates across a range of penalization levels, also called a regularization path. Under certain conditions these `approximate path-following' approaches yield high-quality local solutions with attractive theoretical properties, even when the loss and/or regularizer are non-convex. Heuristically, many non-convex objective functions are locally convex in the neighborhood of well-behaved optima, and so incrementally optimizing over a sequence of small changes to $\lambda$ ensures that each local solution remains in the convex neighborhood of the solution under the previous $\lambda$.

Therefore, we develop a pathwise approach to the penalized illness-death model \eqref{eq:optimpensep} with a search routine over a two-dimensional grid of the sparsity parameter $\lambda_1$ and fusion parameter $\lambda_2$ (Figure~\ref{fig:gridsearchplot}). This routine consists of an outer loop decrementing $\lambda_1$ as in standard pathwise algorithms, and an inner loop comprising a branching pathwise search over increasing $\lambda_2$ values. The resulting grid search of the model space slowly grows the number of non-zero coefficient estimates as $\lambda_1$ decreases in the outer loop, and then explores how the resulting non-zero coefficients fuse as $\lambda_2$ increases in the inner loop. Assuming sparsity of the regression coefficients, a straightforward choice to begin the pathwise regularization grid search routine is to set $\bbeta=\mathbf{0}$, and set the remaining parameters to the unadjusted MLE estimates fit without covariates. Optimization at each point initializes from the solution at the prior relevant step. 

\begin{figure}
\centering
	\includegraphics[width=\textwidth]{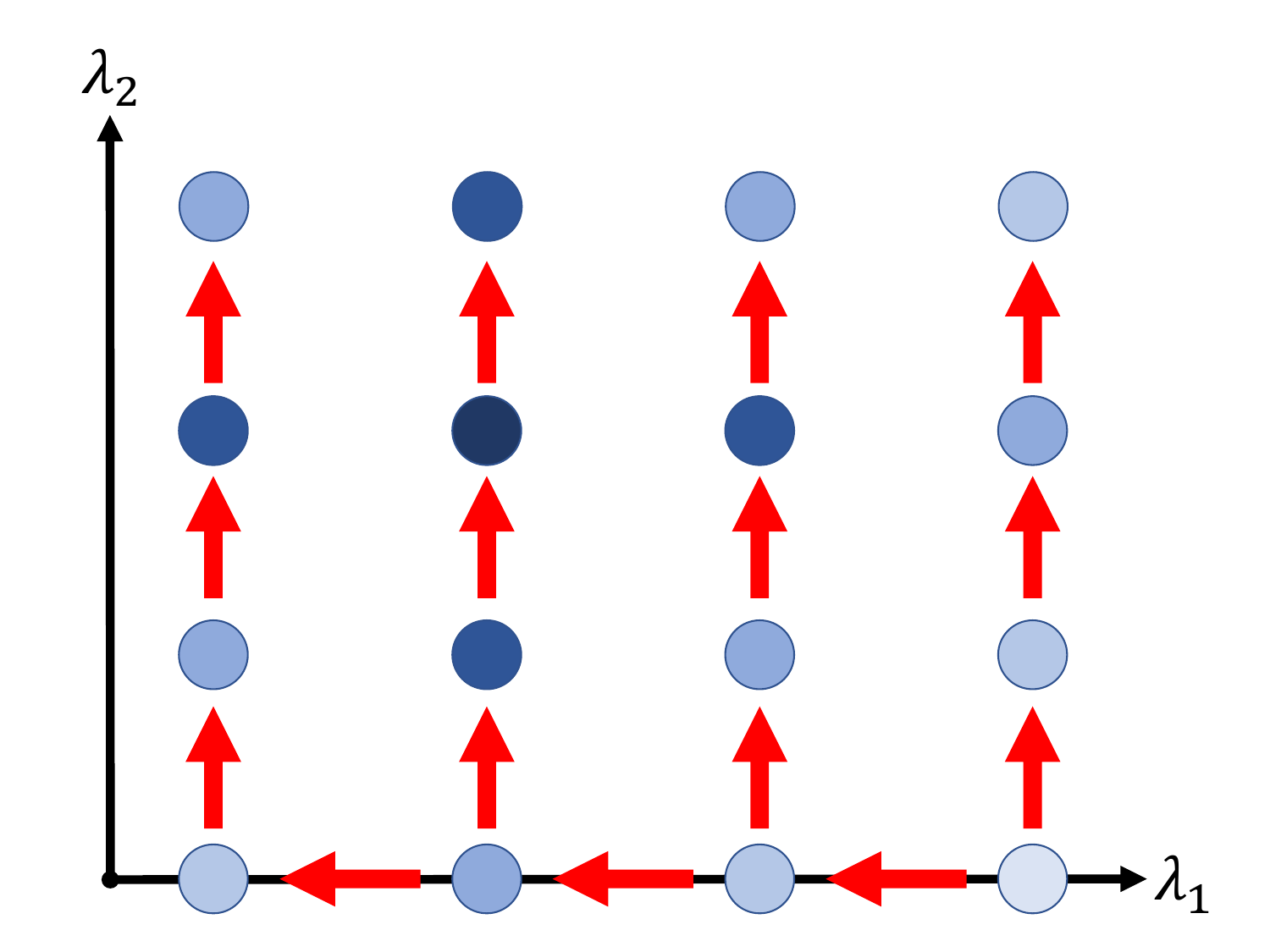}
\caption{Schematic depicting path-following grid search routine over $(\lambda_1,\lambda_2)$. Each dot represents a $(\lambda_1,\lambda_2)$ pair for which the penalized estimator is fitted, with darker shading corresponding to better model fit metric (e.g., AIC or BIC). The arrows illustrate the path of the search routine, with optimization at each grid point starting at the solution of the previous point. This figure appears in color in the electronic version of this article, and any mention of color refers to that version.\label{fig:gridsearchplot}}
\end{figure}

The grid should be as fine as computational costs allow; \citet{wang2014optimal} recommend that successive values of $\lambda_1$ differ by no more than a factor of 0.9, and for $\lambda_2$ we chose four grid points in simulations and seven for the data application. Final choice of $(\lambda_1,\lambda_2)$ follows by minimizing a performance metric computed at each grid point, depicted by shading at each point in Figure~\ref{fig:gridsearchplot}. Metrics such as Bayesian Information Criterion (BIC) or Akaike Information Criterion (AIC) may be computed using model degrees of freedom estimated by the number of unique covariate estimates \citep{sennhenn2016structured}.


\section{Theoretical Results}\label{sec:theory}

In this section, we derive the statistical error rate for estimation of the true parameter vector $\bpsi^*$ in a gamma frailty illness-death model with non-convex penalty, encompassing high-dimensional settings where $d > n$ with sparsity level denoted $\|\bpsi^*\|_{0} = s$. This work builds on the framework of \citet{loh2015regularized}, extended to the additional complexities of parametric gamma-frailty illness-death models. We develop a set of sufficient conditions for this setting under which we prove the statistical rate, and verify that such conditions hold with high probability under several common model specifications.

This statistical investigation focuses on the estimator 
\begin{equation}\label{eq:optimtheory}
\bpsihat\ =\ \argmin_{\|\bpsi\|_1 \leq{} R_{1}} Q_{\lambda}(\bpsi) = \ell(\bpsi) + \sum_{j=1}^{d_{1}} p_{\lambda} \left(\lvert\beta_{1j}\rvert\right) + \sum_{g=2}^3\sum_{j=1}^{d_{g}} p_{\lambda} \left(\lvert\beta_{gj} - \beta_{1j}\rvert\right).
\end{equation} 
We note that while the general framework \eqref{eq:genpen} allows a penalty on every element and pairwise difference of the regression parameter vectors $\bbeta_1$, $\bbeta_2$, and $\bbeta_3$, the estimator based on \eqref{eq:optimtheory} specifically penalizes $\bbeta_1$ and its pairwise differences with $\bbeta_2$ and $\bbeta_3$. This facilitates the theoretical analysis while retaining the property that the elementwise differences between $\bbeta_g$'s are sparse. Moreover, to accommodate the role of non-convexity the constraint $\|\bpsi\|_{1} \leq{} R_1$ is imposed on the parameter space over which solutions are sought~\citep{loh2015regularized}. See Assumption~\ref{ass:vec} below and its remark for detailed discussion. 
 
We start by listing assumptions used to derive the statistical rate of the estimator of $\bpsi^*$ based on \eqref{eq:optimtheory}. To unify outcome notation across all three submodels $g=1,2,3$, let
\begin{align}\label{eq:ytilde}
\Ytilde_{gi} = & \begin{cases} Y_{1i}, & g=1,2, \\ Y_{2i}-Y_{1i}, & g=3, \end{cases} &\text{and}& & \Deltatilde_{gi} = & \begin{cases} \Delta_{1i}, & g=1, \\ (1-\Delta_{1i})\Delta_{2i}, & g=2, \\ \Delta_{1i}\Delta_{2i}, & g=3,\end{cases}
\end{align}
where $Y_{gi}$ and $\Delta_{gi}$ are defined as in Section \ref{subsec:joint}.

\begin{assumption}[Bounded Data]\label{ass:bound}
There exists some administrative maximum time $\tau_{Y}$ such that $0<Y_{1} \leq{} Y_{2} \leq{} \tau_{Y}<\infty$. Additionally, there exists some positive covariate bound $\tau_{X}$ such that $\tau_{X} \geq \|\bX\|_{\infty}$, where $\|\bX\|_{\infty} = \max_{j=1,\dots,d}\lvert X_{j}\rvert$.
\end{assumption}
Assumption~\ref{ass:bound} ensures boundedness of the observed data. This assumption will invariably be satisfied in real world data applications, especially in time-to-event studies where person-time is censored and there are practical limits on covariate values.

\begin{assumption}[Bounded True Parameter]\label{ass:vec}
There exists a $R_2 > 0$ such that $\|\bpsi^*\|_{1} \leq{} R_{2}$.
\end{assumption}
Assumption~\ref{ass:vec} characterizes the overall length of the true parameter vector in terms of $\ell_1$-norm. Combined with the side constraint introduced in \eqref{eq:optimtheory} and setting $R = R_1+ R_2$, this ensures by the triangle inequality that there is an overall bound $\|\bpsi - \bpsi^*\|_1 \leq R$ for each iterate and all stationary points of the optimization routine. 

\begin{assumption}[Bounded Minimum Population Hessian Eigenvalue]\label{ass:hess}
There exists a $\rho>0$ such that $\min_{\bpsi: \|\bpsi-\bpsi^*\|_2 \leq{} R}\lambda_{\min}\{\Sigma(\bpsi)\} \ge \rho$, where $\lambda_{\min}\{\Sigma(\bpsi)\}$ is the minimum eigenvalue of $\Sigma(\bpsi) = \bbE\{\nabla^2 \ell(\bpsi)\}$, the population Hessian matrix at $\bpsi$.
\end{assumption}
Assumption~\ref{ass:hess} characterizes the positive-definiteness of the expected Hessian matrix of the loss function as a function of $\bpsi$, and guarantees curvature of the population loss function in a neighborhood around the truth.

\begin{assumption}[Baseline Hazard Function Sufficient Conditions]\label{ass:cond}
For $g,r=1,2,3$, $j=1,\dots,k_{g}$, and $l=1,\dots,k_{r}$, and for all $\left\{\bpsi:\|\bpsi - \bpsi^*\|_2 \leq{} R\right\}$,
\begin{enumerate}
	\item[(a)] $H_{0g}(t)$, $\partial H_{0g}(t) / \partial\phi_{gj}$, and $\partial^2 H_{0g}(t) / (\partial\phi_{gj}\partial\phi_{rl})$ are bounded functions on $0 \leq{} t \leq{} \tau$ for any $\tau > 0$. \label{ass:cond1}
	\item[(b)] $\Var\left\{\Deltatilde_{gi}\partial \log h_{0g}(\Ytilde_{gi}) / \partial\phi_{gj}\right\}$ is finite. \label{ass:cond2}
	\item[(c)] Each log-hazard second derivative factorizes into the form $\partial^2 \log h_{0g}(t) / (\partial\phi_{gj}\partial\phi_{rl}) = w^{gr}_{jl}(\bpsi)z^{gr}_{jl}(t)$, where $w^{gr}_{jl}(\bpsi)$ is only a function of $\bpsi$ and $z^{gr}_{jl}(t)$ is only a function of $t$. In addition, every $\Var\left\{\Deltatilde_{gi}z^{gr}_{jl}(\Ytilde_{gi})\right\}$ is finite. \label{ass:cond3}
\end{enumerate}
\end{assumption}
Assumption~\ref{ass:cond} outlines conditions regarding the baseline hazard functions in the illness-death model specification. Collectively, these conditions are imposed to control the maximum deviations of the gradient $\|\nabla \ell(\bpsi^*)\|_{\infty}$, and Hessian $\|\nabla^2 \ell(\bpsi) - \Sigma(\bpsi)\|_{\max}$ for all $\bpsi$ over the $\ell_2$-ball $\|\bpsi - \bpsi^*\|_2 \leq{} R$, where $\|\cdot\|_{\max}$ is the matrix elementwise absolute maximum. Specifically, the gradient and Hessian of the empirical loss function $\ell$ may be unbounded, which complicates our analysis; under a Weibull specification, for example, several elements of the gradient $\nabla \ell(\bpsi)$ involve the term $\log \ytilde_{gi}$, which diverges approaching 0. As such, Assumptions~\ref{ass:cond}b and \ref{ass:cond}c are used to control the unbounded quantities, while Assumption~\ref{ass:cond}a bounds remaining terms. Note, these conditions are satisfied by commonly-used baseline hazard choices; Web Appendix~\ref{sec:lemmas} contains proofs for piecewise constant and Weibull specifications. Lastly, while the conditions in Assumption~\ref{ass:cond} are presented under a semi-Markov model, analogous conditions can be expressed for a Markov model.

We now present the main theorem on the statistical rate of the estimator in \eqref{eq:optimtheory}. We take $p_{\lambda}$ to be the SCAD penalty defined in \eqref{eq:scad} to streamline the statement in terms of SCAD's non-convexity parameter $\xi$, though the result holds for other penalty functions including the Lasso and minimax concave penalty (MCP) \citep{zhang2010nearly}, as described in Web Appendix~\ref{sec:proof}.
\begin{theorem}\label{th:main}
Under Assumptions~\ref{ass:bound}, \ref{ass:vec}, and \ref{ass:hess} and sparsity level $s = \|\bpsi^*\|_{0}$, consider a gamma frailty illness-death model satisfying Assumption~\ref{ass:cond} with SCAD penalization as in \eqref{eq:optimtheory}. Suppose the SCAD non-convexity parameter $\xi$ satisfies $3/\{4(\xi -1)\} < \rho$, where $\rho$ is the population Hessian eigenvalue bound defined in Assumption \ref{ass:hess}. Then choosing $\lambda = c\sqrt{\log(dn)/n}$ for sample size $n$, parameter dimensionality $d$, and sufficiently large constant $c$, any stationary point $\bpsihat$ of \eqref{eq:optimtheory} will have a statistical rate that varies with $s$, $n$, and $d$ as
\begin{equation}
\|\bpsihat-\bpsi^*\|_{2} = O_P\left(\sqrt{s\log(dn) / n}\right).
\end{equation}
\end{theorem}
The proof of this theorem and detailed discussion are left to Web Appendix~\ref{sec:proof}. In particular, due to the complexities outlined in the discussion of Assumption~\ref{ass:cond}, the proof relies on a weaker version of the so-called Restricted Strong Convexity condition than that of \citet{loh2015regularized}.
Lastly, we note that by this result, consistency of the estimator $\bpsihat$ follows in the high-dimensional regime under scaling condition $s\log(dn)/n \rightarrow 0$.


\section{Simulation Studies} \label{sec:simulation}

In this section, we present a series of simulation studies to investigate the performance of the proposed methods in terms of estimation error and selection of the covariate effects $\bbeta$, comparing various penalty specifications with ad hoc methods like forward selection.

\subsection{Set-up and Data Generation}
We consider eight simulation scenarios, each based on a true semi-Markov illness death model with gamma frailty variance $e^\sigma = 0.5$. The eight scenarios arise as all combinations of two specifications for each of the baseline hazard functions, the overall covariate dimensionality and the values of the true regression parameters. The specifications under consideration are detailed in Table~\ref{tab:simsettings} of Web Appendix~\ref{sec:addlsims}, and summarized below. We repeated simulations under the given settings for three sample sizes: $n = 300, 500, 1000$.

The true baseline hazard specifications were piecewise constant with breakpoints at 5, 15, and 20, specified to yield particular marginal event rates for the non-terminal event. Under the `Low Non-Terminal Event Rate' setting approximately 17\% of subjects are observed to experience the non-terminal event, while under the `Moderate Non-Terminal Event Rate' setting this number was 30\%. Both specifications represent complex non-monotonic hazards not be well-approximated by Weibull parameterization, to examine the impact of such misspecification on regression parameter selection and estimation error.

We considered both low- and high-dimensional regimes under sparsity, with 25 and 350 covariates respectively, having 10 true non-zero coefficients in each submodel ranging in magnitude from 0.2 to 1. Crucially, the high-dimensional setting always has more regression parameters than observations, as $d=350\times 3=1050 > n$. Each simulated covariate vector $\bX_i$ was a centered and unit-scaled multivariate normal, with AR($0.25$) serial collinearity. To assess the performance of the fusion penalty, we lastly varied the extent of shared covariate effects. Under the `Shared Support' specification, the support of the non-zero effects is the same across submodels, whereas under the `Partially Non-Overlapping Support' structure the supports only partially overlap.

\subsection{Analyses}
Under each scenario, we generated 300 simulated datasets. Each dataset was then analyzed using both Lasso and SCAD-penalized models, each with and without additional fusion $\ell_1$-penalties linking all three hazards. Each analysis was performed using both Weibull and piecewise constant baseline hazard specifications. For the latter, we set  $k_g=3$ and chose breakpoints at quantiles of the data, so they also did not overlap exactly with the true data generating mechanism. In all cases, penalized models were fit over a grid comprising $21$ values for $\lambda_1$ in the high-dimensional setting and $29$ in the low-dimensional setting, and $4$ values for $\lambda_2$, leading to overall regularization grids of $21\times 4=84$ and $29\times 4=116$ points, respectively. At each grid point, the best estimate was selected from initializations at the previous step's solution, and 5 additional randomized starting values. 

From each fitted regularization grid, two models are reported. A model without fusion was chosen that minimizes the BIC over the subset path $(\lambda_1,0)$, and a model possibly with fusion was chosen which minimized BIC over the entire grid of values $(\lambda_1,\lambda_2)$. Therefore, the model space with fusion encompasses the model space without fusion, so any differences between the reported estimates with and without fusion reflect improvements in the BIC due to the added fusion penalty. If fusion did not improve BIC, there would be no difference.

For comparison, we considered a forward selection procedure minimizing BIC by adding one covariate to one transition hazard at each step. Finally, we fit the `oracle' MLE on the set of true non-zero coefficients, as well as the full MLE in the low-dimensional setting.

\subsection{Results} \label{subsec:simresults}
To assess estimation performance, we examine $\ell_2$-error defined as $\|\bbetahat-\bbeta^*\|_2$ in Table~\ref{tab:l2errorwb}. Across all settings, the estimation error of the regression coefficients was insensitive to the model baseline hazard specification, with comparable results for both Weibull and piecewise constant specifications. Therefore we present results for Weibull models, with piecewise specification results given in Web Appendix~\ref{sec:addlsims}.
For both $n=500$ and $n=1000$, the combination of SCAD and fusion penalties outperforms all comparators, particularly in the high-dimensional regime. Forward selection and the unfused SCAD-penalized estimator generally yielded the next best results.
Estimators with fusion penalization also performed better in the `Low Non-Terminal Event Rate' setting, likely because fusion links estimates across submodels, allowing `borrowing' of information about $h_1$ and $h_3$ from $h_2$ when the non-terminal event is rare.
Fusion penalized estimators also performed comparably well even if the covariate supports of each submodel only partially overlapped, relative to complete overlap.
However, Lasso penalized models did poorly relative to other comparators, which likely reflects elevated regularization-induced bias in the individual estimates.

\begin{table}
\caption{\label{tab:l2errorwb}Mean $\ell_2$ estimation error of $\bbetahat$, Weibull baseline hazard specification. Maximum likelihood estimates only available for low-dimensional setting.}
\begin{center}
\resizebox{\textwidth}{!}{\begin{tabular}{lccccccc}
\hline
$n=500$ & Oracle & MLE & Forward & Lasso & SCAD & Lasso + Fusion & SCAD + Fusion \\ 
\hline
\multicolumn{8}{l}{\textbf{Moderate Non-Terminal Event Rate}} \\ 
\multicolumn{8}{l}{\textit{~~Shared Support}} \\ 
~~~~Low-Dimension & 0.75 & 1.34 & 1.48 & 2.06 & 1.35 & 1.49 & 0.97 \\ 
~~~~High-Dimension & 0.76 & --- & 3.10 & 2.77 & 2.37 & 2.21 & 1.20 \\ 
\multicolumn{8}{l}{\textit{~~Partially Non-Overlapping Support}} \\
~~~~Low-Dimension & 0.73 & 1.30 & 1.33 & 1.87 & 1.26 & 1.61 & 1.13 \\ 
~~~~High-Dimension & 0.74 & --- & 3.47 & 2.49 & 2.34 & 2.28 & 1.45 \\ 
\multicolumn{8}{l}{\textbf{Low Non-Terminal Event Rate}} \\ 
\multicolumn{8}{l}{\textit{~~Shared Support}} \\ 
~~~~Low-Dimension & 0.92 & 1.81 & 1.89 & 2.24 & 1.91 & 1.48 & 1.20 \\ 
~~~~High-Dimension & 0.88 & --- & 5.13 & 2.56 & 2.23 & 2.34 & 1.28 \\ 
\multicolumn{8}{l}{\textit{~~Partially Non-Overlapping Support}} \\
~~~~Low-Dimension & 0.80 & 1.53 & 1.55 & 2.05 & 1.50 & 1.71 & 1.27 \\ 
~~~~High-Dimension & 0.80 & --- & 3.65 & 2.42 & 2.20 & 2.33 & 1.55 \\ 
\hline
$n=1000$ & Oracle & MLE & Forward & Lasso & SCAD & Lasso + Fusion & SCAD + Fusion \\ 
\hline
\multicolumn{8}{l}{\textbf{Moderate Non-Terminal Event Rate}} \\ 
\multicolumn{8}{l}{\textit{~~Shared Support}} \\ 
~~~~Low-Dimension & 0.50 & 0.82 & 0.76 & 1.53 & 0.73 & 1.39 & 0.75 \\ 
~~~~High-Dimension & 0.50 & --- & 1.25 & 2.40 & 1.52 & 1.82 & 0.81 \\ 
\multicolumn{8}{l}{\textit{~~Partially Non-Overlapping Support}} \\
~~~~Low-Dimension & 0.48 & 0.81 & 0.71 & 1.23 & 0.71 & 1.32 & 0.83 \\ 
~~~~High-Dimension & 0.49 & --- & 1.18 & 2.17 & 1.20 & 1.84 & 0.97 \\ 
\multicolumn{8}{l}{\textbf{Low Non-Terminal Event Rate}} \\ 
\multicolumn{8}{l}{\textit{~~Shared Support}} \\ 
~~~~Low-Dimension & 0.57 & 0.95 & 1.31 & 2.05 & 1.15 & 1.37 & 0.83 \\ 
~~~~High-Dimension & 0.58 & --- & 1.86 & 2.39 & 2.10 & 1.80 & 0.90 \\ 
\multicolumn{8}{l}{\textit{~~Partially Non-Overlapping Support}} \\
~~~~Low-Dimension & 0.52 & 0.88 & 0.85 & 1.52 & 0.84 & 1.43 & 0.96 \\ 
~~~~High-Dimension & 0.52 & --- & 1.42 & 2.22 & 1.80 & 1.94 & 1.06 \\ 
\hline
\end{tabular}}
\end{center}
\end{table}

To assess selection performance, Table~\ref{tab:signincwb} reports mean sign inconsistency, which counts the estimated regression coefficients that do not have the correct sign---exclusion of true non-zero coefficients, inclusion of true zero coefficients, or estimates having the opposite sign of the true coefficient. Lower values indicate better overall model selection performance. Again performance was very similar between Weibull and piecewise constant specifications, so only Weibull results are presented in the main text. Additional simulation results, and separated results on false inclusions and exclusions are included in Web Appendix~\ref{sec:addlsims}.

\begin{table}
\caption{\label{tab:signincwb}Mean count of sign-inconsistent $\bbetahat$ estimates, Weibull baseline hazard specification. Sign inconsistency counts the number of estimated regression coefficients that do not have the correct sign---exclusion of true non-zero coefficients, inclusion of true zero coefficients, or estimates having the opposite sign of the true coefficient.}
\begin{center}
\resizebox{\textwidth}{!}{\begin{tabular}{lcccccc}
\hline
$n=500$ & Oracle & Forward & Lasso & SCAD & Lasso + Fusion & SCAD + Fusion \\ 
\hline
\multicolumn{7}{l}{\textbf{Moderate Non-Terminal Event Rate}} \\ 
\multicolumn{7}{l}{\textit{~~Shared Support}} \\ 
~~~~Low-Dimension & 0.13 & 11.51 & 15.06 & 10.28 & 11.90 & 3.45 \\ 
~~~~High-Dimension & 0.12 & 35.81 & 26.40 & 35.22 & 21.89 & 18.88 \\ 
\multicolumn{7}{l}{\textit{~~Partially Non-Overlapping Support}} \\
~~~~Low-Dimension & 0.14 & 10.73 & 15.39 & 10.19 & 15.06 & 7.91 \\ 
~~~~High-Dimension & 0.13 & 34.56 & 27.21 & 38.52 & 23.37 & 24.35 \\ 
\multicolumn{7}{l}{\textbf{Low Non-Terminal Event Rate}} \\ 
\multicolumn{7}{l}{\textit{~~Shared Support}} \\ 
~~~~Low-Dimension & 0.30 & 16.23 & 19.33 & 17.16 & 12.77 & 5.45 \\ 
~~~~High-Dimension & 0.29 & 39.50 & 26.08 & 24.94 & 23.39 & 14.42 \\ 
\multicolumn{7}{l}{\textit{~~Partially Non-Overlapping Support}} \\
~~~~Low-Dimension & 0.24 & 12.70 & 17.11 & 12.42 & 16.33 & 9.47 \\ 
~~~~High-Dimension & 0.21 & 36.20 & 26.08 & 29.51 & 24.53 & 22.89 \\ 
\hline
$n=1000$ & Oracle & Forward & Lasso & SCAD & Lasso + Fusion & SCAD + Fusion \\ 
\hline
\multicolumn{7}{l}{\textbf{Moderate Non-Terminal Event Rate}} \\ 
\multicolumn{7}{l}{\textit{~~Shared Support}} \\ 
~~~~Low-Dimension & 0.01 & 4.03 & 13.52 & 3.99 & 8.48 & 1.51 \\ 
~~~~High-Dimension & 0.00 & 15.83 & 20.92 & 19.92 & 19.82 & 7.21 \\ 
\multicolumn{7}{l}{\textit{~~Partially Non-Overlapping Support}} \\
~~~~Low-Dimension & 0.02 & 4.12 & 13.80 & 4.27 & 12.58 & 4.00 \\ 
~~~~High-Dimension & 0.03 & 15.57 & 19.86 & 19.30 & 21.28 & 8.44 \\ 
\multicolumn{7}{l}{\textbf{Low Non-Terminal Event Rate}} \\ 
\multicolumn{7}{l}{\textit{~~Shared Support}} \\ 
~~~~Low-Dimension & 0.06 & 9.70 & 16.60 & 9.01 & 8.43 & 1.86 \\ 
~~~~High-Dimension & 0.06 & 23.73 & 24.30 & 22.17 & 19.08 & 6.95 \\ 
\multicolumn{7}{l}{\textit{~~Partially Non-Overlapping Support}} \\
~~~~Low-Dimension & 0.03 & 5.29 & 14.54 & 5.74 & 13.27 & 4.97 \\ 
~~~~High-Dimension & 0.03 & 16.96 & 21.98 & 22.18 & 21.98 & 8.72 \\ 
\hline
\end{tabular}}
\end{center}
\end{table}

For both $n=500$ and $n=1000$, the combination of SCAD and fusion penalties outperformed comparators, while other methods' performances varied across sample size and setting. Fusion penalized estimators exhibited notably better selection properties in the `Shared Support' setting, as fusion coerces a common block of non-zero covariates across submodels. Lasso penalized models tended to choose overly sparse models. With many of the true non-zero effects small in magnitude, regularization-induced bias may have rendered those terms indistinguishable from truly zero effects.

Lastly we summarize the results in the smallest sample setting of $n=300$, which are detailed in Web Appendix~\ref{sec:addlsims}. This is a challenging setting because small samples exacerbate the non-convexity of the marginal illness-death likelihood, and when outcomes are rare some transition submodels have a small number of observed events.
These complications affect small-sample empirical performance of frailty-based illness-death models even in the absence of high-dimensional covariates. For example, in the setting with 25 covariates per transition, average estimation error of the full MLE is substantially larger for $n=300$ than $n=500$, and more sensitive to the non-terminal event rate (see Web Table~\ref{tab:l2errorsmall}). Still, in this low-dimensional regime we again observe the combination of SCAD and fusion penalties reducing estimation error relative to comparators. 

However, increasing dimensionality to 350 covariates per transition while keeping $n=300$ degraded performance of all methods, particularly when event rates were lowest. For example, in the `Low Non-Terminal Event Rate' settings, the comparator forward selection algorithm failed for between 25 and 90 percent of simulations by adding so many covariates that optimization no longer converged. Penalized models also tended towards extremes, with SCAD-penalized models including many unnecessary covariate effects, while the Lasso models selected few or no non-zero covariate effects. A key challenge is that likelihood-based selection criteria like BIC can be distorted in small samples by non-convexity. In certain instances, the log-likelihood can become monotonic with respect to the frailty log-variance $\sigma$, yielding artificial information criteria and leading to selected models that are either completely sparse (as with Lasso-penalized models) or completely saturated (as with SCAD-penalized and forward-selected models). These serious complications manifested only in the most difficult settings combining small samples, low-to-moderate event rates, and high dimensional covariates, but show that the challenges of very small-sample estimation with frailties is compounded by high-dimensional covariates. 


\section{Data Application: Preeclampsia (PE) and Delivery} \label{sec:data}

The proposed methods are motivated by practical application to clinical settings where interest is in developing a risk model that jointly characterizes a non-terminal and terminal event. To this end, we consider modeling PE onset and the timing of delivery using the electronic health records of an urban, academic medical center in Boston, Massachusetts. We analyze 2127 singleton live births recorded in 2019 among individuals without pre-existing hypertension who received the majority of their prenatal care and delivered at the academic medical center. Restricting to those without hypertension targets the modeling task, as PE superimposed on chronic hypertension has distinct clinical features compared to other forms of the disease \citep{jeyabalan2013epidemiology}. $189$ (8.9\%) individuals developed PE, with median diagnosis time of 37.9 weeks (Inter-Quartile Range [IQR] 35.0-39.0). The median time to delivery was 38.0 weeks (IQR 35.4-39.3) among those who developed PE and 39.4 weeks (IQR 38.6-40.3) among those who did not. Note, because PE is only diagnosed after 20 weeks of gestation, for modeling purposes this is used as the time origin, with $T_1$ and $T_2$ defined as time from week 20 until PE onset and delivery, respectively. 

We considered a set of $33$ potential covariates, including demographics recorded at patient intake, baseline lab values annotated by the medical center with a binary indicator for abnormality, and maternal health history derived from ICD-10 diagnostic codes associated with delivery (summarized in Web Table~\ref{tab:table1}). We fit Markov illness-death models so that $\bbeta_2$ and $\bbeta_3$ are both interpretable on the gestational age timescale, under both Weibull and piecewise constant baseline hazards. We adopted SCAD penalties on each regression coefficient, and an $\ell_1$ fusion penalty between $\bbeta_2$ and $\bbeta_3$ to induce a shared structure between coefficients for the timing of delivery in the absence of PE and timing of delivery given the onset of PE. We specified a grid of 55 values for $\lambda_1$ and seven for $\lambda_2$. As above, we selected the final penalized model minimizing BIC over the entire grid of $(\lambda_1,\lambda_2)$ values, while the final SCAD-only penalized model minimizes BIC over the path $(\lambda_1,0)$. Again, this means that differences between the models with and without fusion reflect improvements in BIC due to the fused penalty. For comparison, we also report the unpenalized MLE. 

Figure~\ref{fig:PE_m_bic} compares the estimated regression coefficients between unpenalized and penalized models, for each baseline hazard specification. As in the simulation studies, there appears to be little difference in the selection properties or resulting regression estimates between models with Weibull and piecewise constant baseline hazard specifications. Across all specifications, the penalized estimates chosen by BIC are highly sparse relative to the unpenalized MLE. The inclusion of a fusion penalty linking the coefficients in $h_2$ and $h_3$ further improved BIC, and fused several covariate effects for the timing of delivery before PE and given PE. Specifically, parity of 1 or more (meaning a history of at least one pregnancy lasting at least 20 weeks), and in the Weibull model, the presence of leiomyomas (benign gynecological tumors), are both estimated with shared coefficients on timing of delivery with and without PE. For comparison, models chosen by AIC are provided in Web Figure~\ref{fig:PE_m_aic}, which include more covariates but are still sparse relative to the full model. Finally, we find that in this application, beyond the estimated regression coefficients frailties did not play a large role in characterizing additional dependence between preeclampsia and delivery timing, as the estimated frailty variance is very close to 0 in all estimates.

\begin{figure}
\begin{center}
\includegraphics[width=\textwidth]{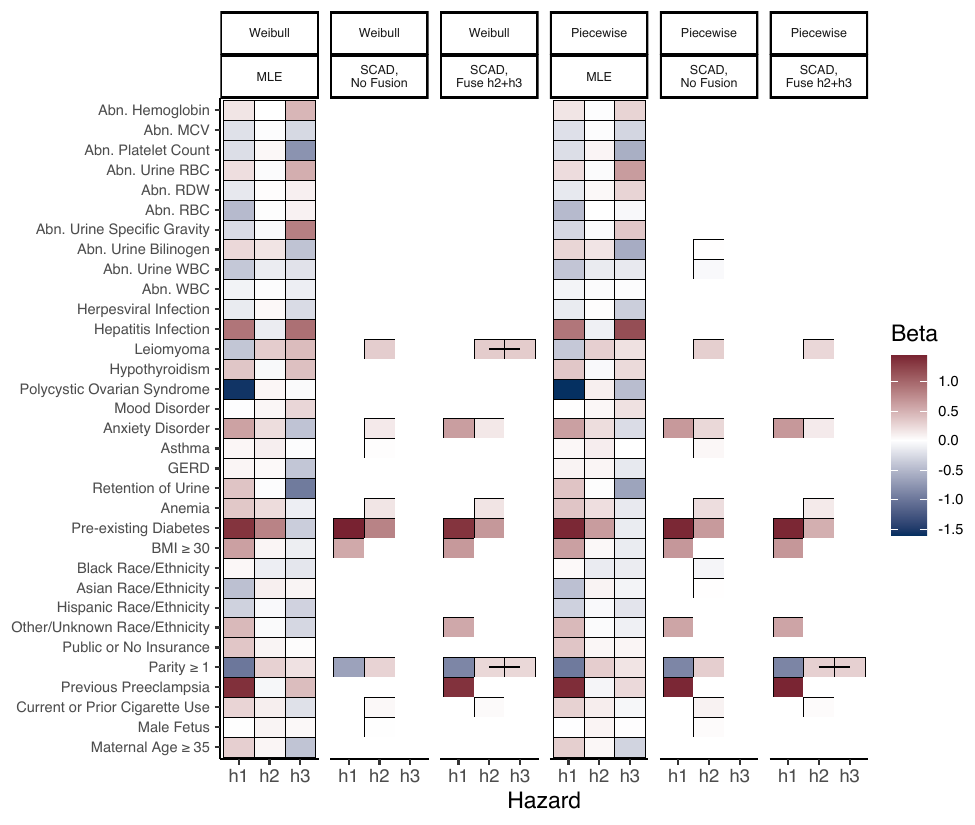}
\end{center}
\caption{Estimated coefficients, BIC-optimal SCAD-penalized estimators with and without $\ell_1$ fusion between $h_2$ and $h_3$, and MLE under Markov specification. Fused coefficients connected with a black line. Abbreviations: abnormal (Abn), white blood cell count (WBC), red blood cell count (RBC), red cell distribution width (RDW), mean corpuscular volume (MCV), gastroesophageal reflux disease (GERD). This figure appears in color in the electronic version of this article, and any mention of color refers to that version.\label{fig:PE_m_bic}}
\end{figure}

These results have strong clinical significance. In every specification and across all three hazards, the selected covariates are primarily maternal health history and behaviors, rather than demographics or baseline lab measurements. Many of the variables selected for the cause-specific hazard of PE---parity of 1 or more, BMI of at least 30, and pre-existing diabetes---align with findings of recent meta-analyses of factors affecting PE risk \citep{giannakou2018genetic}. Further illustrating the interplay of risk factors with the outcomes, fusion penalized estimates show parity of 1 or more associated both with delayed timing of PE, and accelerated timing of delivery in the presence of PE. This correspond clinically with risk of milder late-onset PE for which delivery can occur quickly with fewer risks.

As introduced previously, care decisions for PE center two challenges: identifying those at high risk of PE, and timing delivery after PE onset to balance maternal and fetal health risks. Though our methodological focus is on regularized estimation and model selection, the resulting fitted illness-death models also generate prospective risk predictions to inform these individualized clinical care decisions \citep{putter2007tutorial}. In Web Appendix~\ref{sec:jointriskpred} we present and discuss a set of four such risk profiles for sample patients using a Weibull model with fusion penalty. Specifically, from baseline the model can predict across time how likely an individual is to be in one of four categories: (i) still pregnant without PE, (ii) already delivered without PE, (iii) already delivered with PE, and (iv) still pregnant with PE. Such profiles directly address clinical needs by highlighting individuals' overall risk of developing PE, while also characterizing the timing of PE and delivery.


\section{Discussion}\label{sec:disc}

Frailty-based illness-death models enable investigation of the complex interplay between baseline covariates and semi-competing time-to-event outcomes. Estimates directly illustrate the relationships between risk factors and the joint outcomes via hazard ratios across three submodels, while individualized risk predictions generate an entire prospective outcome trajectory to inform nuanced clinical care decisions. 

The task of modeling risk of PE and timing of delivery illustrates the value, and the potential, for penalized illness-death modeling to inform clinical practice. While analysts typically default to including the same set of covariates in all three hazards, Figure~\ref{fig:PE_m_bic} illustrates that no covariate in any BIC-selected models has distinct, non-shared coefficients in all three hazards. Moreover, even in the setting of PE onset and delivery, where relatively few covariate effects appear to be shared across submodels, adding fusion regularization improved model fit metrics. We expect the impact of fusion would be even more pronounced in settings where the outcomes are more positively correlated, such as when the non-terminal event is a negative health outcome and the terminal event is death. Because frailties also tend to characterize positive correlation of the outcomes, we might also expect larger estimated frailty variance in such settings. Analysts interested in considering non-frailty models might fit regularization paths with and without the frailty, and choose a final criterion-minimizing model from amongst both frailty and non-frailty candidates.

We also note that the statistical rate result of Theorem~\ref{th:main} uses the specific choice of penalty given in \eqref{eq:optimtheory}, however we would expect similar theoretical performance under the similar penalty introduced in \eqref{eq:genpen}. The advantage of implementing the formulation as in \eqref{eq:genpen} is its interpretability for the analyst, by directly distinguishing between the role of $\lambda_1$ in determining the global level of sparsity of the regression parameters, and the role of $\lambda_2$ in determining the level of parsimony in the sharing of effects across hazards.

Though the current work focuses on penalization of the regression parameters $\bbeta$, the framework also admits penalization of the baseline hazard parameters to achieve similar goals of flexibility and structure. For example, under the Markov transition specification a penalty of the form $\sum_{j=1}^{k_3} p_{\lambda_3}(\lvert\phi_{2j}-\phi_{3j}\rvert)$ could regularize the model towards having $h_{02}(t_2)=h_{03}(t_2)$. \citet{xu2010statistical} call this the `restricted' illness-death model corresponding to the baseline hazard of the terminal event being equal before and after the non-terminal event. Moreover, while we presently focus on fixed-dimensional parametric baseline hazard specifications, in principle the estimation algorithms presented here extend to penalized baseline models of growing dimensionality, such as splines with number of basis functions dependent on sample size. The theoretical properties of such an estimator would be an interesting avenue of future research. Future work might also explore these methods and theory under other frailty distributions besides the closed form-inducing gamma.

Finally, establishing the statistical rate of the proposed penalized estimator also enables future development of post-selection inferential tools such as confidence intervals for selected coefficients. Most importantly, this methodology enables future work modeling semi-competing risks across a wide array of clinical domains, and leveraging data sources with high-dimensional covariates from electronic health records to genomic data.

\section*{Acknowledgments}
The authors thank the Associate Editor and Reviewers for helpful comments, and thank Drs. Michele Hacker, Anna Modest, and Stefania Papatheodorou for data access at Beth Israel Deaconess Medical Center, Boston, MA and valuable discussion. This study received exemption status by institutional review boards at Beth Israel Deaconess Medical Center and Harvard TH Chan School of Public Health. HTR is supported by NIH grants T32LM012411, T32GM074897, and F31HD102159, and JL by R35CA220523 and U01CA209414. JL is supported by NSF grant 1916211.

\section*{Data Availability Statement}
Data used in this paper to illustrate the proposed methods are not shared due to privacy restrictions.

\bibliographystyle{biom}
\bibliography{../../sources}

\begin{thebibliography}{}

\bibitem[\protect\citeauthoryear{Chapple, Vannucci, Thall, and Lin}{Chapple
  et~al.}{2017}]{chapple2017bayesian}
Chapple, A.~G., Vannucci, M., Thall, P.~F., and Lin, S. (2017).
\newblock Bayesian variable selection for a semi-competing risks model with
  three hazard functions.
\newblock {\em Computational Statistics \& Data Analysis} {\bf 112,} 170--185.

\bibitem[\protect\citeauthoryear{Chen, Lin, Kim, Carbonell, and Xing}{Chen
  et~al.}{2012}]{chen2012smoothing}
Chen, X., Lin, Q., Kim, S., Carbonell, J.~G., and Xing, E.~P. (2012).
\newblock {Smoothing proximal gradient method for general structured sparse
  regression}.
\newblock {\em Annals of Applied Statistics} {\bf 6,} 719--752.

\bibitem[\protect\citeauthoryear{de~Boor}{de~Boor}{2001}]{deboor2001practical}
de~Boor, C. (2001).
\newblock {\em {A Practical Guide to Splines: Revised Edition}}, volume~27 of
  {\em Applied Mathematical Sciences}.
\newblock Springer, New York.

\bibitem[\protect\citeauthoryear{Fan and Li}{Fan and
  Li}{2001}]{fan2001variable}
Fan, J. and Li, R. (2001).
\newblock Variable selection via nonconcave penalized likelihood and its oracle
  properties.
\newblock {\em Journal of the American Statistical Association} {\bf 96,}
  1348--1360.

\bibitem[\protect\citeauthoryear{Fine, Jiang, and Chappell}{Fine
  et~al.}{2001}]{fine2001semi}
Fine, J., Jiang, H., and Chappell, R. (2001).
\newblock On semi-competing risks data.
\newblock {\em Biometrika} {\bf 88,} 907--919.

\bibitem[\protect\citeauthoryear{Giannakou, Evangelou, and
  Papatheodorou}{Giannakou et~al.}{2018}]{giannakou2018genetic}
Giannakou, K., Evangelou, E., and Papatheodorou, S.~I. (2018).
\newblock Genetic and non-genetic risk factors for pre-eclampsia: Umbrella
  review of systematic reviews and meta-analyses of observational studies.
\newblock {\em Ultrasound in Obstetrics \& Gynecology} {\bf 51,} 720--730.

\bibitem[\protect\citeauthoryear{Hahn, Lutz, Laha, and Lange}{Hahn
  et~al.}{2020}]{hahn2020framework}
Hahn, G., Lutz, S.~M., Laha, N., and Lange, C. (2020).
\newblock A framework to efficiently smooth {{L1}} penalties for linear
  regression.
\newblock {\em bioRxiv:2020.09.17.301788} .

\bibitem[\protect\citeauthoryear{Herndon and Harrell}{Herndon and
  Harrell}{1990}]{herndon1990restricted}
Herndon, J.~E. and Harrell, F.~E. (1990).
\newblock {The restricted cubic spline hazard model}.
\newblock {\em Communications in Statistics - Theory and Methods} {\bf 19,}
  639--663.

\bibitem[\protect\citeauthoryear{Jazi{\'c}, Schrag, Sargent, and
  Haneuse}{Jazi{\'c} et~al.}{2016}]{jazic2016beyond}
Jazi{\'c}, I., Schrag, D., Sargent, D.~J., and Haneuse, S. (2016).
\newblock Beyond composite endpoints analysis: semicompeting risks as an
  underutilized framework for cancer research.
\newblock {\em JNCI: Journal of the National Cancer Institute} {\bf 108,}
  djw163.

\bibitem[\protect\citeauthoryear{Jeyabalan}{Jeyabalan}{2013}]{jeyabalan2013epidemiology}
Jeyabalan, A. (2013).
\newblock Epidemiology of preeclampsia: impact of obesity.
\newblock {\em Nutrition Reviews} {\bf 71,} S18--S25.

\bibitem[\protect\citeauthoryear{Jiang and Haneuse}{Jiang and
  Haneuse}{2015}]{jiang2015simulation}
Jiang, F. and Haneuse, S. (2015).
\newblock Simulation of semicompeting risk survival data and estimation based
  on multistate frailty model.
\newblock Harvard University Biostatistics Working Paper Series 188, Harvard
  University.

\bibitem[\protect\citeauthoryear{Lee, Lee, and Haneuse}{Lee
  et~al.}{2020}]{lee2020timetoevent}
Lee, C., Lee, S.~J., and Haneuse, S. (2020).
\newblock Time-to-event analysis when the event is defined on a finite time
  interval.
\newblock {\em Statistical Methods in Medical Research} {\bf 29,} 1573--1591.

\bibitem[\protect\citeauthoryear{Lee, Haneuse, Schrag, and Dominici}{Lee
  et~al.}{2015}]{lee2015bayesian}
Lee, K.~H., Haneuse, S., Schrag, D., and Dominici, F. (2015).
\newblock Bayesian semiparametric analysis of semicompeting risks data:
  investigating hospital readmission after a pancreatic cancer diagnosis.
\newblock {\em Journal of the Royal Statistical Society: Series C (Applied
  Statistics)} {\bf 64,} 253--273.

\bibitem[\protect\citeauthoryear{Loh and Wainwright}{Loh and
  Wainwright}{2015}]{loh2015regularized}
Loh, P.~L. and Wainwright, M.~J. (2015).
\newblock {Regularized M-estimators with nonconvexity: Statistical and
  algorithmic theory for local optima}.
\newblock {\em Journal of Machine Learning Research} {\bf 16,} 559--616.

\bibitem[\protect\citeauthoryear{Putter, Fiocco, and Geskus}{Putter
  et~al.}{2007}]{putter2007tutorial}
Putter, H., Fiocco, M., and Geskus, R.~B. (2007).
\newblock Tutorial in biostatistics: competing risks and multi-state models.
\newblock {\em Statistics in Medicine} {\bf 26,} 2389--2430.

\bibitem[\protect\citeauthoryear{Putter and {Van Houwelingen}}{Putter and {Van
  Houwelingen}}{2015}]{putter2015frailties}
Putter, H. and {Van Houwelingen}, H.~C. (2015).
\newblock {Frailties in multi-state models: Are they identifiable? Do we need
  them?}
\newblock {\em Statistical Methods in Medical Research} {\bf 24,} 675--692.

\bibitem[\protect\citeauthoryear{Royston and Parmar}{Royston and
  Parmar}{2002}]{royston2002flexible}
Royston, P. and Parmar, M.~K. (2002).
\newblock Flexible parametric proportional-hazards and proportional-odds models
  for censored survival data, with application to prognostic modelling and
  estimation of treatment effects.
\newblock {\em Statistics in Medicine} {\bf 21,} 2175--2197.

\bibitem[\protect\citeauthoryear{Sennhenn-Reulen and Kneib}{Sennhenn-Reulen and
  Kneib}{2016}]{sennhenn2016structured}
Sennhenn-Reulen, H. and Kneib, T. (2016).
\newblock Structured fusion lasso penalized multi-state models.
\newblock {\em Statistics in Medicine} {\bf 35,} 4637--4659.

\bibitem[\protect\citeauthoryear{Tibshirani}{Tibshirani}{1996}]{tibshirani1996regression}
Tibshirani, R. (1996).
\newblock Regression shrinkage and selection via the lasso.
\newblock {\em Journal of the Royal Statistical Society: Series B (Statistical
  Methodology)} {\bf 58,} 267--288.

\bibitem[\protect\citeauthoryear{Tibshirani and Taylor}{Tibshirani and
  Taylor}{2011}]{tibshirani2011solution}
Tibshirani, R.~J. and Taylor, J. (2011).
\newblock The solution path of the generalized lasso.
\newblock {\em Annals of Statistics} {\bf 39,} 1335--1371.

\bibitem[\protect\citeauthoryear{Wang, Liu, and Zhang}{Wang
  et~al.}{2014}]{wang2014optimal}
Wang, Z., Liu, H., and Zhang, T. (2014).
\newblock Optimal computational and statistical rates of convergence for sparse
  nonconvex learning problems.
\newblock {\em Annals of Statistics} {\bf 42,} 2164--2201.

\bibitem[\protect\citeauthoryear{Xu, Kalbfleisch, and Tai}{Xu
  et~al.}{2010}]{xu2010statistical}
Xu, J., Kalbfleisch, J.~D., and Tai, B. (2010).
\newblock Statistical analysis of illness--death processes and semicompeting
  risks data.
\newblock {\em Biometrics} {\bf 66,} 716--725.

\bibitem[\protect\citeauthoryear{Zhang}{Zhang}{2010}]{zhang2010nearly}
Zhang, C.-H. (2010).
\newblock Nearly unbiased variable selection under minimax concave penalty.
\newblock {\em The Annals of Statistics} {\bf 38,} 894--942.

\bibitem[\protect\citeauthoryear{Zhao, Liu, and Zhang}{Zhao
  et~al.}{2018}]{zhao2018pathwise}
Zhao, T., Liu, H., and Zhang, T. (2018).
\newblock {Pathwise coordinate optimization for sparse learning: Algorithm and
  theory}.
\newblock {\em Annals of Statistics} {\bf 46,} 180--218.

\end{thebibliography}

\newpage

\appendix

\makeatletter
\renewcommand{\thetable}{\thesection.\@arabic\c@table}
\@addtoreset{table}{section}
\makeatother

\makeatletter
\renewcommand{\thefigure}{\thesection.\@arabic\c@figure}
\@addtoreset{figure}{section}
\makeatother

\makeatletter
\renewcommand{\theequation}{\thesection.\@arabic\c@equation}
\@addtoreset{equation}{section}
\makeatother

\makeatletter
\renewcommand{\thelemma}{\thesection.\@arabic\c@lemma}
\@addtoreset{lemma}{section}
\makeatother

\vspace{-0.5in}
\section*{Appendix Introduction} \label{sec:appintro}

In this appendix we present additional details and results beyond what could be presented in the main manuscript. To distinguish the two documents, alpha-numeric labels are used in this document while numeric labels are used in the main paper. Section~\ref{sec:addldata} provides additional results from the data application. Section~\ref{sec:jointriskpred} describes the use of fitted illness-death models for individualized risk prediction, and presents an example using the data application.  Section~\ref{sec:proof} provides the proof of Theorem~\ref{th:main}. Section~\ref{sec:gradhess} defines likelihood-related functions for frailty-based parametric illness death models. Section~\ref{sec:lemmas} defines and proves technical lemmas used in the main result. Section~\ref{sec:addlsims} presents additional simulation details and results. Section~\ref{sec:algodetails} presents additional algorithmic details for optimization. Section~\ref{sec:splines} summarizes other spline-based hazard specifications. 

\section{Additional Data Application Results}\label{sec:addldata}
In this section, we present further additional results for the data application.

\begin{table}[H]
\caption{Characteristics of the study population, overall and by observed preeclampsia diagnosis and delivery outcome. Abbreviations: abnormal (Abn), white blood cell count (WBC), red blood cell count (RBC), red cell distribution width (RDW), mean corpuscular volume (MCV), gastroesophageal reflux disease (GERD).\label{tab:table1}}
\begin{center}
\resizebox{\textwidth}{!}{\begin{tabular}{lccc}
  \hline
 & Total Births & Births with Preeclampsia & Births without Preeclampsia\\ 
  \hline
Total & 2127 (100\%) & 189 (100\%) & 1938 (100\%) \\
  \hline
Maternal Age $\geq$ 35 & 658 (30.9\%) & 61 (32.3\%) & 597 (30.8\%) \\ 
  Male Fetus & 1074 (50.5\%) & 94 (49.7\%) & 980 (50.6\%) \\ 
  Current or Prior Cigarette Use & 232 (10.9\%) & 29 (15.3\%) & 203 (10.5\%) \\ 
  Previous Preeclampsia & 51 (2.4\%) & 10 (5.3\%) & 41 (2.1\%) \\ 
  Parity $\geq$ 1 & 1075 (50.5\%) & 58 (30.7\%) & 1017 (52.5\%) \\ 
  Public or No Insurance & 761 (35.8\%) & 81 (42.9\%) & 680 (35.1\%) \\ 
  Other/Unknown Race/Ethnicity & 431 (20.3\%) & 54 (28.6\%) & 377 (19.5\%) \\ 
  Hispanic Race/Ethnicity & 194 (9.1\%) & 12 (6.3\%) & 182 (9.4\%) \\ 
  Asian Race/Ethnicity & 209 (9.8\%) & 10 (5.3\%) & 199 (10.3\%) \\ 
  Black Race/Ethnicity & 310 (14.6\%) & 29 (15.3\%) & 281 (14.5\%) \\ 
  BMI $\geq$ 30 & 550 (25.9\%) & 75 (39.7\%) & 475 (24.5\%) \\ 
  Pre-existing Diabetes & 49 (2.3\%) & 14 (7.4\%) & 35 (1.8\%) \\ 
  Anemia & 283 (13.3\%) & 31 (16.4\%) & 252 (13\%) \\ 
  Retention of Urine & 49 (2.3\%) & 5 (2.6\%) & 44 (2.3\%) \\ 
  GERD & 96 (4.5\%) & 11 (5.8\%) & 85 (4.4\%) \\ 
  Asthma & 175 (8.2\%) & 22 (11.6\%) & 153 (7.9\%) \\ 
  Anxiety Disorder & 166 (7.8\%) & 25 (13.2\%) & 141 (7.3\%) \\ 
  Mood Disorder & 125 (5.9\%) & 15 (7.9\%) & 110 (5.7\%) \\ 
  Polycystic Ovarian Syndrome & 30 (1.4\%) & 2 (1.1\%) & 28 (1.4\%) \\ 
  Hypothyroidism & 140 (6.6\%) & 21 (11.1\%) & 119 (6.1\%) \\ 
  Leiomyoma & 269 (12.6\%) & 14 (7.4\%) & 255 (13.2\%) \\ 
  Hepatitis Infection & 27 (1.3\%) & 7 (3.7\%) & 20 (1\%) \\ 
  Herpesviral Infection & 101 (4.7\%) & 8 (4.2\%) & 93 (4.8\%) \\ 
  Abn. WBC & 462 (21.7\%) & 41 (21.7\%) & 421 (21.7\%) \\ 
  Abn. Urine WBC & 95 (4.5\%) & 6 (3.2\%) & 89 (4.6\%) \\ 
  Abn. Urine Bilinogen & 47 (2.2\%) & 5 (2.6\%) & 42 (2.2\%) \\ 
  Abn. Urine Specific Gravity & 10 (0.5\%) & 1 (0.5\%) & 9 (0.5\%) \\ 
  Abn. RBC & 322 (15.1\%) & 19 (10.1\%) & 303 (15.6\%) \\ 
  Abn. RDW & 79 (3.7\%) & 8 (4.2\%) & 71 (3.7\%) \\ 
  Abn. Urine RBC & 88 (4.1\%) & 8 (4.2\%) & 80 (4.1\%) \\ 
  Abn. Platelet Count & 55 (2.6\%) & 4 (2.1\%) & 51 (2.6\%) \\ 
  Abn. MCV & 153 (7.2\%) & 14 (7.4\%) & 139 (7.2\%) \\ 
  Abn. Hemoglobin & 187 (8.8\%) & 15 (7.9\%) & 172 (8.9\%) \\ 
   \hline
\end{tabular}}
\end{center}
\end{table}

\subsection{Additional Estimation Results}

\begin{figure}[H]
\begin{center}
\includegraphics[width=6.5in]{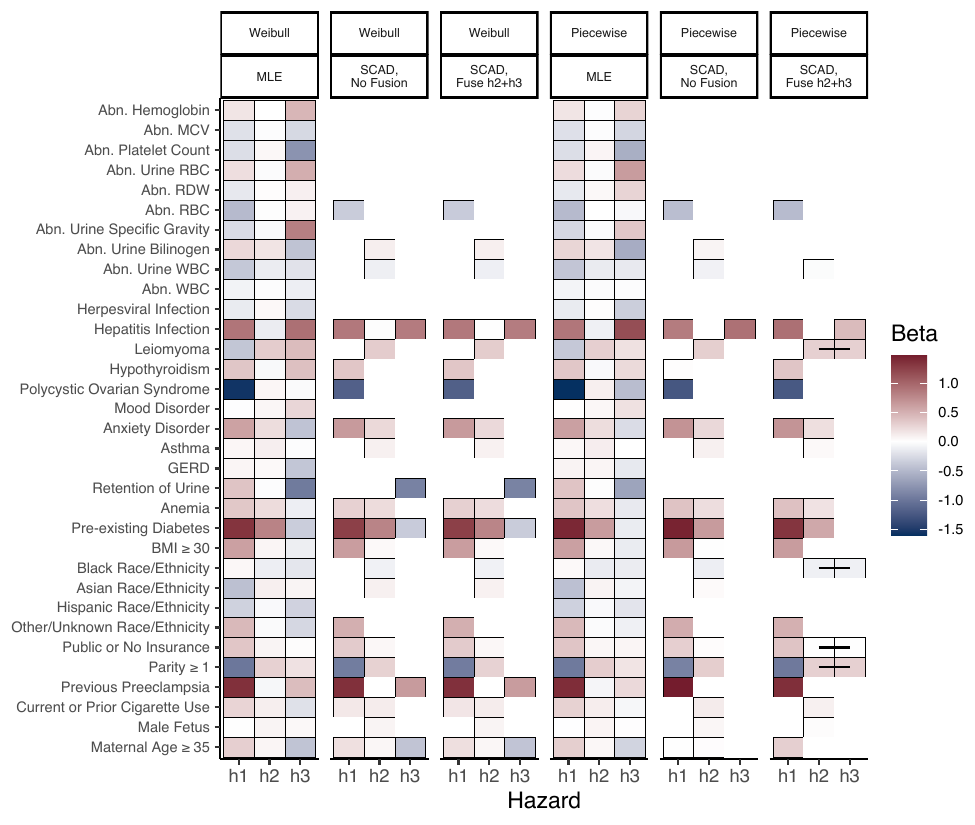}
\end{center}
\caption{Estimated coefficients, AIC-optimal SCAD-penalized estimators with and without $\ell_1$ fusion between $h_2$ and $h_3$, and MLE under Markov specification. Fused coefficients connected with a black line. \label{fig:PE_m_aic}}
\end{figure}

\begin{table}[H]
\caption{Information criterion values for models presented in Figures~\ref{fig:PE_m_bic} (BIC) and \ref{fig:PE_m_aic} (AIC). Lower values reflect improved model fit.\label{tab:infocrit}}
\begin{center}
\begin{tabular}{lccc}
  \hline
\textit{BIC-Optimal Selection} & $\lambda_1$ & $\lambda_2$ & BIC \\ 
  \hline
~~Weibull MLE & 0 & 0 & -219.039  \\ 
~~Weibull, SCAD, No Fusion & 0.027 & 0& -762.464  \\
~~Weibull, SCAD, Fuse $h_2+h_3$ & 0.016 & 0.015 & -784.840  \\
~~Piecewise MLE & 0 & 0 & 93.984  \\  
~~Piecewise, SCAD, No Fusion & 0.017 & 0 & -467.450  \\
~~Piecewise, SCAD, Fuse $h_2+h_3$ & 0.017 & 0.015 & -487.490 \\ 
   \hline
\textit{AIC-Optimal Selection} & $\lambda_1$ & $\lambda_2$ & AIC   \\
\hline
~~Weibull MLE & 0 & 0 & -813.599 \\ 
~~Weibull, SCAD, No Fusion & 0.009 & 0 & -907.093 \\
~~Weibull, SCAD, Fuse $h_2+h_3$ & 0.009 & 0.001 & -907.164 \\
~~Piecewise MLE  & 0 & 0 & -568.524 \\  
~~Piecewise, SCAD, No Fusion & 0.011 & 0 & -668.483 \\
~~Piecewise, SCAD, Fuse $h_2+h_3$ & 0.010 & 0.010 & -675.313 \\ 
   \hline
\end{tabular}
\end{center}
\end{table}

\newpage
\section{Individualized Joint Outcome Risk Prediction}
\label{sec:jointriskpred}

Fitted illness-death models can be used to generate clinically meaningful predictions of individualized joint risk, and the timing of the non-terminal and terminal events. In the preeclampsia setting, the model can predict across time how likely an individual is to be in one of four categories: 
\begin{itemize}
\item[(1)] still pregnant without preeclampsia
\item[(2)] already delivered without preeclampsia
\item[(3)] already delivered with preeclampsia, and 
\item[(4)] still pregnant with preeclampsia. 
\end{itemize}
These four probabilities comprise an individualized “risk profile,” and are derived for the illness-death model in \citet{putter2007tutorial} by integrating over regions of the joint density of the semi-competing outcomes $(T_1,T_2)$. The resulting formulas can be concisely represented for both Markov and semi-Markov illness-death models, by defining
\begin{equation}
H_3(t\mid t_1,\bX_{3}) = \begin{cases} H_3(t\mid \bX_{3}) - H_3(t_1\mid \bX_{3}) & \text{Markov} \\ H_3(t - t_1\mid\bX_{3}) & \text{semi-Markov}. \end{cases}
\end{equation}
Then for fixed frailty $\gamma$, these four probabilities (numbered as above) are:
\begin{align}
	\label{eqn:fitprofile:1}
	\pi^{(1)}(t\mid \bX, \gamma) = & \exp\{-\gamma[H_1(t\mid \bX_1) + H_2(t\mid \bX_2)]\} \\
	\label{eqn:fitprofile:2}
	\pi^{(2)}(t\mid \bX, \gamma) = & \int_0^t  \gamma h_2(t_2\mid \bX_2)\exp\{-\gamma[H_1(t_2\mid \bX_1) + H_2(t_2\mid \bX_2)]\} dt_2  \\
\begin{split}\label{eqn:fitprofile:3}
	\pi^{(3)}(t\mid \bX, \gamma) = &  \int_0^t  \gamma h_1(t_1\mid \bX_1) \exp \{ -\gamma [H_1(t_1\mid \bX_1) + H_2(t_1\mid \bX_2)]\} \\
	& \times (1-\exp \{-\gamma H_3(t\mid t_1,\bX_{3})\}) dt_1
\end{split}\\
	\label{eqn:fitprofile:4}
	\pi^{(4)}(t\mid \bX, \gamma) = & \int_0^t \gamma h_1(t_1\mid \bX_1) \exp\{-\gamma [H_1(t_1\mid \bX_1) + H_2(t_1\mid \bX_2) + H_3(t\mid t_1,\bX_{3})\} dt_1,
\end{align}
and collectively denoted $\bpi(t\mid  \bX, \gamma) = \{\pi^{(1)}(t\mid \bX, \gamma) ,\pi^{(2)}(t\mid \bX, \gamma) ,\pi^{(3)}(t\mid \bX, \gamma) ,\pi^{(4)}(t\mid \bX, \gamma) \}$. These probabilities sum to 1, and the integrals can be computed numerically using standard software.

\subsection{Sample Predictions from Data Application}

Example individualized risk profiles are shown in Figure~\ref{fig:riskprof}, generated from the above AIC-selected Weibull model with fusion penalty (shown above in column three of Figure~\ref{fig:PE_m_aic}). The four panels of Figure~\ref{fig:riskprof} correspond with sample individuals having covariates outlined in Table~\ref{tab:samplecov}. At each time point, the height of each colored area of the plot gives the probability that the individual will be in the corresponding outcome category at that time, stacking from top to bottom $\pi^{(1)}(t\mid\bX,\gamma)$,  $\pi^{(2)}(t\mid\bX,\gamma)$,  $\pi^{(3)}(t\mid\bX,\gamma)$, and $\pi^{(4)}(t\mid\bX,\gamma)$. For example, the model predicts that at 34 weeks of gestation, individual D has about a 10\% chance of having developed preeclampsia and still being pregnant (blue), a 10\% chance of having developed preeclampsia and already given birth (purple), a 2\% chance of having given birth without preeclampsia (red), and a 78\% chance of being pregnant without preeclampsia (grey).

From these detailed risk patterns, we can also read simple overall probabilities of developing preeclampsia by looking at the combined height of the blue and purple bars at the right end of the plot. For patients A-D, the predicted overall probability of developing preeclampsia by week 40 is about 2\%, 22\%, 52\% and 47\%, respectively.

\begin{figure}[H]
\begin{center}
\includegraphics[width=6.5in]{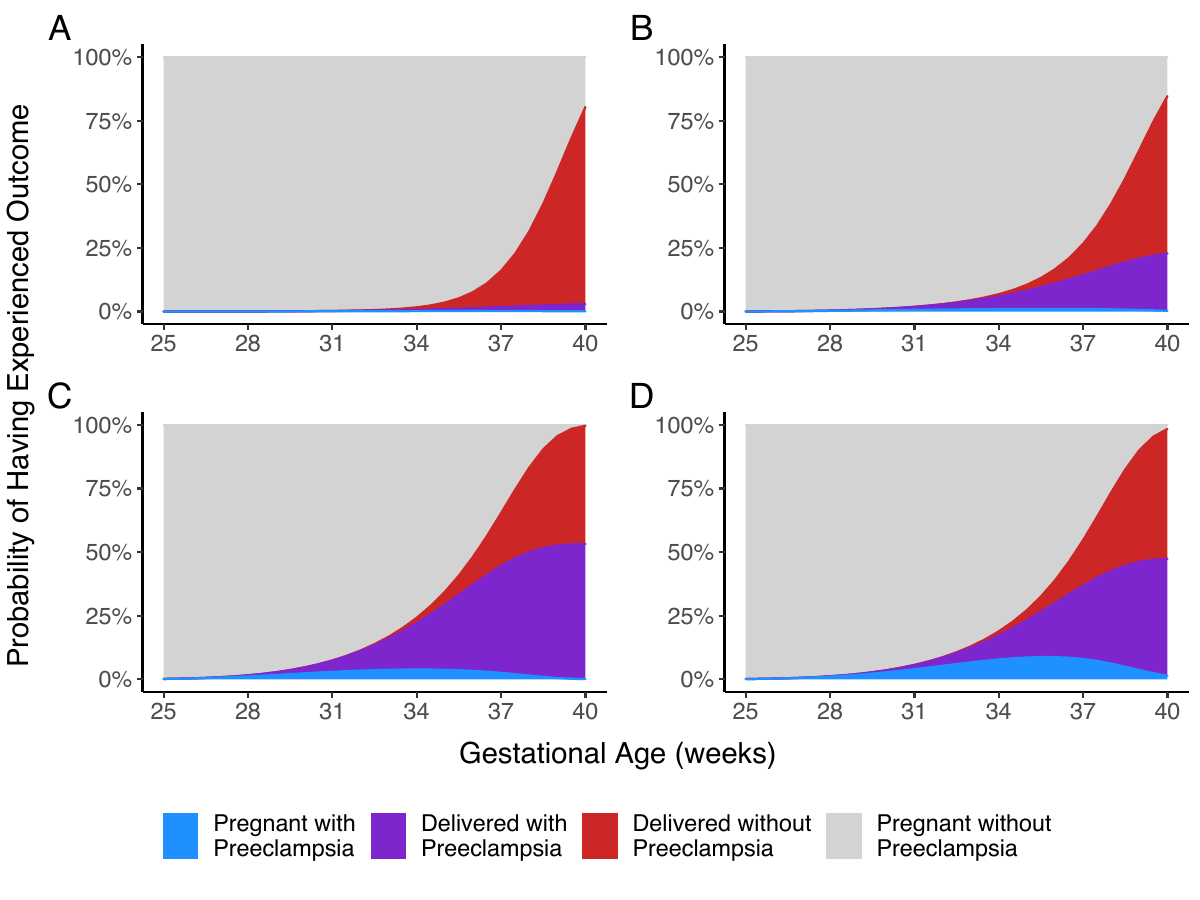}
\end{center}
\caption{Sample predicted risk profiles for four sample individuals. AIC-selected Weibull model with SCAD and fusion penalty under Markov specification. Frailty value fixed at $\gamma=1$.\label{fig:riskprof}}
\end{figure}

\begin{table}[H]
\caption{Characteristics of sample individuals used in Figure~\ref{fig:riskprof}.\label{tab:samplecov}}
\begin{center}
\begin{tabular}{rllll}
  \hline
 & A & B & C & D \\ 
  \hline
  Abn. RBC & No & No & No & No \\ 
  Abn. Urine Bilinogen & No & No & No & No \\ 
  Abn. Urine WBC & Yes & No & No & No \\ 
  Hepatitis Infection & No & No & No & No \\ 
  Leiomyoma & No & No & No & No \\ 
  Hypothyroidism & No & No & Yes & Yes \\ 
  Polycystic Ovarian Syndrome & No & No & No & No \\ 
  Anxiety Disorder & No & No & No & No \\ 
  Asthma & No & No & Yes & No \\ 
  Retention of Urine & No & No & No & Yes \\ 
  Anemia & No & Yes & No & No \\ 
  Pre-existing Diabetes & No & No & Yes & Yes \\ 
  BMI $\geq$ 30 & No & Yes & Yes & Yes \\ 
  Race/Ethnicity & White & Other/Unknown & White & White \\
  Public or No Insurance & No & No & No & Yes \\ 
  Parity $\geq$ 1 & Yes & No & Yes & No \\ 
  Previous Preeclampsia & No & No & Yes & No \\ 
  Current or Prior Cigarette Use & No & No & No & No \\ 
  Male Fetus & Yes & No & No & Yes \\ 
  Maternal Age $\geq$ 35 & Yes & No & Yes & No \\ 
   \hline
\end{tabular}
\end{center}
\end{table}

\subsection{Use of Frailties in Individualized Risk Prediction}\label{subsec:predfrail}

As described in the main text, the patient-specific frailty $\gamma$  accounts for residual within-patient dependence between $(T_1,T_2)$. Thus, it represents a potentially important component of variation in individualized risk predictions \citep{putter2015frailties}. However, $\gamma$ is latent and, therefore, cannot be observed for an individual to plug into $\bpi(t\mid  \bX, \gamma)$, and must be fixed to a chosen value as in Figure~\ref{fig:riskprof} with $\gamma=1$. In practice we might also compare predicted risk profiles for an individual across different values of $\gamma$, and in this way, the frailty can be viewed as a way of characterizing individual-level variability in the predicted risk profile. For example, Figure~\ref{fig:riskproffrailty} shows predicted risks across frailty values for each sample subject at the fixed timepoint of 37 weeks of gestation. Note that now, the x-axis does not represent time, but differing choices of frailty $\gamma$ plugged into the above risk profile formulae. Intuitively, we see that the predicted probabilities of experiencing some combination of the outcomes by week 37 tend to be smaller for smaller frailty values, and larger for larger frailty values.  Though individuals' frailties are unobserved, these plots can be used to communicate to patients the variability of possible outcome probability estimates depending on unmeasured latent characteristics \citep{lee2020timetoevent}.

Alternatively, this prediction framework also enables risk profile estimates to be marginalized over the frailty distribution by simply integrating over $f_\gamma(\gamma\mid \sigma)$, yielding the marginal profile $\bpi(t \mid  \bX) = \int \bpi(t \mid  \bX, \gamma)f_\gamma(\gamma\mid \sigma) d\gamma$. Corresponding marginal risk profile formulas derived under a gamma frailty distribution are given here for reference:
\begin{align}
	\label{eqn:fitprofile:1marg}
	\pi^{(1)}(t\mid \bX) = & \{1 + e^{\sigma}[H_1(t_2\mid \bX_1) + H_2(t_2\mid \bX_2)]\}^{-\exp(-\sigma)} \\
	\label{eqn:fitprofile:2marg}
	\pi^{(2)}(t\mid \bX) = & \int_0^t  h_2(t_2\mid \bX_2)\{1 + e^{\sigma}[H_1(t_2\mid \bX_1) + H_2(t_2\mid \bX_2)]\}^{-\exp(-\sigma)-1} dt_2  \\
\begin{split}\label{eqn:fitprofile:3marg}
	\pi^{(3)}(t\mid \bX) = &  \int_0^t  h_1(t_1\mid \bX_1) \{1 + e^{\sigma}[H_1(t_1\mid \bX_1) + H_2(t_1\mid \bX_2)]\}^{-\exp(-\sigma)-1} dt_1 \\
	& - \int_0^t h_1(t_1\mid \bX_1)\{1 + e^{\sigma}[H_1(t_1\mid \bX_1) + H_2(t_1\mid \bX_2) + H_3(t\mid t_1,\bX_3)]\}^{-\exp(-\sigma)-1} dt_1
\end{split}\\
	\label{eqn:fitprofile:4marg}
	\pi^{(4)}(t\mid \bX) = & \int_0^t h_1(t_1\mid \bX_1)\{1 + e^{\sigma}[H_1(t_1\mid \bX_1) + H_2(t_1\mid \bX_2) + H_3(t\mid t_1,\bX_3)]\}^{-\exp(-\sigma)-1} dt_1
\end{align}
Because in this data application the estimated frailty variance was small, there is little difference between the population-averaged risk profiles for the sample subject compared to Figure~\ref{fig:riskprof}, so the plot is omitted.

\begin{figure}[H]
\begin{center}
\includegraphics[width=6.5in]{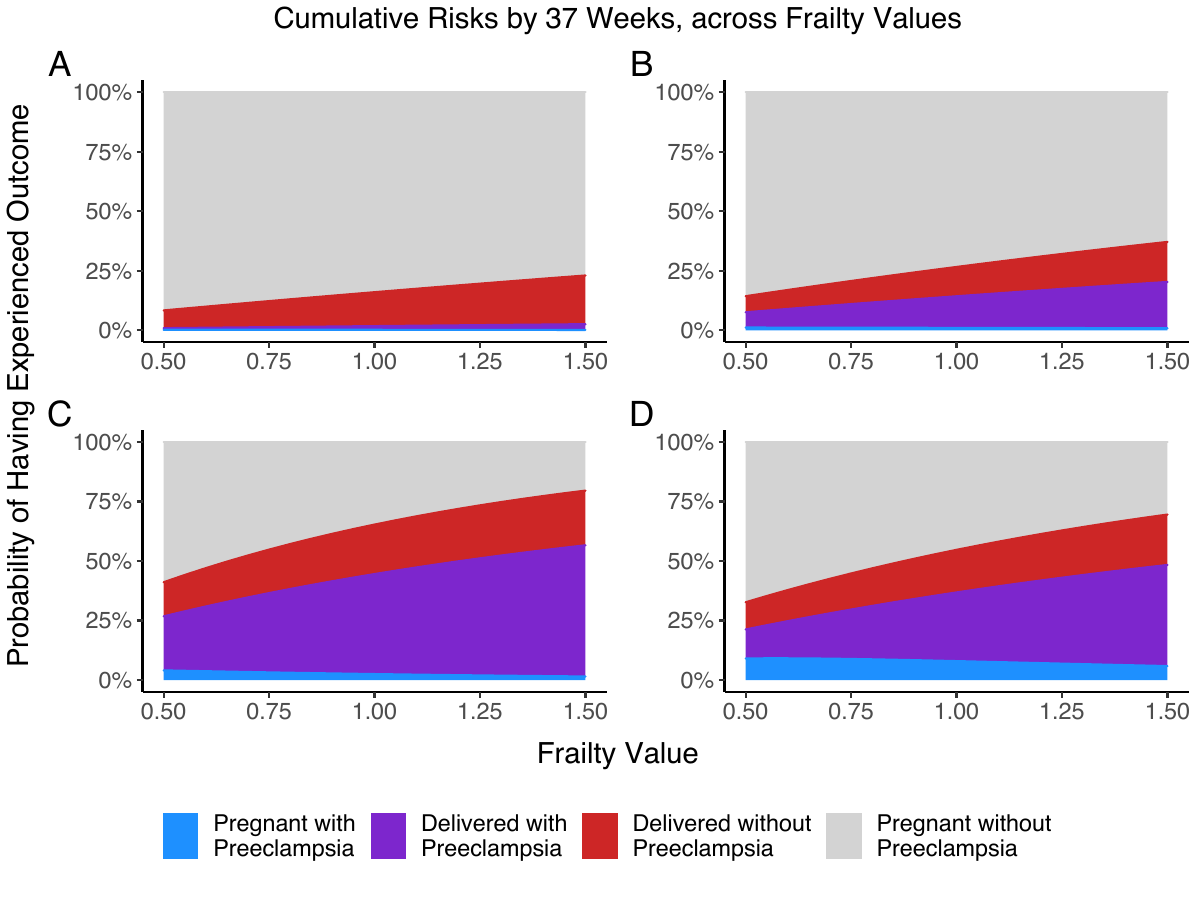}
\end{center}
\caption{Predicted risk at 37 weeks of gestation across values of frailty $\gamma$ for four sample individuals. AIC-selected Weibull model with SCAD and fusion penalty under Markov specification.\label{fig:riskproffrailty}}
\end{figure}

\newpage
\section{Proof of Main Result} \label{sec:proof}

In this section, we present a proof of Theorem~\ref{th:main}. This proof follows the strategy of Theorem 1 of  \citet{loh2015regularized}, with several differences. In particular, the restricted strong convexity (RSC) condition in \citet{loh2015regularized} no longer applies in this setting, because under certain baseline hazard specifications such as the Weibull, the resulting baseline parameters $\bphi$ have heavy-tailed deviations of the gradient and Hessian that decay at a slower rate (see Lemmas~\ref{lem:weibcond}, \ref{lem:gradbound}, and \ref{lem:hessbound}). Therefore, we introduce and apply the alternative RSC condition given in \eqref{eq:newrsc}, which in Lemma~\ref{lem:newrsc} is verified to hold with high probability under the corresponding Assumptions listed in Section~\ref{sec:theory} of the main text.

Finally, we note that the same statistical rate can be immediately obtained for other choices of penalty function $p_\lambda$ besides SCAD, as described by \citet{loh2015regularized}, Section~2.2. In particular, the theorem supposition of $3/\{4(\xi -1)\} < \rho$ for SCAD penalties relates the level of penalty non-convexity to the Hessian eigenvalue lower bound, and can be replaced by $3/\{4\xi\} < \rho$ for the MCP penalty of \citet{zhang2010nearly}, or omitted entirely for the lasso penalty. The theorem can be correspondingly adjusted below.

\begin{proof}[Proof of Theorem~\ref{th:main}]

To begin, define the linear transformation of the regression parameters $\bbeta' = ( (\bbeta'_{1})\trans, (\bbeta'_{2})\trans, (\bbeta'_{3})\trans ) = ( \bbeta_{1}\trans, (\bbeta_{2} - \bbeta_{1})\trans, (\bbeta_{3} - \bbeta_{1})\trans)$.
Similarly define $\bpsi' = ((\bbeta')\trans,\bphi\trans,\sigma)\trans$. Define the contrast matrix $\bM$ such that $\bpsi = \bM\bpsi'$, and corresponding loss $\ell'(\bpsi') = \ell(\bM\bpsi')$.


Under this transformation the penalized objective function $Q_{\lambda}$ given in \eqref{eq:optimtheory} can be equivalently represented by $\ell'(\bpsi') + \sum_{g=1}^3\sum_{j=1}^{d_{g}} p_{\lambda} \left(|\beta'_{gj}|\right).$

By the chain rule, $\nabla\ell'(\bpsi') = \bM\trans\nabla\ell(\bpsi)$ and $\nabla^2\ell'(\bpsi') = \bM\trans\nabla^2\ell(\bpsi)\bM$. 
Define the $j$th column of $\bM$ as $\bM_{\cdot j}$ and the induced matrix $1$-norm as the maximum absolute column sum $\|\bM\|_1 = \max_j \|\bM_{\cdot j}\|_{1}$. For this particular transformation, it can be shown that $\|\bM\|_1 = 3$. Then repeated application of the triangle inequality and H{\"o}lder's inequality illustrates that the maximal values are equivalent up to a known constant:
\begin{align}
\|\nabla\ell'(\bpsi')\|_{\infty} & \leq \|\bM\|_1\|\nabla\ell(\bpsi)\|_{\infty} = 3\|\nabla\ell(\bpsi)\|_{\infty},
\\ \|\nabla^2\ell'(\bpsi')\|_{\max} & \leq \|\bM\|^2_1\|\nabla^2\ell(\bpsi)\|_{\max} = 9\|\nabla^2\ell(\bpsi)\|_{\max}.
\end{align}
As a result, it suffices to show the desired rate for the estimator
\begin{equation} \label{eq:optimtheory2}
\bpsihat = \argmin_{\|\bpsi\|_1 \leq{} R_{1}} \ell(\bpsi) + \sum_{g=1}^3\sum_{j=1}^{d_{g}} p_{\lambda} \left(|\beta_{gj}|\right).
\end{equation} 

Defining $\bnu = \bpsihat - \bpsi^{*}$, then by Assumption~\ref{ass:vec} and the side constraint on \eqref{eq:optimtheory2}, our stationary point satisfies $\|\bnu\|_2 \leq{} \|\bnu\|_1 \leq{} R$. Then by Lemma~\ref{lem:newrsc}, under Assumptions~\ref{ass:bound}, \ref{ass:hess}, and \ref{ass:cond}, there exists a positive constant $c$ establishing the following RSC condition with high probability:
\begin{align}\label{eq:newrsc}
\langle \nabla \ell(\bpsi^* + \bnu) - \nabla \ell(\bpsi^*), \bnu \rangle \geq \rho\|\bnu\|_2^2 - c \sqrt{\frac{\log(dn)}{n}}\|\bnu\|_1^2.
\end{align}

Recall the defined penalty function $P_{\lambda}(\bpsi)=\sum_{g=1}^3\sum_{j=1}^k p_{\lambda} \left( \beta_{gj} \right)$. Setting $p_{\lambda}$ to be the SCAD penalty function \eqref{eq:scad}, then $P_{\lambda}(\bpsi) + \|\bpsi\|_2^2/\{2(\xi-1)\}$ is convex, so it follows that
\begin{align}
\left\langle \nabla P_{\lambda}(\bpsihat), \bpsi^{*}-\bpsihat \right\rangle \leq{} P_{\lambda}(\bpsi^{*}) - P_{\lambda}(\bpsihat) + \frac{1}{2(\xi-1)}\|\bpsihat - \bpsi^{*}\|_{2}^2,
\end{align}
where $\nabla P_{\lambda}(\bpsihat)$ is a subgradient of $P_{\lambda}$ at $\bpsihat$. Combining this with \eqref{eq:newrsc}, and the first order condition
\begin{align}
\langle \nabla \ell(\bpsihat) + \nabla P_{\lambda}(\bpsihat), \bpsi - \bpsihat) \geq 0 \quad \text{for all feasible $\bpsi$},
\end{align}
yields
\begin{align}
\rho\|\bnu\|_{2}^2 - c \sqrt{\frac{\log(dn)}{n}}\|\bnu\|_1^2 \leq{} -\langle \nabla \ell(\bpsi^*),\bnu\rangle + P_{\lambda}(\bpsi^*) -  P_{\lambda}(\bpsihat) + \frac{1}{2(\xi-1)}\|\bnu\|_2^2.
\end{align}
Rearranging and applying H{\"o}lder's inequality gives
\begin{align}
\left\{\rho - \frac{1}{2(\xi-1)}\right\}\|\bnu\|_{2}^2 \leq{} P_{\lambda}(\bpsi^*) -  P_{\lambda}(\bpsihat) + \|\nabla\ell(\bpsi^*)\|_{\infty}\|\bnu\|_1 + c \sqrt{\frac{\log(dn)}{n}}\|\bnu\|_1^2.
\end{align}
Under the constraint that $\|\bnu\|_1 \leq{} R$, this results in
\begin{equation}\label{eq:theoremeq1}
\begin{aligned}
\left\{\rho - \frac{1}{2(\xi-1)}\right\}\|\bnu\|_{2}^2 \leq{} & P_{\lambda}(\bpsi^*) -  P_{\lambda}(\bpsihat) 
\\ & + \left(\|\nabla\ell(\bpsi^*)\|_{\infty} + R c \sqrt{\frac{\log(dn)}{n}} \right)\|\bnu\|_1.
\end{aligned}
\end{equation}
By Lemma~\ref{lem:gradbound}, $\|\nabla\ell(\bpsi^*)\|_{\infty} = O_P\left(\sqrt{\log(d)/n}\right)$. So for suitable choices of $R$ and $\lambda = c'\sqrt{\log(dn)/n}$ with $c'$ sufficiently large, then with high probability
\begin{equation}\label{eq:theoremeq2}
\left\{\rho - \frac{1}{2(\xi-1)}\right\}\|\bnu\|_{2}^2 \leq{} P_{\lambda}(\bpsi^*) -  P_{\lambda}(\bpsihat) + \frac{\lambda}{2}\|\bnu\|_1.
\end{equation}
By the subadditive property of the SCAD penalty, $P_{\lambda}(\bnu) \leq{} P_{\lambda}(\bpsi^{*}) + P_{\lambda}(\bpsihat)$. Moreover, Lemma~4 of \citet{loh2015regularized} shows that for the SCAD penalty,  $\lambda|x| \leq p_{\lambda}(x) + x^2/\{2(\xi-1)\}$ for any $x\in \bbR$. Together, these results give that
\begin{align}
\frac{\lambda}{2}\|\bnu\|_1 \leq{} \frac{1}{2}P_{\lambda}(\bnu) + \frac{1}{4(\xi-1)}\|\bnu\|_{2}^2 \leq{} \frac{P_{\lambda}(\bpsi^{*})}{2} +\frac{P_{\lambda}(\bpsihat)}{2} + \frac{1}{4(\xi-1)}\|\bnu\|_{2}^2.
\end{align}
Combining with \eqref{eq:theoremeq2} and rearranging yields
\begin{align}\label{eq:theoremeq3}
0 \leq{} \left\{\rho - \frac{3}{4(\xi-1)}\right\}\|\bnu\|_{2}^2 \leq{} \frac{3}{2}P_{\lambda}(\bpsi^*) -  \frac{1}{2}P_{\lambda}(\bpsihat),
\end{align}
where the lower bound of 0 follows by assumption that $\rho > 3/\{4(\xi-1)\}$. Assume without loss of generality that the unpenalized $k$ baseline parameters of $\bphi$ and frailty log-variance parameter $\sigma$ are non-zero, so $s \geq (k+1)$. Then define $S$ to be the index set of the $k+1$ unpenalized elements, plus the $s-k-1$ largest elements of $\bbeta$. Then Lemma~5 of \citet{loh2015regularized} states that
\begin{align}
3P_{\lambda}(\bpsi^*) - P_{\lambda}(\bpsihat) \leq{} 3\lambda\|\bnu_{S}\|_1 - \lambda \|\bnu_{S^{c}}\|_1.
\end{align}
Substituting this into \eqref{eq:theoremeq3} gives
\begin{align}
\left\{2\rho - \frac{3}{2(\xi-1)}\right\}\|\bnu\|_{2}^2 \leq{} 3\lambda\|\bnu_{S}\|_1 - \lambda\|\bnu_{S^{c}}\|_1 \leq{} 3\lambda\|\bnu_{S}\|_1 \leq{} 3\lambda\sqrt{s}\|\bnu\|_2,
\end{align}
which yields the final result 
\begin{align}
\|\bnu\|_{2} \leq{} \frac{6\lambda\sqrt{s}}{4\rho - 3(\xi-1)^{-1}}.
\end{align} The statistical rate follows by the choice of $\lambda = O\left(\sqrt{\log(dn)/n}\right)$.
\end{proof}


\newpage
\section{Observed Data Likelihood Expressions for Gamma-Frailty Illness-Death Model}  \label{sec:gradhess}

In this section, we derive expressions for the observed data likelihood, gradient, and Hessian functions for the gamma-frailty illness-death model. As in the main text, for brevity we adopt a semi-Markov specification for $h_3$ throughout, though only simple modifications are required for corresponding formulae under the Markov specification. 

We denote $i$th subject's observed event times $(y_{1i}, y_{2i})$ and corresponding observed outcome indicators $(\delta_{1i},\delta_{2i})$. Next, denote the sum of cumulative cause-specific hazards as
\begin{align}
A_i = H_{01}(y_{1i})e^{\bX_{1i}\trans\bbeta_1} + H_{02}(y_{1i})e^{\bX_{2i}\trans\bbeta_2} + H_{03}(y_{2i}-y_{1i})e^{\bX_{3i}\trans\bbeta_3}.
\end{align}
Now, the negative log-likelihood loss can be succinctly written as
\begin{equation}\label{eq:loglikelihoodi:frailty}
\begin{aligned} 
\ell(\bpsi) = -\frac{1}{n}\sum_{i=1}^n & \Big\{\delta_{1i}\log h_{1}(y_{1i}) + (1-\delta_{1i})\delta_{2i}\log h_{2}(y_{1i})  + \delta_{1i}\delta_{2i}\log h_{3}(y_{2i}-y_{1i})
\\ & +  \delta_{1i}\delta_{2i}\log (1+e^{\sigma})
-(e^{-\sigma}+\delta_{1i}+\delta_{2i})\log(1+e^{\sigma}A_{i})\Big\}.
\end{aligned}
\end{equation}
For detailed casewise derivation of the observed data likelihood, see Appendix B of \citet{lee2015bayesian}.

Finally, as in the main text we reduce repetition by defining unifying notation for the observed outcomes:
\begin{align}\label{eq:ytildeobs}
\ytilde_{gi} ={} & \begin{cases} y_{1i}, & g\in\{1,2\}, \\ y_{2i}-y_{1i}, & g=3, \end{cases}
&\text{and}& & \deltatilde_{gi} ={} & \begin{cases} \delta_{1i}, & g=1, \\ (1-\delta_{1i})\delta_{2i}, & g=2, \\ \delta_{1i}\delta_{2i}, & g=3. \end{cases}
\end{align}

\subsection{General Case} \label{subsec:gengradhess}
\subsubsection{Gradient of Loss}

Considering just the $i$th subject's contribution, for $g=1,2,3$ and $j=1,\dots,k_{g}$ the gradient expressions are
\begin{align}
\frac{\partial \ell_i(\bpsi)}{\partial\sigma} ={} & \frac{\delta_{1i}\delta_{2i}e^{\sigma}}{1+e^{\sigma}} + \frac{\log(1+e^{\sigma}A_i)}{e^{\sigma}} - \frac{1 + e^{\sigma}(\delta_{1i} + \delta_{2i})}{1+e^{\sigma}A_i} A_i , \label{eq:gradsigma}
\\ \frac{\partial \ell_i(\bpsi)}{\partial\bbeta_g\trans} ={} & \left\{\deltatilde_{gi} - \frac{1 + e^{\sigma}(\delta_{1i} + \delta_{2i})}{1+e^{\sigma}A_i}H_{0g}(\ytilde_{gi})e^{\bX_{gi}\trans\bbeta_g}  \right\}\bX_{gi} , \label{eq:gradbetag}
\\ \frac{\partial \ell_i(\bpsi)}{\partial\phi_{gj}} ={} & \deltatilde_{gi}\left\{\frac{\partial}{\partial\phi_{gj}} \log h_{0g}(\ytilde_{gi})\right\} - \frac{\{1 + e^{\sigma}(\delta_{1i} + \delta_{2i})\}e^{\bX_{gi}\trans\bbeta_g}}{1+e^{\sigma}A_i} \left\{\frac{\partial H_{0g}(\ytilde_{gi})}{\partial\phi_{gj}}\right\}. \label{eq:gradphigj}
\end{align}

\subsubsection{Hessian of Loss}  \label{subsubsec:genhess}

Considering just the $i$th subject's contribution, using general notation over $g\in \{1,2,3\}$, $r\in \{1,2,3\}$, $j=1,\dots,k_{g}$, and $l=1,\dots,k_{r}$, the Hessian is expressions are
\begingroup
 \allowdisplaybreaks
\begin{align}
\label{eq:hesssigmasigma} 
\frac{\partial^2 \ell_i(\bpsi)}{\partial\sigma\partial\sigma} ={} & \frac{A_{i} + e^{\sigma}A_i(2A_i - \delta_{1i} - \delta_{2i})}{(1 + e^{\sigma}A_i )^2}
- \frac{\delta_{1i} \delta_{2i}e^{\sigma}}{(1 + e^{\sigma})^2}
 - e^{\sigma}\log(1 + e^{\sigma}A_i),
\\ \label{eq:hessbetabeta} 
\begin{split}
\frac{\partial^2 \ell_i(\bpsi)}{\partial\bbeta_{g}\partial\bbeta_{r}\trans} ={} & \frac{\{1 + e^{\sigma}(\delta_{1i} + \delta_{2i})\} e^{\bX_{gi}\trans\bbeta_{g}}}{1+e^{\sigma}A_i}H_{0g}(\ytilde_{gi})
\\ & \times \left\{\frac{e^{\sigma}H_{0r}(\ytilde_{ri}) e^{\bX_{ri}\trans\bbeta_r}}{1+e^{\sigma}A_i}  - \bbI(g=r)\right\}\bX_{ri}\bX_{gi}\trans,
\end{split}
\\ \label{eq:hesssigmabeta} 
\frac{\partial^{2} \ell_i(\bpsi)}{\partial{\sigma}\partial\bbeta_{r}\trans} ={} & \frac{e^{\sigma}H_{0r}(\ytilde_{ri}) e^{\bX_{ri}\trans\bbeta_r}}{1 + e^{\sigma}A_i}\left\{\frac{1+e^{\sigma}(\delta_{1i}+\delta_{2i})}{1+e^{\sigma}A_i}A_i - (\delta_{1i} + \delta_{2i}) \right\}\bX_{ri},
\\ \label{eq:hesssigmaphi} 
\frac{\partial^{2} \ell_i(\bpsi)}{\partial{\sigma}\partial\phi_{gj}} ={} & \frac{e^{\sigma}e^{\bX_{gi}\trans\bbeta_g}}{1 + e^{\sigma}A_i}\left\{\frac{\partial H_{0g}(\ytilde_{gi})}{\partial\phi_{gj}}\right\}\left\{\frac{1+e^{\sigma}(\delta_{1i}+\delta_{2i})}{1+e^{\sigma}A_i}A_i - (\delta_{1i} + \delta_{2i}) \right\},
\\ \label{eq:hessphibeta} 
\begin{split}
\frac{\partial^2 \ell_i(\bpsi)}{\partial\phi_{gj}\partial\bbeta_{r}\trans} ={} & \frac{(1 + e^{\sigma}(\delta_{1i} + \delta_{2i}))e^{\bX_{gi}\trans\bbeta_{g}}}{1+e^{\sigma}A_i}\left\{\frac{\partial H_{0g}(\ytilde_{gi})}{\partial\phi_{gj}}\right\} 
\\ & \times \left\{\frac{e^{\sigma}H_{0r}(\ytilde_{ri}) e^{\bX_{ri}\trans\bbeta_r}}{1+e^{\sigma}A_i}  - \bbI(g=r)\right\}\bX_{ri},
\end{split}
\\ \label{eq:hessphiphi} 
\begin{split}
\frac{\partial^2 \ell_i(\bpsi)}{\partial\phi_{gj}\partial\phi_{rl}} ={} & \frac{\{1 + e^{\sigma}(\delta_{1i} + \delta_{2i})\}e^{\bX_{gi}\trans\bbeta_{g}}}{1+e^{\sigma}A_i}
\\ & \times \left[\frac{e^{\sigma}e^{\bX_{ri}\trans\bbeta_r}}{1+e^{\sigma}A_i}\left\{\frac{\partial H_{0g}(\ytilde_{gi})}{\partial\phi_{gj}}\frac{\partial H_{0r}(\ytilde_{ri})}{\partial\phi_{rl}}\right\} - \left\{\frac{\partial^2 H_{0g}(\ytilde_{gi})}{\partial\phi_{gj}\partial\phi_{rl}}\right\}\right]
\\ & + \deltatilde_{gi}\left\{\frac{\partial^2}{\partial\phi_{gj}\partial\phi_{rl}}\log h_{0g}(\ytilde_{gi})\right\}.
\end{split}
\end{align}
\endgroup
\subsection{Piecewise Constant Baseline Hazard} \label{subsec:pwgradhess}

Recall that for the $g$th transition baseline hazard, the piecewise constant specification requires a user-defined set of knots $0 = t_{g}^{(1)} <\dots < t_{g}^{(k_g)} < t_{g}^{(k_g+1)}=\infty$ defining the intervals over which the hazard is constant. Using general notation over $g\in \{1,2,3\}$, $r\in \{1,2,3\}$, $j=1,\dots,k_{g}$, and $l=1,\dots,k_{r}$, the cause-specific log-baseline hazard and its first two derivatives are
\begin{align}
\label{eq:pwlogh}
	\log h_{0g}(t) ={} & \sum_{j=1}^{k_g} \phi_{gj} \mathbb{I}(t^{(j)} \leq{} t < t^{(j+1)}), \\
\label{eq:pwloghd}
	\frac{\partial}{\partial\phi_{gj}} \log h_{0g}(t) ={} & \mathbb{I}(t^{(j)} \leq{} t < t^{(j+1)}) ,\\
\label{eq:pwloghd2}
	\frac{\partial^2}{\partial\phi_{gj}\partial\phi_{rl}} \log h_{0g}(t) ={} & 0.
\end{align}
For the $g$th transition, define $B_{gj}(t) = (\min(t,t_g^{(j+1)}) - t_g^{(j)})\mathbb{I}(t \geq t_{g}^{(j)})$ to represent the amount of time spent in the $j$th interval. Then the cumulative cause-specific hazard is
\begin{align}
\label{eq:pwH}
	H_{0g}(t) ={} & \sum_{j=1}^{k_1} e^{\phi_{gj}} B_{gj}(t), \\
\label{eq:pwHd}
	\frac{\partial}{\partial\phi_{gj}} H_{0g}(t) ={} & e^{\phi_{gj}} B_{gj}(t), \\
\label{eq:pwHd2}
	\frac{\partial^2}{\partial\phi_{gj}\partial\phi_{rl}} H_{0g}(t) ={} & e^{\phi_{gj}} B_{gj}(t)\mathbb{I}(g=r,j=l).
\end{align}

\subsection{Weibull Baseline Hazard} \label{subsec:weibgradhess}

Using general notation over $g\in \{1,2,3\}$, $r\in \{1,2,3\}$, $j=1,\dots,k_{g}$, and $l=1,\dots,k_{r}$, the cause-specific log-baseline hazard and its first two derivatives are
\begin{align}
\label{eq:weiblogh}
	\log h_{0g}(t) ={} & \phi_{g1} + \phi_{g2} + (e^{\phi_{g1}}-1)\log t , \\
\label{eq:weibloghd}
	\frac{\partial}{\partial\phi_{gj}} \log h_{0g}(t) ={} & 1+ (e^{\phi_{g1}} \log t)\mathbb{I}(j=1), \\
\label{eq:weibloghd2}
	\frac{\partial^2}{\partial\phi_{gj}\partial\phi_{rl}} \log h_{0g}(t) ={} & (e^{\phi_{g1}} \log t)  \mathbb{I}(g=r,j=l=1).
\end{align}
The cause-specific cumulative hazard and its first two derivatives are then
\begin{align}
\label{eq:weibH}
	H_{0g}(t) ={} & e^{\phi_{g2}}t^{\exp(\phi_{g1})}, \\
\label{eq:weibHd}
	\frac{\partial}{\partial\phi_{gj}} H_{0g}(t) ={} & e^{\phi_{g2}}t^{\exp(\phi_{g1})}(e^{\phi_{g1}}\log t)^{\mathbb{I}(j=1)}, \\
\label{eq:weibHd2}
\begin{split}
\frac{\partial^2}{\partial\phi_{gj}\partial\phi_{rl}} H_{0g}(t) ={} & e^{\phi_{g2}}t^{\exp(\phi_{g1})}(e^{\phi_{g1}}\log t)^{\{\mathbb{I}(j=1)+\mathbb{I}(l=1)\}}\mathbb{I}(g=r)
\\ & + \{e^{\phi_{g2}}e^{\phi_{g1}}t^{\exp(\phi_{g1})}\log t\}\mathbb{I}(g=r,j=l=1).
\end{split}
\end{align}


\newpage
\section{Technical Lemmas} \label{sec:lemmas}

This section presents supporting details underlying the theoretical results. To summarize,
\begin{itemize}
\item Under Assumption~\ref{ass:bound}, Lemma~\ref{lem:pwcond} verifies that the piecewise constant baseline hazard specification satisfies Assumption~\ref{ass:cond}.
\item Under Assumption~\ref{ass:bound}, Lemma~\ref{lem:weibcond} verifies that the Weibull baseline hazard specification satisfies Assumption~\ref{ass:cond}.
\item Under Assumption~\ref{ass:bound} and Assumption~\ref{ass:cond}, Lemma~\ref{lem:gradbound} confirms a probabilistic bound on the largest gradient element.
\item Under Assumption~\ref{ass:bound} and Assumption~\ref{ass:cond}, Lemma~\ref{lem:hessbound} confirms a probabilistic bound on the largest deviation of a Hessian element from its mean.
\item Under Assumption~\ref{ass:hess} and the above Hessian bound, a Restricted Strong Convexity condition follows with high probability.
\end{itemize}

\begin{lemma}\label{lem:pwcond}
Under Assumption~\ref{ass:bound}, the piecewise constant baseline hazard specification satisfies Assumption~\ref{ass:cond}.
\end{lemma}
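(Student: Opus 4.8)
The plan is to verify the three parts of Assumption~\ref{ass:cond} one at a time, reading off the relevant functions from the closed forms \eqref{eq:pwlogh}--\eqref{eq:pwHd2} and invoking only the crude boundedness supplied by Assumption~\ref{ass:bound}, together with the fact that all conditions need hold only over the ball $\{\bpsi:\|\bpsi-\bpsi^*\|_2\le R\}$. The key preliminary observation is that on this ball each coordinate satisfies $|\phi_{gj}-\phi_{gj}^*|\le R$, so $e^{\phi_{gj}}$ is bounded above by the finite constant $c_{gj}:=e^{\phi_{gj}^*+R}$ uniformly in $\bpsi$; this is the only place the ball constraint is used essentially.

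For part (a), I would fix $\tau>0$ and note that $B_{gj}(t)=(\min(t,t_g^{(j+1)})-t_g^{(j)})\mathbb{I}(t\ge t_g^{(j)})$ satisfies $0\le B_{gj}(t)\le\tau$ for every $t\in[0,\tau]$ and every interval $j$ (including the unbounded terminal interval, since $t$ is capped at $\tau$). Substituting into \eqref{eq:pwH}, \eqref{eq:pwHd}, \eqref{eq:pwHd2} then yields the uniform bounds $0\le H_{0g}(t)\le\tau\sum_{j=1}^{k_g}c_{gj}$, $0\le \partial H_{0g}(t)/\partial\phi_{gj}\le c_{gj}\tau$, and $|\partial^2 H_{0g}(t)/\partial\phi_{gj}\partial\phi_{rl}|\le c_{gj}\tau$, each a finite constant free of $t$ and of $\bpsi$ in the ball. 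For part (b), \eqref{eq:pwloghd} shows $\frac{\partial}{\partial\phi_{gj}}\log h_{0g}(\Ytilde_{gi})=\mathbb{I}(t^{(j)}\le\Ytilde_{gi}<t^{(j+1)})$ is $\{0,1\}$-valued, and $\Deltatilde_{gi}\in\{0,1\}$ by \eqref{eq:ytilde}, so the product lies in $[0,1]$ and its variance is at most $1/4<\infty$. For part (c), \eqref{eq:pwloghd2} gives $\frac{\partial^2}{\partial\phi_{gj}\partial\phi_{rl}}\log h_{0g}(t)\equiv 0$, which trivially factors as $w^{gr}_{jl}(\bpsi)z^{gr}_{jl}(t)$ with $w^{gr}_{jl}\equiv 0$ and $z^{gr}_{jl}\equiv 0$, whence $\Var(\Deltatilde_{gi}z^{gr}_{jl}(\Ytilde_{gi}))=0$ is finite.

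There is no genuine obstacle: the piecewise constant log-baseline hazard is piecewise linear in $\bphi$, so its first $\phi$-derivatives are bounded indicators and its second $\phi$-derivatives vanish, while the cumulative hazard and its $\phi$-derivatives are sums of exponentials of bounded parameters times bounded occupancy times. The only mild care needed is to make the bounds in (a) uniform over the parameter ball rather than merely pointwise in $\bpsi$, which the exponential bound $e^{\phi_{gj}}\le c_{gj}$ handles; everything else is a one-line check.
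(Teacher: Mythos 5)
Your proof is correct and follows essentially the same route as the paper's: part (a) by bounding the occupancy times $B_{gj}$ and the exponentiated parameters, part (b) by noting the relevant quantity is a product of indicators and hence has finite (indeed bounded) variance, and part (c) by the vanishing of the second derivatives of $\log h_{0g}$. Your version merely makes explicit the uniformity over the ball $\|\bpsi-\bpsi^*\|_2\le R$ that the paper leaves to "inspection," which is a reasonable bit of extra care but not a different argument.
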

\begin{proof}
Assumption~\ref{ass:cond}a follows by inspection of the baseline cumulative hazard function \eqref{eq:pwH}, and its derivatives \eqref{eq:pwHd}, and \eqref{eq:pwHd2}, which are piecewise linear and therefore bounded on a closed interval. Assumption~\ref{ass:cond}c follows trivially, as any second derivatives of the log baseline hazard function \eqref{eq:pwloghd2} are the zero function.
Finally, note that any first derivative of the log baseline hazard function \eqref{eq:pwloghd} is just an indicator function, so $\Deltatilde_{gi}\left\{\partial \log h_{0g}(\Ytilde_{gi})/\partial\phi_{gj}\right\}$ is itself a Bernoulli random variable. Therefore, it must have finite variance, and Assumption~\ref{ass:cond}b is established.
\end{proof}

\begin{lemma}\label{lem:weibcond}
Under Assumption~\ref{ass:bound}, the Weibull baseline hazard specification satisfies Assumption~\ref{ass:cond}.
\end{lemma}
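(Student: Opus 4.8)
The plan is to verify the three parts of Assumption~\ref{ass:cond} one at a time, reading off the required quantities from the explicit Weibull expressions \eqref{eq:weiblogh}--\eqref{eq:weibHd2}, in the same spirit as the piecewise-constant case of Lemma~\ref{lem:pwcond}. The structural fact I would use throughout is that on $\{\bpsi:\|\bpsi-\bpsi^*\|_2\le R\}$ the parameters $\phi_{g1},\phi_{g2}$ range over a compact interval, so $\alpha_g:=\exp(\phi_{g1})$ stays in $[\alpha_{\min},\alpha_{\max}]$ with $\alpha_{\min}>0$ and $e^{\phi_{g2}}$ is bounded; this compactness is exactly what lets the bounds below be taken uniformly in $\bpsi$, as Assumption~\ref{ass:cond} requires.

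For part (a): each of $H_{0g}(t)$, $\partial H_{0g}/\partial\phi_{gj}(t)$ and $\partial^2 H_{0g}/\partial\phi_{gj}\partial\phi_{rl}(t)$ in \eqref{eq:weibH}--\eqref{eq:weibHd2} equals $t^{\alpha_g}$ times a polynomial of degree at most two in $\log t$ whose coefficients are bounded over the ball. Since $t^{\alpha}(\log t)^m\to 0$ as $t\downarrow 0$ for every $\alpha>0$ and $m\ge 0$, each such map extends continuously to $[0,\tau]$ with value $0$ at the origin for any $\tau>0$, and the envelope $t^{\alpha_g}|\log t|^m\le t^{\alpha_{\min}}|\log t|^m$ on $(0,1]$ makes the extension jointly continuous, hence uniformly bounded, on the compact set $[0,\tau]\times\{\|\bpsi-\bpsi^*\|_2\le R\}$. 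That settles (a).

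Parts (b) and (c) are next, and they collapse to a single claim. For (c), \eqref{eq:weibloghd2} gives the second derivative of the log baseline hazard as $e^{\phi_{g1}}(\log t)\,\mathbb{I}(g=r,\,j=l=1)$, which I would factor as $w^{gr}_{jl}(\bpsi)=e^{\phi_{g1}}\mathbb{I}(g=r,\,j=l=1)$ (a function of $\bpsi$ alone) times $z^{gr}_{jl}(t)=\log t$ (a function of $t$ alone), and which is identically $0$ otherwise, in which case one takes $z^{gr}_{jl}\equiv 0$. For (b), \eqref{eq:weibloghd} gives the first derivative of the log baseline hazard as the constant $1$ when $j\ne 1$ — for which $\Deltatilde_{gi}\cdot 1$ is Bernoulli with finite variance exactly as in Lemma~\ref{lem:pwcond} — and as $1+e^{\phi_{g1}}\log t$ when $j=1$, so the quantity in (b) is $\Deltatilde_{gi}(1+e^{\phi_{g1}}\log\Ytilde_{gi})$. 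Using $\Deltatilde_{gi}\in\{0,1\}$ and boundedness of $e^{\phi_{g1}}$ over the ball, both part (b) for $j=1$ and the remaining content of part (c) reduce to showing $\Var\big(\Deltatilde_{gi}\log\Ytilde_{gi}\big)<\infty$, equivalently $\bbE\big[(\log\Ytilde_{gi})^2\,\mathbb{I}(\Deltatilde_{gi}=1)\big]<\infty$.

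That last claim is the only real work and hence the main obstacle: it is where the Weibull case genuinely differs from the piecewise-constant one, in which the log-hazard derivatives are indicators and no tail issue arises. Since $\Ytilde_{gi}\le\tau_Y$ by Assumption~\ref{ass:bound}, the large values of $\log\Ytilde_{gi}$ are harmless, and the only danger is concentration of $\Ytilde_{gi}$ — of $Y_{1i}$ when $g\in\{1,2\}$ and of the gap $Y_{2i}-Y_{1i}$ when $g=3$ — near the origin. I would dispatch it using that $(\log t)^2$ grows slowly: over $[\epsilon_0,\tau_Y]$ the integrand $(\log t)^2$ is bounded and contributes at most a constant, while near the origin the sub-density of $\Ytilde_{gi}$ on $\{\Deltatilde_{gi}=1\}$ is, under the model, at worst of order $t^{\alpha-1}$ for some $\alpha>0$ — inherited from the Weibull cause-specific hazards after integrating out the gamma frailty, and likewise for a piecewise-constant generating mechanism — so $\int_0^{\epsilon_0}(\log t)^2\,t^{\alpha-1}\,dt<\infty$, since the substitution $u=-\log t$ turns it into $\int_{\log(1/\epsilon_0)}^{\infty}u^2 e^{-\alpha u}\,du$. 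If one wishes to allow misspecification of the baseline hazard, the same computation goes through provided the relevant sub-densities are merely bounded in a neighborhood of $0$. Either way, with this in hand parts (b) and (c) follow and the lemma is proved, the remaining differentiations behind (a)--(c) being entirely routine.
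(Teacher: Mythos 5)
Your proof is correct and follows the same overall structure as the paper's: part (a) via continuity and compactness, and parts (b) and (c) both reduced to the single claim $\Var(\Deltatilde_{gi}\log\Ytilde_{gi})<\infty$ after observing that the scale-parameter derivatives are constant and that \eqref{eq:weibloghd2} factors as $w^{gr}_{jl}(\bpsi)=e^{\phi_{g1}}$ times $z^{gr}_{jl}(t)=\log t$. Where you differ is in how that one nontrivial claim is established. The paper specializes to no censoring, no covariates, and equal shape parameters $\phi_{11}=\phi_{21}$, invokes the closed-form conditional density of $Y_{1i}$ given $\Delta_{1i}=1$ from \citet{jiang2015simulation}, checks that $\int_0^\infty \log^k(y)f(y)\,dy<\infty$ for $k=1,2$, and then asserts that the conclusion survives reintroducing censoring, covariates, and unequal shapes. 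You instead argue directly that the sub-density of $\Ytilde_{gi}$ on $\{\Deltatilde_{gi}=1\}$ is $O(t^{\alpha-1})$ near the origin for some $\alpha>0$, so that $\int_0^{\epsilon_0}(\log t)^2\,t^{\alpha-1}\,dt<\infty$ via the substitution $u=-\log t$. Your route is more self-contained and handles the general case (including $g=3$, covariates, and censoring) in one stroke, at the cost of asserting the $O(t^{\alpha-1})$ tail order without a full derivation; the paper's route is fully explicit but only for a special case, with the extension to the general setting left informal. A small bonus of your write-up: you note explicitly that $H_{0g}$ and its derivatives involve $t^{\alpha}(\log t)^m$ factors that must first be continuously extended to $t=0$ before the Extreme Value Theorem applies, a point the paper's part (a) glosses over.
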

\begin{proof}
We start with a proof of Assumption~\ref{ass:cond}a. Note that $\{\bpsi: \|\bpsi-\bpsi^*\|_2 \le R\}$ is a compact subset of $\bbR^{d+k+1}$, and $[0,t]$ is a closed interval. Then because the baseline cumulative hazard function \eqref{eq:weibH} and its derivatives \eqref{eq:weibHd} and \eqref{eq:weibHd2} are continuous in $t$ and $\bpsi$, by the Extreme Value Theorem the functions are bounded over the given space.

To prove Assumption~\ref{ass:cond}b, note that for each so-called `scale' parameter $\phi_{g2}$, the corresponding derivative of the log baseline hazard \eqref{eq:weibloghd} is one, so
\begin{align}
\Var\left\{\Deltatilde_{gi}\frac{\partial}{\partial\phi_{g2}} \log h_{0g}(\Ytilde_{gi})\right\} = \Var\left(\Deltatilde_{gi}\right),
\end{align} which is finite, as $\Deltatilde_{gi}$ is a binary random variable.

However, for each `shape' parameter $\phi_{g1}$, \eqref{eq:weibloghd} is an unbounded function as $t\rightarrow 0$.
Using the law of total variance, then
\begin{align}
\Var\left(\Deltatilde_{gi}\log \Ytilde_{gi}\right) = & \bbE_{\Deltatilde_{gi}}\left\{\Var\left(\Deltatilde_{gi}\log \Ytilde_{gi}\mid \Deltatilde_{gi}\right)\right\} + \Var_{\Deltatilde_{gi}}\left\{\bbE\left(\Deltatilde_{gi}\log \Ytilde_{gi}\mid \Deltatilde_{gi}\right)\right\}
\\ = & \Pr\left(\Deltatilde_{gi}=1\right)\Var\left(\log \Ytilde_{gi}\mid \Deltatilde_{gi}=1\right) 
\\ & + \Pr\left(\Deltatilde_{gi}=1\right)\Pr\left(\Deltatilde_{gi}=0\right)\bbE\left(\log \Ytilde_{gi}\mid \Deltatilde_{gi}=1\right)^2.
\end{align} Thus, to show $\Var\left(\Deltatilde_{gi}\log \Ytilde_{gi}\right)$ is finite it suffices to show finiteness of $E\left(\log \Ytilde_{gi}\mid \Deltatilde_{gi}=1\right)$ and $E\left(\log^2 \Ytilde_{gi}\mid \Deltatilde_{gi}=1\right)$.

Because the three transition submodels are analogous, without loss of generality we will focus on showing finite variance in the case of $g=1$. To start, assume no censoring, and no covariates. Then the marginal distribution of $Y_{1i}$ depends on the correlation of $T_{1i}$ and $T_{2i}$ induced by $\gamma_i$. Under the assumption that $\phi_{11}=\phi_{21}$, \citet{jiang2015simulation} show that
\begin{align}
\Pr(\Delta_{1i}=1) = \left(\frac{e^{\phi_{12}}}{e^{\phi_{12}}+e^{\phi_{22}}}\right),
\end{align} and derive the conditional distribution of $Y_{1i}$ as
\begin{align}
f_{Y_{1i}<\infty}(y_{1i}\mid \Delta_{1i}=1) = & \left(\frac{e^{\phi_{12}}+e^{\phi_{22}}}{e^{\phi_{12}}}\right)\frac{e^{\phi_{12} + \phi_{11}}y_{1i}^{e^{\phi_{11}}-1}}{\{1 + e^{\sigma}(e^{\phi_{12}}+ e^{\phi_{22}})y_{1i}^{e^{\phi_{11}}}\}^{e^{-\sigma}+1} }
\\ = &  \frac{\left(e^{\phi_{12}}+e^{\phi_{22}}\right)e^{\phi_{11}}y_{1i}^{e^{\phi_{11}}-1}}{\{1 + e^{\sigma}(e^{\phi_{12}}+ e^{\phi_{22}})y_{1i}^{e^{\phi_{11}}}\}^{e^{-\sigma}+1}} .
\end{align}
By these formulas, it can be shown that
\begin{align}
\bbE\left(\log Y_{1i}\mid \Delta_{1i}=1\right) = & \int_{0}^{\infty} \log(y_{1i})f_{Y_{1i}<\infty}(y_{1i}\mid \Delta_{1i}=1)dy_{1i} <  \infty, \\
\bbE\left(\log^2 Y_{1i}\mid \Delta_{1i}=1\right) = &\int_{0}^{\infty} \log^2(y_{1i})f_{Y_{1i}<\infty}(y_{1i}\mid \Delta_{1i}=1)dy_{1i} <  \infty,
\end{align} and thus we conclude that $\Var(\Delta_{1i}\log Y_{1i})$ is finite. As long as the random censoring distribution satisfies $\Var(\log C_i) < \infty$, our conclusion remains true if we incorporate censoring. It also remains true in the presence of covariates, by adding $\bX_{i1}\trans\bbeta_{1}$ and $\bX_{i2}\trans\bbeta_{2}$ to $\phi_{12}$ and $\phi_{22}$ respectively. Finally, our conclusion holds if we allow the shape parameters to differ, though the closed form expressions become more complicated.

The proof of Assumption~\ref{ass:cond}c follows directly. When $j=1$ and $l=1$ the form of the second derivative of the log baseline hazard \eqref{eq:weibloghd2} decomposes into $w^{gr}_{jl}(\bpsi)=e^{\phi_{g1}}$ and $z^{gr}_{jl}(t)=\log t$, so $\Var\left\{\Deltatilde_{gi}z^{gr}_{jl}\left(\Ytilde_{gi}\right)\right\}=\Var\left(\Deltatilde_{gi}\log \Ytilde_{gi}\right)$, which is finite by the previous result. When either $j\neq 1$ or $l \neq 1$, \eqref{eq:weibloghd2} is the zero function and the condition follows trivially.

\end{proof}

\begin{lemma}\label{lem:gradbound}
Under Assumption~\ref{ass:bound} and Assumption~\ref{ass:cond}, then there exist positive constants $c_1,c_2$ such that with probability $1-\epsilon$ the gradient of the negative log-likelihood loss satisfies
\begin{equation}
\|\nabla \ell(\bpsi^*)\|_{\infty} \leq{} \sqrt{\frac{\log\{4(d+1)/\epsilon\}}{2c_1n}} + \sqrt{\frac{2kc_2}{n\epsilon}}.
\end{equation}
\end{lemma}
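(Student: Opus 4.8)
The plan is to decompose the (empirical) score $\nabla\ell(\bpsi^*) = n^{-1}\sum_{i=1}^n \nabla\ell_i(\bpsi^*)$, where $\ell_i = -\log\mathcal{L}(\cdot\,|\,\mathcal{D}_i)$ as in \eqref{eq:loglikelihoodi:frailty}, observe that each summand is mean-zero by the score identity (the boundedness in Assumptions~\ref{ass:bound},~\ref{ass:vec}, and~\ref{ass:cond}a justifies differentiating the marginal likelihood \eqref{eq:likelihoodi:frailty} under the integral sign), and then split the $d+k+1$ coordinates into two blocks controlled by different tail tools. Block~(a) is the $d+1$ coordinates associated with $\bbeta$ and $\sigma$, whose per-subject gradient entries \eqref{eq:gradsigma}--\eqref{eq:gradbetag} I claim are uniformly bounded by an absolute constant $M$. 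Block~(b) is the $k$ coordinates associated with $\bphi$, whose per-subject entries \eqref{eq:gradphigj} are \emph{not} bounded (the $\log\ytilde_{gi}$ term under a Weibull specification diverges as $\ytilde_{gi}\downarrow 0$) but do have finite variance.

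For block~(a): since $0<Y_{1i}\le Y_{2i}\le\tau_Y$ and $\|\bX_i\|_\infty\le\tau_X$ (Assumption~\ref{ass:bound}), $\|\bpsi^*\|_1\le R_2$ (Assumption~\ref{ass:vec}), and $H_{0g}$ is bounded on $[0,\tau_Y]$ (Assumption~\ref{ass:cond}a), the quantities $e^{\bX_{gi}\trans\bbeta_g^*}$, $e^{\sigma^*}$, $H_{0g}(\ytilde_{gi})$, and hence $A_i$ are all bounded above by constants, while $A_i\ge 0$; consequently the recurring ratios $\tfrac{1+e^{\sigma^*}(\delta_{1i}+\delta_{2i})}{1+e^{\sigma^*}A_i}$, $\tfrac{e^{\sigma^*}A_i}{1+e^{\sigma^*}A_i}$, and $\tfrac{\log(1+e^{\sigma^*}A_i)}{e^{\sigma^*}}$ are bounded, and a term-by-term inspection of \eqref{eq:gradsigma}--\eqref{eq:gradbetag} yields the uniform bound $M$. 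Applying Hoeffding's inequality to each coordinate of $n^{-1}\sum_i\nabla\ell_i(\bpsi^*)$ and union bounding over the $d+1$ coordinates of this block shows that, with probability at least $1-\epsilon/2$, the block-(a) maximum is at most $\sqrt{\log(4(d+1)/\epsilon)/(2c_1 n)}$ for a constant $c_1$ depending only on $M$.

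For block~(b): I would use \eqref{eq:gradphigj} to write $\partial\ell_i(\bpsi^*)/\partial\phi_{gj}$ as the sum of $\deltatilde_{gi}\,\tfrac{\partial}{\partial\phi_{gj}}\log h_{0g}(\ytilde_{gi})$, which has finite variance by Assumption~\ref{ass:cond}b, and a remainder bounded uniformly by the same arguments as in block~(a) (now invoking Assumption~\ref{ass:cond}a for $\partial H_{0g}/\partial\phi_{gj}$). A sum of two finite-variance variables has finite variance, so $\Var\big(\partial\ell_i(\bpsi^*)/\partial\phi_{gj}\big)\le c_2$ uniformly in $g,j$ for some constant $c_2$. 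Each averaged coordinate then has variance at most $c_2/n$, so Chebyshev's inequality plus a union bound over the $k$ coordinates of this block gives that, with probability at least $1-\epsilon/2$, the block-(b) maximum is at most $\sqrt{2kc_2/(n\epsilon)}$. Finally, $\|\nabla\ell(\bpsi^*)\|_\infty$ equals the larger of the two block maxima, hence is at most their sum; intersecting the two events (each of probability at least $1-\epsilon/2$) yields the stated bound on an event of probability at least $1-\epsilon$.

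The main obstacle is the heavy-tailed block: the Weibull $\log\ytilde_{gi}$ terms rule out any sub-Gaussian or sub-exponential concentration, so the best coordinatewise tail control available is the polynomial Chebyshev bound — which is precisely why the lemma's second term is $\sqrt{kc_2/(n\epsilon)}$ rather than a $\sqrt{\log(\cdot)/n}$ term, and why it later propagates into an $O_P$-rate rather than a high-probability rate. A secondary, more mechanical hurdle is verifying the mean-zero property and carefully checking the uniform boundedness of the many non-logarithmic terms in \eqref{eq:gradsigma}--\eqref{eq:gradphigj}, which is exactly where Assumptions~\ref{ass:bound},~\ref{ass:vec}, and~\ref{ass:cond}a each enter.
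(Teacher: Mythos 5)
Your proposal matches the paper's proof essentially step for step: the same split of the gradient into the bounded $(\bbeta,\sigma)$ block controlled by Hoeffding's inequality with a union bound over $d+1$ coordinates, and the heavy-tailed $\bphi$ block controlled by Chebyshev's inequality via the finite-variance decomposition of \eqref{eq:gradphigj} guaranteed by Assumptions~\ref{ass:bound} and \ref{ass:cond}a--b, followed by combining the two tail bounds. Your explicit remarks on the mean-zero score identity and the $\epsilon/2$--$\epsilon/2$ split are slightly more careful than the paper's write-up but change nothing of substance.
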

\begin{proof} By Assumption~\ref{ass:bound}, the elements of $\bX_i$ are bounded on $[-\tau_{X},\tau_{X}]$, and $0 < Y_{1i} \leq{} Y_{2i} \leq{} \tau_{Y}$. Then by Assumption~\ref{ass:cond}a, for fixed parameter $\bpsi^{*}$  the gradient functions \eqref{eq:gradsigma} and \eqref{eq:gradbetag} corresponding to $\sigma$ and $\bbeta$ are bounded over this domain. Therefore, choose positive constant $c_1$ such that $c_1 \geq \max\{\|\nabla_{\bbeta} \ell_i(\bpsi^{*})\|_{\infty}, |\nabla_{\sigma} \ell_i(\bpsi^{*})|\}$,
where $\nabla_{\bbeta} \ell_i(\bpsi^*)$ is the $i$th subject's contribution to the gradient component corresponding to $\bbeta$ evaluated at $\bpsi^*$, and $\nabla_{\sigma} \ell_i(\bpsi^*)$ is analogously defined.

Now, using the property that random variables bounded by $[-c_1,c_1]$ are sub-Gaussian with variance proxy $c_1^2$, we may apply Hoeffding's inequality to each gradient component and take a union bound over all $(d+1)$ elements, yielding 
\begin{align}\label{eq:maxgradcompbound1}
\Pr\left[ \max\{\|\nabla_{\bbeta} \ell(\bpsi^{*})\|_{\infty},|\nabla_{\sigma} \ell(\bpsi^{*})|\} > t \right] \leq{} 2(d+1)\exp\left( - \frac{nt^2}{2c_1^2}\right).
\end{align}

However, because the gradient contributions from $\bphi$ may instead be heavy tailed, we introduce a moment inequality approach to bound these random variables. The form of each such gradient element as given in \eqref{eq:gradphigj} has two terms: the first term has finite variance by Assumption~\ref{ass:cond}b, while the second term is bounded by Assumption~\ref{ass:bound} and Assumption~\ref{ass:cond}a. Because bounded random variables have finite variance, then all elements of $\nabla_{\bphi}(\bpsi^{*})$ have finite variance.

Choose positive constant $c_2 \geq \max_{j}\{\Var([\nabla_{\bphi}\ell_{i}(\bpsi^{*})]_j)\}$ to be an upper bound on the variance of the $i$th subject's contributions to all gradient elements in $\bphi$ evaluated at $\bpsi^*$. Then using Chebyshev's inequality, and taking a union bound over all $k$ elements, yields
\begin{align} \label{eq:maxgradcompbound2}
\Pr(\|\nabla_{\bphi} \ell(\bpsi^*)\|_{\infty} > t) \leq{} \frac{kc_2}{nt^2}.
\end{align}

Having addressed each major component of the gradient $\nabla\ell(\bpsi^{*})$, then combining \eqref{eq:maxgradcompbound1} and \eqref{eq:maxgradcompbound2} using a union bound, then the maximum over all of the gradient elements is bounded by
\begin{align}
\Pr(\|\nabla \ell(\bpsi^*)\|_{\infty} > t) & \leq{} \Pr\left( \max\{\|\nabla_{\bbeta} \ell(\bpsi^{*})\|_{\infty},|\nabla_{\sigma} \ell(\bpsi^{*})|\} > t \right) + \Pr(\|\nabla_{\bphi} \ell(\bpsi^*)\|_{\infty} > t)
\\ & \leq{} 2(d + 1)\exp\left( - \frac{nt^2}{2c_1^2}\right) + \frac{kc_2}{nt^2}.
\end{align}
Inverting this result, we have that with probability $1-\epsilon$, 
\begin{align}
\|\nabla \ell(\bpsi^*)\|_{\infty} \leq{} \sqrt{\frac{\log(4(d+1)/\epsilon)}{2c_1n}} + \sqrt{\frac{2kc_2}{n\epsilon}}.
\end{align}
Note that the first term implies a $\sqrt{\log(d)/n}$ rate, while the second implies a $1/\sqrt{n}$ rate, yielding the desired overall result
\begin{align}
\|\nabla \ell(\bpsi^*)\|_{\infty} = O_P\left(\sqrt{\frac{\log d}{n}}\right).
\end{align}
\end{proof}

\begin{lemma}\label{lem:hessbound}
For any scalar $u\in[0,1]$ and any $(d+k+1)$-vector $\bnu$ satisfying $\|\bnu\|_2 \leq{} R$, consider the $i$th subject's Hessian contribution evaluated at $\bpsi^* + u\bnu$. Define the matrix of elementwise deviations from its expectation as
\begin{align*}
G^{\bnu}(u) = \nabla^2 \ell(\bpsi^* + u\bnu) - \Sigma( \bpsi^* + u\bnu).
\end{align*}
Then for a grid of points $u_m = m/n$ for $m = 1,\dots,n$, there exist positive constants $c_3, c_4, c_5, c_6 < \infty$ such that
\begin{equation} \label{eq:hessbound}
\begin{aligned}
 \Pr\big(\max_{1\leq{} m \leq{} n} \|G^{\bnu}(u_m)\|_{\max} > t \big) \leq{} & 2n[(d+k+1)^2 - k^2]\exp\left(-\frac{nt^2}{2c_3^2} \right) 
 \\ & + 2nk^2\exp\left(-\frac{nt^2}{2c_4^2} \right) + \frac{4k^2c_5^2c_6}{nt^2}.
\end{aligned}
\end{equation}
\end{lemma}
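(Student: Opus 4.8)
The plan is to bound $\|G^{\bnu}(u_m)\|_{\max}$ entry by entry. Writing $\nabla^2\ell(\bpsi) = n^{-1}\sum_{i=1}^n \nabla^2\ell_i(\bpsi)$, each coordinate of $G^{\bnu}(u_m)$ is a centered empirical average of $n$ i.i.d.\ terms, so the whole argument reduces to a tail bound for each such average, followed by union bounds over the at most $(d+k+1)^2$ matrix entries and the $n$ grid points. The crucial dichotomy is between entries that lie in the $\bphi$--$\bphi$ sub-block and those that do not. First I would verify that, for every $(j,k)$ \emph{outside} that block and every $\bpsi$ with $\|\bpsi - \bpsi^*\|_2 \le R$, the summand $[\nabla^2\ell_i(\bpsi)]_{jk}$ is bounded by an absolute constant: inspecting the closed forms \eqref{eq:hesssigmasigma}--\eqref{eq:hessphibeta}, every such entry is built from $e^{\sigma}$, $\deltatilde_{gi}$, $H_{0g}(\ytilde_{gi})$, $\partial H_{0g}/\partial\phi_{gj}$, $\partial^2 H_{0g}/\partial\phi_{gj}\partial\phi_{rl}$, $\log(1+e^{\sigma}A_i)$ and $\bX_{gi}$, always divided by factors $(1+e^{\sigma}A_i)\ge 1$ that dominate the exponential regression terms (since $A_i$ majorizes each nonnegative summand $H_{0g}(\ytilde_{gi})e^{\bX_{gi}\trans\bbeta_g}$); these quantities are controlled by Assumption~\ref{ass:bound} (bounded data), Assumption~\ref{ass:vec} together with the side constraint (so $\bpsi$ stays in a bounded region), and Assumption~\ref{ass:cond}a (bounded baseline cumulative hazards and their first two $\bphi$-derivatives on $[0,\tau_Y]$). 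A bounded centered random variable is sub-Gaussian, so Hoeffding's inequality gives $\Pr(|[G^{\bnu}(u_m)]_{jk}| > t) \le 2\exp(-nt^2/(2c_3^2))$ for a suitable $c_3$, and summing over the $(d+k+1)^2 - k^2$ such entries and the $n$ grid points yields the first term.

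Next I would handle the $\bphi$--$\bphi$ block. By \eqref{eq:hessphiphi}, entry $(j,k)$ of $\nabla^2\ell_i$ there is a bounded part --- the portion carrying $\partial H_{0g}/\partial\phi_{gj}$, $\partial H_{0r}/\partial\phi_{rl}$ and $\partial^2 H_{0g}/\partial\phi_{gj}\partial\phi_{rl}$, bounded exactly as above --- plus the potentially heavy-tailed term $\deltatilde_{gi}(\partial^2/\partial\phi_{gj}\partial\phi_{rl})\log h_{0g}(\ytilde_{gi})$. For fixed $(j,k)$ and $m$, $\{|[G^{\bnu}(u_m)]_{jk}| > t\}$ is contained in the union of $\{|\text{bounded part}| > t/2\}$ and $\{|\text{heavy part}| > t/2\}$. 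The bounded part again admits a Hoeffding bound $2\exp(-nt^2/(2c_4^2))$ (absorbing the $t/2$ into $c_4$), and a union bound over the $k^2$ block entries and $n$ grid points gives the second term. For the heavy part, the key observation --- and the reason the third term carries $1/(nt^2)$ rather than suffering a factor-$n$ loss from the grid --- is Assumption~\ref{ass:cond}c: $(\partial^2/\partial\phi_{gj}\partial\phi_{rl})\log h_{0g}(t) = w^{gr}_{jl}(\bpsi)\,z^{gr}_{jl}(t)$ with $z^{gr}_{jl}$ a function of $t$ alone. Hence the centered average $n^{-1}\sum_{i=1}^n(\deltatilde_{gi}z^{gr}_{jl}(\ytilde_{gi}) - \bbE[\deltatilde_{gi}z^{gr}_{jl}(\ytilde_{gi})])$ does not depend on $m$, and the heavy part of $[G^{\bnu}(u_m)]_{jk}$ equals $w^{gr}_{jl}(\bpsi^* + u_m\bnu)$ times this fixed average. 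Taking $c_5$ to bound $|w^{gr}_{jl}|$ over the ball (finite for the standard specifications, where $w^{gr}_{jl}\equiv e^{\phi_{g1}}$ or $\equiv 0$) and $c_6$ to bound $\Var(\deltatilde_{gi}z^{gr}_{jl}(\ytilde_{gi}))$ (finite by Assumption~\ref{ass:cond}c), Chebyshev's inequality gives $\Pr\big(\max_{1\le m\le n}|\text{heavy part of }[G^{\bnu}(u_m)]_{jk}| > t/2\big) \le 4c_5^2 c_6/(nt^2)$; a union bound over the $k^2$ block entries --- with no further factor of $n$ --- gives the third term, and adding the three pieces finishes the proof.

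The step I expect to be the main obstacle is the heavy-tailed block: one has to (i) pin down that the only Hessian summands failing to be bounded are those arising from $(\partial^2/\partial\phi_{gj}\partial\phi_{rl})\log h_{0g}$, so that they sit entirely inside the $\bphi$--$\bphi$ block --- in particular that \eqref{eq:hesssigmaphi} and \eqref{eq:hessphibeta} stay bounded because they involve $\partial H_{0g}/\partial\phi_{gj}$ rather than $\partial \log h_{0g}/\partial\phi_{gj}$ --- and (ii) exploit the multiplicative $\bpsi$-versus-$t$ separation granted by Assumption~\ref{ass:cond}c so that the $n$-point grid does not inflate that term by a factor of $n$. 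Finiteness of $c_5,c_6$ is immediate from Assumption~\ref{ass:cond}c, the remaining boundedness checks are routine inspections of the Hessian formulas using Assumptions~\ref{ass:bound} and \ref{ass:cond}a, and the Hoeffding and Chebyshev bounds together with the union bounds over $\le(d+k+1)^2$ entries and $n$ grid points are then standard.
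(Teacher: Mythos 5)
Your proposal is correct and follows essentially the same route as the paper's proof: the same split into the bounded (non-$\bphi$--$\bphi$) entries handled by Hoeffding plus a union bound over entries and grid points, the same decomposition of each $\bphi$--$\bphi$ entry into a bounded piece (Hoeffding, constant $c_4$) and a heavy-tailed piece, and the same exploitation of the $w^{gr}_{jl}(\bpsi)\,z^{gr}_{jl}(t)$ separation from Assumption~\ref{ass:cond}c so that the Chebyshev bound is applied to an $m$-independent average and the grid costs no extra factor of $n$. The only differences are cosmetic constant-factor bookkeeping (your $t/2$ split versus the paper's direct union of the two pieces).
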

\begin{proof}

This result is similar in spirit to Lemma~\ref{lem:gradbound}, in controlling the maximum deviation of a collection of random variables from their means. However, now our approach needs to also account for a grid of parameter values $\bpsi^* + u_m\bnu$, where $u_m = m/n$ for $m = 1,\dots,n$.

Let $\nabla^2_{\bbeta\bbeta}\ell_i(\bpsi)$ be the submatrix of the $i$th subject's Hessian contribution corresponding with the second derivatives of $\bbeta$ evaluated at $\bpsi$. Denote
\begin{align}
G^{\bnu}_{\bbeta\bbeta}(u) = \nabla_{\bbeta\bbeta}^2 \ell(\bpsi^* + u\bnu) - \bbE[\nabla_{\bbeta\bbeta}^2(\bpsi^* + u\bnu)],
\end{align} and define all other submatrices similarly.

\noindent\ul{Step 1: Elements corresponding to partial derivatives of $\bbeta$ and $\sigma$}

By Assumption~\ref{ass:bound}, the elements of $\bX_i$ are bounded on $[-\tau_{X},\tau_{X}]$, and $0 < Y_{1i} \leq{} Y_{2i} \leq{} \tau_{Y}$. Moreover, $\bpsi^* + u\bnu$ lies in an $\ell_2$-ball of radius $R$ around $\bpsi^*$. Then by Assumption~\ref{ass:cond1}a, the Hessian functions \eqref{eq:hesssigmasigma}, \eqref{eq:hessbetabeta}, \eqref{eq:hesssigmabeta}, \eqref{eq:hesssigmaphi}, and \eqref{eq:hessphibeta} corresponding with partial derivatives of $\sigma$ and $\bbeta$ are bounded over this domain. Choose positive upper bound $c_3$ on these $(d+k+1)^2 - k^2$ elements, such that
\begin{align}
c_3 \geq \max_{\|\bpsi-\bpsi^*\|_2\leq{} R}\big\{ & \|\nabla^2_{\bbeta\bbeta}\ell_i(\bpsi)\|_{\max}, \|\nabla^2_{\bbeta\sigma}\ell_i(\bpsi)\|_{\max}, 
\\ & \|\nabla^2_{\bbeta\bphi}\ell_i(\bpsi)\|_{\max}, \|\nabla^2_{\bphi\sigma}\ell_i(\bpsi)\|_{\max}, |\nabla^2_{\sigma\sigma}\ell_i(\bpsi)|  \big\}.
\end{align}
This means that each element is a random variable bounded by $[-c_3,c_3]$, so is sub-Gaussian with variance proxy $c_3^2$. Then using Hoeffding's inequality and taking a union bound over the Hessian components and over the $n$ points of the $u_m$ grid yields
\begin{equation}
\begin{aligned}\label{eq:hessboundbetasigma}
\Pr\big(\max_{1\leq{} m\leq{} n}\big\{ & \|G^{\bnu}_{\bbeta\bbeta}(u_m)\|_{\max}, \|G^{\bnu}_{\bbeta\sigma}(u_m)\|_{\max}, 
\\ & \|G^{\bnu}_{\bbeta\bphi}(u_m)\|_{\max}, \|G^{\bnu}_{\bphi\sigma}(u_m)\|_{\max}, |G^{\bnu}_{\sigma\sigma}(u_m)|  \big\} > t\big) \\
& \leq{} 2n[(d+k+1)^2 - k^2]\exp\left( -\frac{nt^2}{2c_3^2} \right).
\end{aligned}
\end{equation}

\noindent\ul{Step 2: Elements corresponding to second derivatives of $\bphi$}

Importantly, the remaining elements of the Hessian which correspond to the second derivatives of baseline hazard parameters $\bphi$ may be unbounded on this domain. However, under Assumption~\ref{ass:bound},  Assumption~\ref{ass:cond}a, and Assumption~\ref{ass:cond}c, then by \eqref{eq:hessphiphi} the random Hessian element corresponding to the partial derivatives of $\phi_{gj}$ and $\phi_{rl}$ takes the form
\begin{align}\label{eq:hessphiphisimp}
\frac{\partial^2 l_i(\bpsi)}{\partial\phi_{gj}\partial\phi_{rl}} =  B^{gr}_{jl}(\bpsi,\bX_i,\bY_i,\bDelta_i) + \Deltatilde_{gi}\left( w^{gr}_{jl}(\bpsi)z^{gr}_{jl}(\Ytilde) \right),
\end{align}
where each $B^{gr}_{jl}$ is a function bounded on the domain $\|\bpsi - \bpsi^*\|_2 \leq{} R$, $0 \leq{} Y_{1i} \leq{} Y_{2i} \leq{} \tau_{Y}$, $\bDelta_i \in \{0,1\}^2$, and $\|\bX_{i}\|_{\infty} \leq{} \tau_{X}$. So, the goal is to control each of term of \eqref{eq:hessphiphisimp}, and then combine the results.

Towards bounding the first term, choose positive constant $c_4$ that upper bounds all $B^{gr}_{jl}$ over the inputs:
\begin{align}
c_4 \geq \max_{g,r,j,l} \left\{ \max_{\|\bpsi - \bpsi^*\|_2 \leq{} R} \left[ \max_{\substack{ 0\leq{} Y_{1i}\leq{} Y_{2i}\leq{} \tau_{Y}, \\\bDelta_{i} \in \{0,1\}^2}} \left ( \max_{\|\bX_i\|_{\infty}\leq{} \tau_{X}}B^{gr}_{jl}(\bpsi,\bX_i,\bY_i,\bDelta_i) \right) \right] \right\}.
\end{align} 
Therefore, each $B^{gr}_{jl}$ is sub-Gaussian with with variance proxy $c_4^2$, so applying Hoeffding's inequality and a union bound over $m=1,\dots,n$ yields
\begin{equation}
\begin{aligned}\label{eq:hessboundphipart1}
& \Pr\bigg( \max_{1\leq{} m\leq{} n} \frac{1}{n}\sumin \left|B^{gr}_{jl}(\bpsi^{*} + u_{m}\bnu,\bX_i,\bY_i,\bDelta_i)-\bbE[B^{gr}_{jl}(\bpsi^{*} + u_{m}\bnu,\bX_i,\bY_i,\bDelta_i)]\right| > t \bigg)
\\ & \leq{} 2n\exp\left(-\frac{nt^2}{2c_4^2} \right).
\end{aligned}
\end{equation}

To control the second term, note that each $w^{gr}_{jl}$ is continuous, so by the Extreme Value Theorem is bounded on $\|\bpsi - \bpsi^*\|_2 \leq{} R$. Choose positive constant 
\begin{align}
c_5 \geq \max_{g,r,j,l}\left\{ \max_{\|\bpsi - \bpsi^*\|_2 \leq{} R}|w^{gr}_{jl}(\bpsi)|\right\}.
\end{align}
Next, by Assumption~\ref{ass:cond}c, each $\Var\left(\Deltatilde_{gi}z^{gr}_{jl}(\Ytilde_i) \right)$ is finite, so choose
\begin{align}
c_6 \geq \max_{g,r,j,l}\left\{\Var\left(\Deltatilde_{gi}z^{gr}_{jl}(\Ytilde_i)\right)\right\}.
\end{align}
Then by bounding $|w^{gr}_{jl}(\bpsi^* + u_m\bnu)|$ over $m=1,\dots, n$ by $c_5$, and using Chebyshev's inequality on $\Deltatilde_{gi}z^{gr}_{jl}(\Ytilde_i)$, we have
\begin{equation}
\begin{aligned}\label{eq:hessboundphipart2}
&\Pr\left( \max_{1\leq{} m \leq{} n}\frac{1}{n}\sumin \left|\Deltatilde_{gi}w^{gr}_{jl}(\bpsi^* + u_m\bnu)z^{gr}_{jl}(\Ytilde_i) - \bbE\left[\Deltatilde_{gi}w^{gr}_{jl}(\bpsi^* + u_m\bnu)z^{gr}_{jl}(\Ytilde_i)\right]\right| > t \right)
\\ & \leq{} \Pr\bigg( \left\{\max_{1\leq{} m \leq{} n}|w^{gr}_{jl}(\bpsi^* + u_m\bnu)|\right\}\frac{1}{n}\sumin |\Deltatilde_{gi}z^{gr}_{jl}(\Ytilde) - \bbE[\Deltatilde_{gi}z^{gr}_{jl}(\Ytilde_i)]| > t\bigg)
\\ & \leq{} \Pr\bigg( \frac{c_5}{n}\sumin |\Deltatilde_{gi}z^{gr}_{jl}(\Ytilde_i) - \bbE[\Deltatilde_{gi}z^{gr}_{jl}(\Ytilde_i)]| > t\bigg) \leq{} \frac{c_5^2c_6}{nt^2}.
\end{aligned}
\end{equation}

To bring these pieces together, we denote the Hessian component $G^{\bnu}_{\bphi\bphi}(u_m)$ such that
\begin{equation}
\begin{aligned}
\|G^{\bnu}_{\bphi\bphi}(u_m)\|_{\max} = & \max_{g,r,j,l} \frac{1}{n}\sumin \bigg\{ |B^{gr}_{jl}(\bpsi^* + u_m\bnu,\bX_i,\bY_i,\bDelta_i) - \bbE[B^{gr}_{jl}(\bpsi^* + u_m\bnu,\bX_i,\bY_i,\bDelta_i)]| 
\\ & + \left|\Deltatilde_{gi}\left( w^{gr}_{jl}(\bpsi^* + u_m\bnu)z^{gr}_{jl}(\Ytilde_i) \right) - \bbE\left[\Deltatilde_{gi}\left( w^{gr}_{jl}(\bpsi^* + u_m\bnu)z^{gr}_{jl}(\Ytilde_i) \right)\right]\right|\bigg\}.
\end{aligned}
\end{equation}
Then combining \eqref{eq:hessboundphipart1} and \eqref{eq:hessboundphipart2}, and taking a union bound over all $k^2$ elements of this Hessian submatrix yields
\begin{align}\label{eq:hessboundphiphi}
\Pr\left(\max_{1\leq{} m \leq{} n} \|G^{\bnu}_{\bphi\bphi}(u_m)\|_{\max} > t \right) \leq{} 2k^2\left[\frac{4c_5^2c_6}{nt^2} + 2n\exp\left(-\frac{nt^2}{2c_4^2} \right)\right].
\end{align}

\noindent\ul{Step 3: Combine result for overall bound on $\max_{1\leq{} m\leq{} n}\|G^{\bnu}(u_m)\|_{\max}.$}

From the definition of $G^{\bnu}$, we have that
\begin{align}
\|G^{\bnu}(u_m)\|_{\max} = \max\big\{ & \|G^{\bnu}_{\bphi\bphi}(u_m)\|_{\max},\|G^{\bnu}_{\bbeta\bbeta}(u_m)\|_{\max}, \|G^{\bnu}_{\bbeta\sigma}(u_m)\|_{\max}, 
\\ & \|G^{\bnu}_{\bbeta\bphi}(u_m)\|_{\max}, \|G^{\bnu}_{\bphi\sigma}(u_m)\|_{\max}, |G^{\bnu}_{\sigma\sigma}(u_m)|  \big\}.
\end{align}
So combining the results of \eqref{eq:hessboundbetasigma} and \eqref{eq:hessboundphiphi} via a union bound yields the desired result.

Equivalently, we may invert the result such that with probability $1-\epsilon$,
\begin{equation}
\begin{aligned}
\max_{1\leq{} m \leq{} n} \|G^{\bnu}(u_m)\|_{\max} \leq{} & \sqrt{\frac{2c_3^2\log(6n[(d+k+1)^2 - k^2]/\epsilon)}{n}} 
\\ & + \sqrt{\frac{2c_4^2\log(6nk^2/\epsilon)}{n}} + \sqrt{\frac{12k^2c_5^2c_6}{n\epsilon}}.
\end{aligned}
\end{equation}
\end{proof}

\begin{lemma}\label{lem:newrsc}
Under Assumptions~\ref{ass:bound}, \ref{ass:vec} and \ref{ass:hess}, and a model satisfying Assumption~\ref{ass:cond}, then with probability $1-\epsilon$, for some positive constant $c$ the following holds:
\begin{equation}
\langle \nabla \ell(\bpsi^* + \bnu) - \nabla \ell(\bpsi^*), \bnu \rangle \geq \rho\|\bnu\|_2^2 - c \sqrt{\frac{\log(dn)}{n}}\|\bnu\|_1^2 \quad \forall \|\bnu\|_2 \leq{} R.
\end{equation}
\end{lemma}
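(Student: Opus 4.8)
The plan is to expand the inner product in integral form and then separate the population curvature from its empirical fluctuation. By the fundamental theorem of calculus, $\langle \nabla \ell(\bpsi^* + \bnu) - \nabla \ell(\bpsi^*), \bnu \rangle = \int_0^1 \bnu\trans \nabla^2\ell(\bpsi^* + u\bnu)\,\bnu \, du$. Writing $\nabla^2\ell(\bpsi^* + u\bnu) = \Sigma(\bpsi^* + u\bnu) + G^{\bnu}(u)$ with $G^{\bnu}(u)$ the deviation matrix of Lemma~\ref{lem:hessbound}, and noting that $\|\bnu\|_2 \le R$ forces $\bpsi^* + u\bnu$ into the $\ell_2$-ball of radius $R$ about $\bpsi^*$ for every $u \in [0,1]$, Assumption~\ref{ass:hess} gives the deterministic lower bound $\int_0^1 \bnu\trans \Sigma(\bpsi^* + u\bnu)\,\bnu\, du \ge \rho\|\bnu\|_2^2$. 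It thus remains to show that the fluctuation term satisfies $\bigl|\int_0^1 \bnu\trans G^{\bnu}(u)\,\bnu\, du\bigr| \le c\sqrt{\log(dn)/n}\,\|\bnu\|_1^2$ with high probability, uniformly over $\|\bnu\|_2 \le R$.

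For this I would use the elementary bound $|\bnu\trans A \bnu| \le \|A\|_{\max}\|\bnu\|_1^2$, reducing the task to controlling $\int_0^1 \|G^{\bnu}(u)\|_{\max}\, du$. I would partition $[0,1]$ at the grid $u_m = m/n$ and on each subinterval write $\|G^{\bnu}(u)\|_{\max} \le \|G^{\bnu}(u_m)\|_{\max} + \|G^{\bnu}(u) - G^{\bnu}(u_m)\|_{\max}$. The first (grid) term is exactly what Lemma~\ref{lem:hessbound} bounds: taking the stated union bound over the $n$ grid points and the $O((d+k+1)^2)$ Hessian entries and inverting yields $\max_m \|G^{\bnu}(u_m)\|_{\max} = O_P(\sqrt{\log(dn)/n})$. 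The second (continuity) term I would bound by a mean-value estimate along the segment: the two evaluation points differ by $\|u\bnu - u_m\bnu\|_2 \le R/n$, and the third-order increments of $\nabla^2\ell$ — for instance, under a Weibull specification, terms of the form $\deltatilde_{gi}\,e^{\phi_{g1}}\log\ytilde_{gi}$ arising from differentiating \eqref{eq:hessphiphisimp} once more in $\phi_{g1}$ — have coefficients bounded in the $\bpsi$-direction by Assumption~\ref{ass:cond}a, while $n^{-1}\sum_i \deltatilde_{gi}|\log\ytilde_{gi}| = O_P(1)$ by the law of large numbers (its mean being finite by the second-moment computations in Lemma~\ref{lem:weibcond}); the population piece $\Sigma(\cdot)$ is likewise Lipschitz in $\bpsi$ with a finite constant by differentiating under the expectation. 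Summing $n$ subintervals each of width $1/n$ gives a total continuity error of order $1/n$, which is $o(\sqrt{\log(dn)/n})$. Adding the two contributions yields $\int_0^1\|G^{\bnu}(u)\|_{\max}\, du \le c\sqrt{\log(dn)/n}$ on a high-probability event, which is the claim.

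The main obstacle is making the control of $\|G^{\bnu}(u)\|_{\max}$ simultaneous — over the continuum of $u\in[0,1]$ and, since the inequality is eventually invoked at the random stationary point $\bnu = \bpsihat - \bpsi^*$, effectively over the direction $\bnu$ — while the baseline-hazard-parameter block of the Hessian is only guaranteed \emph{finite variance}, not boundedness (Assumptions~\ref{ass:cond}b,c). The bounded Hessian entries concentrate sub-Gaussianly and combine with a standard net argument, but each heavy-tailed entry must first be split, as in \eqref{eq:hessphiphisimp}, into a bounded part (handled by Hoeffding) plus a single heavy-tailed factor $\deltatilde_{gi}z^{gr}_{jl}(\ytilde_{gi})$ handled by Chebyshev; the $u_m = m/n$ grid must be fine enough that the resulting polynomial-in-$n$ union-bound factors enter the rate only through the harmless $\log n$ inside $\log(dn)$, and coarse enough that these factors are no worse than that. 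Getting this balance right — rather than any individual inequality — is where the work lies.
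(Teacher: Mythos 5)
Your proposal follows essentially the same route as the paper's proof: the integral mean-value representation, the split of $\nabla^2\ell$ into $\Sigma(\cdot)$ plus the deviation $G^{\bnu}(u)$, the lower bound $\rho\|\bnu\|_2^2$ from Assumption~\ref{ass:hess}, the reduction $|\bnu\trans A\bnu|\le\|A\|_{\max}\|\bnu\|_1^2$, and the $u_m=m/n$ net combining Lemma~\ref{lem:hessbound} on the grid with a Lipschitz-in-$u$ continuity term of order $1/n$. Your handling of the continuity term is in fact slightly more explicit than the paper's (which simply posits a finite Lipschitz constant $c'$), but the argument is the same.
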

\begin{proof}

Using the integral definition of the mean-value theorem generalized to vector-valued functions, for some $u \in [0,1]$ we have
\begin{align}
\langle \nabla \ell(\bpsi^* + \bnu) - \nabla \ell(\bpsi^*), \bnu \rangle = \bnu\trans\left(\int_{0}^{1} \nabla^2 \ell(\bpsi^* + u\bnu) du\right)\bnu,
\end{align} where the integral is understood to be elementwise. 
Recall the definition $\Sigma(\bpsi)$ = $\bbE[\nabla^2 \ell(\bpsi)]$, and decompose
\begin{equation}
\begin{aligned}\label{eq:hessdecomp}
\bnu\trans\left(\int_{0}^{1} \nabla^2 \ell(\bpsi^* + u\bnu) du\right)\bnu = & \bnu\trans\left(\int_{0}^{1} \Sigma(\bpsi^* + u\bnu) du\right)\bnu 
\\ & + \bnu\trans\left(\int_{0}^{1} \left[\nabla^2 \ell(\bpsi^* + u\bnu) - \Sigma(\bpsi^* + u\bnu) \right] du\right)\bnu.
\end{aligned}
\end{equation}
To yield our final RSC condition, we bound each of these two righthand terms.

\noindent \ul{Step 1: Bound minimum eigenvalue of the population Hessian in a ball}

Starting with the first term on the righthand side of \eqref{eq:hessdecomp}, and recalling that $\lambda_{\min}(\cdot)$ denotes the matrix minimum eigenvalue,
\begin{align}\label{eq:hessbound1}
\bnu\trans\left(\int_{0}^{1} \Sigma(\bpsi^* + u\bnu) du\right)\bnu = & \int_{0}^{1} \bnu\trans\Sigma(\bpsi^* + u\bnu)\bnu du
\\ \geq & \int_{0}^1\lambda_{\min}\left(\Sigma(\bpsi^* + u\bnu) \right)\|\bnu\|_2^2du.
\end{align}
Now, under Assumption~\ref{ass:hess}, the population Hessian matrix minimum eigenvalue is bounded away from 0 by $\rho$ over the ball $\|\bnu\|_2 \leq{} R$. For any $u \in [0,1]$ then $\bpsi^* + u\bnu$ will be within that ball, yielding the bound
\begin{align}
\bnu\trans\left(\int_{0}^{1} \Sigma(\bpsi^* + u\bnu) du\right)\bnu \geq & \int_{0}^1\rho\|\bnu\|_2^2du = \rho\|\bnu\|_2^2 > 0.
\end{align}

\noindent \ul{Step 2: Bound elementwise maximum deviation between sample and population Hessian}

To simplify notation, define the matrix-valued deviation function $G^{\bnu}(u) = \nabla^2 \ell(\bpsi^* + u\bnu) - \Sigma( \bpsi^* + u\bnu)$ as in Lemma~\ref{lem:hessbound}. For an arbitrary matrix $A$, denote $A_{jl}$ its $(j,l)$th element, and $\|A\|_{\max} = \max_{j,l}|A_{jl}|$. Then repeatedly using the triangle inequality and H{\"o}lder's inequality yields
\begin{align}
\left|\bnu\trans \left(\int_{0}^{1}G^{\bnu}(u)du\right)\bnu\right| \leq{} & \|\bnu\|_{1}^2\int_{0}^1\left\|G^{\bnu}(u)\right\|_{\max}du.
\end{align} 
Our goal is then to find an upper bound for $\left\|G(u)\right\|_{\max}$ over $u$, as 
\begin{align}
\int_{0}^1\left\|G^{\bnu}(u)\right\|_{\max}du \leq \sup_{u\in [0,1]} \left\|G(u)\right\|_{\max}.
\end{align}
Lemma~\ref{lem:gradbound} shows an approach to bounding the maximum of a collection of bounded random variables, but now we must bound the maximum both over the elements and over the interval $u\in [0,1]$. For this we use an $\epsilon$-net argument. Define a grid of points $u_m = m/n$ for $m = 1,\dots,n$, then for any $u$ there exists $m$ such that $|u-u_m|\leq{} \frac{1}{n}$. Then
\begin{equation}
\begin{aligned}\label{eq:enet}
\left\|G^{\bnu}(u)\right\|_{\max} \leq{} & \sup_{u\in [0,1]} \max_{j,l} |[G^{\bnu}(u)]_{jl}| 
\\ \leq{} & \max_{1\leq{} m\leq{} n} \max_{j,l} |[G^{\bnu}(u_m)]_{jl}|  + \sup_{|u-u_m| \leq{} 1/n} \max_{j,l} |[G^{\bnu}(u)]_{jl}-[G^{\bnu}(u_m)]_{jl}| .
\end{aligned}
\end{equation}

By Lemma~\ref{lem:hessbound}, the first term on the RHS of \eqref{eq:enet} is bounded with probability $1-\epsilon$ by

\begin{equation}
\begin{aligned}\label{eq:enetterm1}
\max_{1\leq{} m \leq{} n} \|G^{\bnu}(u_m)\|_{\max} \leq{} & \sqrt{\frac{2c_3^2\log(6n[(d+k+1)^2 - k^2]/\epsilon)}{n}} 
\\ & + \sqrt{\frac{2c_4^2\log(6nk^2/\epsilon)}{n}} + \sqrt{\frac{12k^2c_5^2c_6}{n\epsilon}}.
\end{aligned}
\end{equation}

To bound the final term of \eqref{eq:enet}, each of the $(j,l)$th elements $G^{\bnu}_{jl}(u)$ is continuous in $u$ over the closed interval $[0,1]$ and is therefore locally Lipschitz on the interval. Define $c'<\infty$ such that for all $j,l$, $G^{\bnu}_{jl}$ is $c'$-Lipschitz smooth on the interval $[0,1]$. Then
\begin{align}
\sup_{|u-u_m| \leq{} 1/n} \max_{j,l} |G^{\bnu}_{jl}(u)-G^{\bnu}_{jl}(u_m)| \leq{} \sup_{|u-u_m| \leq{} 1/n} c'|u-u_m| \leq{} \frac{c'}{n}.
\end{align}
Then putting this result and \eqref{eq:enetterm1} into \eqref{eq:enet}, we have that with probability $1-\epsilon$, 
\begin{equation}
\begin{aligned}\label{eq:hessbound2}
\sup_{u\in [0,1]} \max_{j,l} |G^{\bnu}_{jl}(u)|  \leq{} & \sqrt{\frac{2c_3^2\log(6n[(d+k+1)^2 - k^2]/\epsilon)}{n}} 
\\ & + \sqrt{\frac{2c_4^2\log(6nk^2/\epsilon)}{n}} + \sqrt{\frac{12k^2c_5^2c_6}{n\epsilon}} + \frac{c'}{n} .
\end{aligned}
\end{equation}

\noindent \ul{Step 3: Plug bounds into final expression}

By plugging \eqref{eq:hessbound1} and \eqref{eq:hessbound2} into \eqref{eq:hessdecomp}, we conclude that with probability $1-\epsilon$ there exists some positive constant $c$ such that

\begin{equation}
\langle \nabla \ell(\bpsi^* + \bnu) - \nabla \ell(\bpsi^*), \bnu \rangle \geq \rho\|\bnu\|_2^2 - c \sqrt{\frac{\log(dn)}{n}}\|\bnu\|_1^2 \quad \forall \|\bnu\|_2 \leq{} R .
\end{equation}

\end{proof}


\newpage
\section{Additional Simulation Details}\label{sec:addlsims}

In this section, we present additional details on simulation settings, and simulation results. We specifically point out that the non-zero elements of $\bbeta^*$ shown below are always within the first 15 elements of $\bX$, so due to the AR(0.25) serial auto-correlation specification of the covariates $\bX$, the correlation between covariates with non-zero effects ranges from 0.25 for neighboring covariates to $0.25^{-14}$ between $X_{1}$ and $X_{15}$. 

\begin{table}[H]
\caption{Summary of simulation settings run for each sample size $n=300,500,1000$.\label{tab:simsettings}}
\begin{center}
\resizebox{\textwidth}{!}{\begin{tabular}{l|c|c|c|c|c|c|c|c}
\hline
& \multicolumn{8}{c}{Simulation Settings}\\
\cline{2-9}
& 1 & 2 & 3 & 4 & 5 & 6 & 7 & 8 \\
\hline
{\textbf{True Baseline Parameters}} &&&&&&&& \\ 
~~{\textit{Moderate Non-Terminal Event Rate}}  &&&&&&&& \\ 
~~~~$\begin{pmatrix} \bphi^*_{1} \\ \bphi^*_{2} \\ \bphi^*_{3} \end{pmatrix} = \begin{pmatrix} (0.005,0.015,0.050,0.0125)\trans \\
(0.010,0.040,0.075,0.0500)\trans \\
(0.010,0.040,0.075,0.0750)\trans \end{pmatrix}$  & X & X & X& X&&&&\\ \hline
~~{\textit{Low Non-Terminal Event Rate}}  &&&&&&&&\\ 
~~~~$\begin{pmatrix} \bphi^*_{1} \\ \bphi^*_{2} \\ \bphi^*_{3} \end{pmatrix} = \begin{pmatrix} (0.035,0.025,0.020,0.025)\trans \\
(0.005,0.010,0.025,0.018)\trans \\
(0.008,0.015,0.024,0.024)\trans \end{pmatrix}$  &&&&&X&X&X&X \\ \hline
{\textbf{True Regression Parameters}}  &&&&&&&& \\ 
~~{\textit{Shared Support}}  &&&&&&&& \\ 
~~~~$\begin{pmatrix} \bbeta^*_1 \\ \bbeta^*_2 \\ \bbeta^*_3 \end{pmatrix} = \begin{pmatrix} (0.3,-0.4,0.5,0.2,-0.4,0.3,-0.4,0.5,0.2,-0.4,\mathbf{0})\trans \\
(0.8,-1.0,0.6,0.3,-0.5,0.8,-1.0,0.6,0.3,-0.5,\mathbf{0})\trans \\
(0.6,-0.7,0.7,0.4,-0.3,0.6,-0.7,0.7,0.4,-0.3,\mathbf{0})\trans \end{pmatrix}$   &X&X &&&X&X&& \\  \hline
~~{\textit{Partially Shared Support}}  &&&&&&&& \\ 
~~~~$\begin{pmatrix} \bbeta^*_1 \\ \bbeta^*_2 \\ \bbeta^*_3 \end{pmatrix} = \begin{pmatrix} (0.3,-0.4,0.5,0.2,-0.4, 0.3,-0.4,0.5,0.2,-0.4,\mathbf{0})\trans \\ 
(\mathbf{0}_5, 0.6,-0.7,0.5,0.2,-0.4, 0.3,-0.4,0.5,0.2,-0.4,\mathbf{0})\trans \\ 
(\mathbf{0}_5, 0.6,-0.7,0.7,0.4,-0.3,0.6,-0.7,0.7,0.4,-0.3,\mathbf{0})\trans \end{pmatrix}$  &&&X&X&&&X&X \\
\hline
{\textbf{Dimensionality Regime}}  &&&&&&&& \\ 
~~{\textit{Low Dimension} ($d_g = 25$)}  &X&&X&&X&&X& \\ \hline
~~{\textit{High Dimension} ($d_g = 350$)}  &&X&&X&&X&&X \\ \hline
\end{tabular}}
\end{center}
\end{table}

\subsection{Failures of MLE and Forward Selection Optimization}
The MLE and forward selection comparator methods use the \texttt{optim} function built into \texttt{R}, which flags failure to converge. In Table~\ref{tab:failures} we summarize the proportion of MLE and forward selection simulations that failed, by simulation setting. In all subsequent tables, we report results from the subset of successful iterations.

\begin{table}[H]
\caption{\label{tab:failures} Optimization failure proportions for maximum likelihood and forward selection comparator models, by specification and simulation setting. Maximum likelihood estimates only available for low-dimensional setting.}
\begin{center}
\resizebox{0.9\textwidth}{!}{\begin{tabular}{lrrrrrr}
  \hline
 & \multicolumn{2}{c}{$n=300$} & \multicolumn{2}{c}{$n=500$} & \multicolumn{2}{c}{$n=1000$} \\ 
\ul{Weibull} & MLE & Forward & MLE & Forward & MLE & Forward \\ 
  \hline
\multicolumn{7}{l}{\textbf{Moderate Non-Terminal Event Rate}} \\ 
\multicolumn{7}{l}{\textit{~~Shared Support}} \\ 
  ~~~~Low-Dimension & 0.00 & 0.00 & 0.00 & 0.00 & 0.00 & 0.00 \\ 
  ~~~~High-Dimension & --- & 0.67 & --- & 0.09 & --- & 0.00 \\ 
\multicolumn{7}{l}{\textit{~~Partially Non-Overlapping Support}} \\
  ~~~~Low-Dimension & 0.00 & 0.00 & 0.00 & 0.00 & 0.00 & 0.00 \\ 
  ~~~~High-Dimension & --- & 0.70 & --- & 0.10 & --- & 0.00 \\ 
\multicolumn{7}{l}{\textbf{Low Non-Terminal Event Rate}} \\ 
\multicolumn{7}{l}{\textit{~~Shared Support}} \\  
  ~~~~Low-Dimension & 0.01 & 0.10 & 0.00 & 0.06 & 0.00 & 0.03 \\ 
  ~~~~High-Dimension & --- & 0.90 & --- & 0.47 & --- & 0.07 \\ 
\multicolumn{7}{l}{\textit{~~Partially Non-Overlapping Support}} \\
  ~~~~Low-Dimension & 0.00 & 0.00 & 0.00 & 0.00 & 0.00 & 0.00 \\ 
  ~~~~High-Dimension & --- & 0.83 & --- & 0.23 & --- & 0.00 \\ 
\hline
 & \multicolumn{2}{c}{$n=300$} & \multicolumn{2}{c}{$n=500$} & \multicolumn{2}{c}{$n=1000$} \\ 
\ul{Piecewise Constant} & MLE & Forward & MLE & Forward & MLE & Forward \\ 
\hline
\multicolumn{7}{l}{\textbf{Moderate Non-Terminal Event Rate}} \\ 
\multicolumn{7}{l}{\textit{~~Shared Support}} \\ 
  ~~~~Low-Dimension & 0.00 & 0.00 & 0.00 & 0.00 & 0.00 & 0.00 \\ 
  ~~~~High-Dimension & --- & 0.02 & --- & 0.00 & --- & 0.00 \\ 
\multicolumn{7}{l}{\textit{~~Partially Non-Overlapping Support}} \\
  ~~~~Low-Dimension & 0.00 & 0.00 & 0.00 & 0.00 & 0.00 & 0.00 \\ 
  ~~~~High-Dimension & --- & 0.01 & --- & 0.00 & --- & 0.00 \\
\multicolumn{7}{l}{\textbf{Low Non-Terminal Event Rate}} \\ 
\multicolumn{7}{l}{\textit{~~Shared Support}} \\  
  ~~~~Low-Dimension & 0.01 & 0.09 & 0.00 & 0.06 & 0.00 & 0.03 \\ 
  ~~~~High-Dimension & --- & 0.55 & --- & 0.24 & --- & 0.06 \\ 
\multicolumn{7}{l}{\textit{~~Partially Non-Overlapping Support}} \\
  ~~~~Low-Dimension & 0.00 & 0.01 & 0.00 & 0.00 & 0.00 & 0.00 \\ 
  ~~~~High-Dimension & --- & 0.26 & --- & 0.02 & --- & 0.00 \\ 
\hline
\end{tabular}}
\end{center}
\end{table}

\subsection{Estimation Error Results}

\subsubsection{Main Results ($n=500,1000$), Piecewise Constant Model}

\begin{table}[H]
\caption{\label{tab:l2errorpw}Mean $\ell_2$ estimation error of $\bbetahat$, piecewise constant baseline hazard specification. Maximum likelihood estimates only available for low-dimensional setting.}
\begin{center}
\resizebox{\textwidth}{!}{\begin{tabular}{lccccccc}
\hline
$n=500$ & Oracle & MLE & Forward & Lasso & SCAD & Lasso + Fusion & SCAD + Fusion \\ 
\hline
\multicolumn{8}{l}{\textbf{Moderate Non-Terminal Event Rate}} \\ 
\multicolumn{8}{l}{\textit{~~Shared Support}} \\ 
~~~~Low-Dimension & 0.74 & 1.31 & 1.48 & 2.06 & 1.37 & 1.51 & 0.98 \\ 
~~~~High-Dimension & 0.74 & --- & 2.83 & 2.76 & 2.22 & 2.22 & 1.22 \\ 
\multicolumn{8}{l}{\textit{~~Partially Non-Overlapping Support}} \\
~~~~Low-Dimension & 0.71 & 1.27 & 1.32 & 1.89 & 1.25 & 1.61 & 1.16 \\ 
~~~~High-Dimension & 0.73 & --- & 2.71 & 2.49 & 2.18 & 2.28 & 1.43 \\ 
\multicolumn{8}{l}{\textbf{Low Non-Terminal Event Rate}} \\ 
\multicolumn{8}{l}{\textit{~~Shared Support}} \\ 
~~~~Low-Dimension & 0.92 & 1.78 & 1.88 & 2.23 & 1.91 & 1.47 & 1.18 \\ 
~~~~High-Dimension & 0.89 & --- & 3.87 & 2.56 & 2.21 & 2.34 & 1.30 \\ 
\multicolumn{8}{l}{\textit{~~Partially Non-Overlapping Support}} \\
~~~~Low-Dimension & 0.83 & 1.55 & 1.54 & 2.04 & 1.49 & 1.71 & 1.25 \\ 
~~~~High-Dimension & 0.82 & --- & 3.44 & 2.43 & 2.18 & 2.32 & 1.58 \\ 
\hline
$n=1000$ & Oracle & MLE & Forward & Lasso & SCAD & Lasso + Fusion & SCAD + Fusion \\ 
\hline
\multicolumn{8}{l}{\textbf{Moderate Non-Terminal Event Rate}} \\ 
\multicolumn{8}{l}{\textit{~~Shared Support}} \\ 
~~~~Low-Dimension & 0.50 & 0.82 & 0.76 & 1.53 & 0.73 & 1.39 & 0.75 \\ 
~~~~High-Dimension & 0.49 & --- & 1.22 & 2.41 & 1.48 & 1.83 & 0.80 \\ 
\multicolumn{8}{l}{\textit{~~Partially Non-Overlapping Support}} \\
~~~~Low-Dimension & 0.48 & 0.80 & 0.71 & 1.27 & 0.71 & 1.35 & 0.83 \\ 
~~~~High-Dimension & 0.48 & --- & 1.15 & 2.18 & 1.22 & 1.84 & 0.98 \\ 
\multicolumn{8}{l}{\textbf{Low Non-Terminal Event Rate}} \\ 
\multicolumn{8}{l}{\textit{~~Shared Support}} \\ 
~~~~Low-Dimension & 0.58 & 0.97 & 1.24 & 2.03 & 1.12 & 1.34 & 0.82 \\ 
~~~~High-Dimension & 0.59 & --- & 1.85 & 2.39 & 2.09 & 1.78 & 0.89 \\ 
\multicolumn{8}{l}{\textit{~~Partially Non-Overlapping Support}} \\
~~~~Low-Dimension & 0.55 & 0.92 & 0.85 & 1.48 & 0.85 & 1.42 & 0.94 \\ 
~~~~High-Dimension & 0.55 & --- & 1.42 & 2.22 & 1.76 & 1.93 & 1.04 \\ 
\hline
\end{tabular}}
\end{center}
\end{table}

\subsubsection{Small Sample Result ($n=300$), Weibull and Piecewise Constant  Models}

\begin{table}[H]
\caption{\label{tab:l2errorsmall}Mean $\ell_2$ estimation error of $\bbetahat$, Weibull and piecewise constant baseline hazard specifications, sample size $n=300$. Maximum likelihood estimates only available for low-dimensional setting.}
\begin{center}
\resizebox{\textwidth}{!}{\begin{tabular}{lccccccc}
\hline
\multicolumn{8}{l}{\ul{Weibull}} \\ 
$n=300$ & Oracle & MLE & Forward & Lasso & SCAD & Lasso + Fusion & SCAD + Fusion \\ 
\hline
\multicolumn{8}{l}{\textbf{Moderate Non-Terminal Event Rate}} \\ 
\multicolumn{8}{l}{\textit{~~Shared Support}} \\ 
~~~~Low-Dimension & 1.05 & 2.24 & 1.94 & 2.53 & 1.90 & 1.64 & 1.23 \\ 
~~~~High-Dimension & 1.05 & --- & 6.02 & 2.96 & 22.04 & 2.88 & 22.04 \\ 
\multicolumn{8}{l}{\textit{~~Partially Non-Overlapping Support}} \\
~~~~Low-Dimension & 1.02 & 2.12 & 1.86 & 2.37 & 1.79 & 2.02 & 1.47 \\ 
~~~~High-Dimension & 1.00 & --- & 6.98 & 2.53 & 20.47 & 2.52 & 20.47 \\ 
\multicolumn{8}{l}{\textbf{Low Non-Terminal Event Rate}} \\ 
\multicolumn{8}{l}{\textit{~~Shared Support}} \\ 
~~~~Low-Dimension & 1.33 & 5.49 & 2.11 & 2.37 & 2.15 & 1.78 & 1.53 \\ 
~~~~High-Dimension & 1.33 & --- & 5.67 & 2.75 & 20.90 & 2.71 & 20.87 \\ 
\multicolumn{8}{l}{\textit{~~Partially Non-Overlapping Support}} \\
~~~~Low-Dimension & 1.15 & 3.18 & 1.99 & 2.28 & 2.04 & 2.04 & 1.56 \\ 
~~~~High-Dimension & 1.14 & --- & 5.96 & 2.50 & 20.93 & 2.52 & 20.93 \\ 
\hline
\multicolumn{8}{l}{\ul{Piecewise Constant}} \\ 
$n=300$ & Oracle & MLE & Forward & Lasso & SCAD & Lasso + Fusion & SCAD + Fusion \\ 
\hline
\multicolumn{8}{l}{\textbf{Moderate Non-Terminal Event Rate}} \\ 
\multicolumn{8}{l}{\textit{~~Shared Support}} \\ 
~~~~Low-Dimension & 1.02 & 2.09 & 1.94 & 2.52 & 1.88 & 1.67 & 1.25 \\ 
~~~~High-Dimension & 1.02 & --- & 6.72 & 2.97 & 25.92 & 2.86 & 24.97 \\
\multicolumn{8}{l}{\textit{~~Partially Non-Overlapping Support}} \\
~~~~Low-Dimension & 0.99 & 2.01 & 1.84 & 2.37 & 1.75 & 2.03 & 1.47 \\ 
~~~~High-Dimension & 0.98 & --- & 7.11 & 2.53 & 26.37 & 2.52 & 26.05 \\
\multicolumn{8}{l}{\textbf{Low Non-Terminal Event Rate}} \\ 
\multicolumn{8}{l}{\textit{~~Shared Support}} \\ 
~~~~Low-Dimension & 1.32 & 4.65 & 2.11 & 2.35 & 2.14 & 1.76 & 1.52 \\ 
~~~~High-Dimension & 1.31 & --- & 7.42 & 2.77 & 21.46 & 2.71 & 19.28 \\ 
\multicolumn{8}{l}{\textit{~~Partially Non-Overlapping Support}} \\
~~~~Low-Dimension & 1.16 & 2.88 & 2.00 & 2.28 & 2.01 & 2.04 & 1.54 \\ 
~~~~High-Dimension & 1.16 & --- & 7.78 & 2.51 & 24.34 & 2.52 & 23.57 \\ 
\hline
\end{tabular}}
\end{center}
\end{table}

\subsection{Sign Inconsistency Results}

\subsubsection{Main Results ($n=500,1000$), Piecewise Constant Model}

\begin{table}[H]
\caption{\label{tab:signincpw}Mean count of sign-inconsistent $\bbetahat$ estimates, piecewise constant baseline hazard specification.}
\begin{center}
\resizebox{\textwidth}{!}{\begin{tabular}{lcccccc}
\hline
$n=500$ & Oracle & Forward & Lasso & SCAD & Lasso + Fusion & SCAD + Fusion \\ 
\hline
\multicolumn{7}{l}{\textbf{Moderate Non-Terminal Event Rate}} \\ 
\multicolumn{7}{l}{\textit{~~Shared Support}} \\ 
~~~~Low-Dimension & 0.12 & 11.52 & 15.21 & 10.46 & 11.55 & 3.56 \\ 
~~~~High-Dimension & 0.13 & 35.43 & 26.20 & 30.09 & 21.51 & 20.79 \\ 
\multicolumn{7}{l}{\textit{~~Partially Non-Overlapping Support}} \\
~~~~Low-Dimension & 0.13 & 10.70 & 15.47 & 10.10 & 15.21 & 8.19 \\ 
~~~~High-Dimension & 0.15 & 33.50 & 27.06 & 35.51 & 22.86 & 26.78 \\
\multicolumn{7}{l}{\textbf{Low Non-Terminal Event Rate}} \\ 
\multicolumn{7}{l}{\textit{~~Shared Support}} \\ 
~~~~Low-Dimension & 0.28 & 16.02 & 19.45 & 16.98 & 13.05 & 5.28 \\ 
~~~~High-Dimension & 0.26 & 40.10 & 26.66 & 25.24 & 23.64 & 14.43 \\ 
\multicolumn{7}{l}{\textit{~~Partially Non-Overlapping Support}} \\
~~~~Low-Dimension & 0.23 & 12.42 & 17.34 & 12.17 & 16.76 & 9.27 \\ 
~~~~High-Dimension & 0.20 & 37.24 & 26.66 & 29.50 & 24.50 & 23.75 \\ 
\hline
$n=1000$ & Oracle & Forward & Lasso & SCAD & Lasso + Fusion & SCAD + Fusion \\ 
\hline
\multicolumn{7}{l}{\textbf{Moderate Non-Terminal Event Rate}} \\ 
\multicolumn{7}{l}{\textit{~~Shared Support}} \\ 
~~~~Low-Dimension & 0.01 & 4.05 & 13.58 & 3.96 & 8.33 & 1.66 \\ 
~~~~High-Dimension & 0.00 & 15.59 & 20.81 & 18.45 & 19.05 & 6.38 \\ 
\multicolumn{7}{l}{\textit{~~Partially Non-Overlapping Support}} \\
~~~~Low-Dimension & 0.02 & 4.10 & 13.40 & 4.27 & 12.14 & 3.94 \\ 
~~~~High-Dimension & 0.03 & 15.24 & 19.87 & 16.66 & 21.40 & 8.24 \\ 
\multicolumn{7}{l}{\textbf{Low Non-Terminal Event Rate}} \\ 
\multicolumn{7}{l}{\textit{~~Shared Support}} \\ 
~~~~Low-Dimension & 0.05 & 8.81 & 16.64 & 8.54 & 8.76 & 1.65 \\ 
~~~~High-Dimension & 0.06 & 22.83 & 24.67 & 22.04 & 19.69 & 6.01 \\ 
\multicolumn{7}{l}{\textit{~~Partially Non-Overlapping Support}} \\
~~~~Low-Dimension & 0.03 & 5.14 & 14.84 & 5.63 & 13.53 & 4.79 \\ 
~~~~High-Dimension & 0.02 & 17.00 & 22.64 & 22.43 & 22.94 & 8.49 \\ 
\hline
\end{tabular}}
\end{center}
\end{table}

\subsubsection{Small Sample Result ($n=300$),  Weibull and Piecewise Constant Models}

\begin{table}[H]
\caption{\label{tab:signincsmall}Mean count of sign-inconsistent $\bbetahat$ estimates, Weibull and piecewise constant baseline hazard specifications, sample size $n=300$.}
\begin{center}
\resizebox{\textwidth}{!}{\begin{tabular}{lcccccc}
\hline
\multicolumn{7}{l}{\ul{Weibull}} \\ 
$n=300$ & Oracle & Forward & Lasso & SCAD & Lasso + Fusion & SCAD + Fusion \\ 
\hline
\multicolumn{7}{l}{\textbf{Moderate Non-Terminal Event Rate}} \\ 
\multicolumn{7}{l}{\textit{~~Shared Support}} \\ 
~~~~Low-Dimension & 0.48 & 17.11 & 21.26 & 16.21 & 14.17 & 6.93 \\ 
~~~~High-Dimension & 0.48 & 50.56 & 28.73 & 209.17 & 26.88 & 209.17 \\ 
\multicolumn{7}{l}{\textit{~~Partially Non-Overlapping Support}} \\
~~~~Low-Dimension & 0.50 & 16.96 & 23.95 & 15.80 & 19.35 & 12.75 \\ 
~~~~High-Dimension & 0.55 & 50.21 & 28.72 & 208.05 & 28.49 & 208.05 \\ 
\multicolumn{7}{l}{\textbf{Low Non-Terminal Event Rate}} \\ 
\multicolumn{7}{l}{\textit{~~Shared Support}} \\ 
~~~~Low-Dimension & 0.91 & 19.07 & 21.01 & 19.84 & 16.38 & 10.24 \\ 
~~~~High-Dimension & 0.93 & 44.83 & 27.66 & 217.74 & 26.89 & 217.48 \\ 
\multicolumn{7}{l}{\textit{~~Partially Non-Overlapping Support}} \\
~~~~Low-Dimension & 0.66 & 18.31 & 22.05 & 17.72 & 19.45 & 13.61 \\ 
~~~~High-Dimension & 0.68 & 48.14 & 28.23 & 231.49 & 28.63 & 231.49 \\ 
\hline
\multicolumn{7}{l}{\ul{Piecewise Constant}} \\ 
$n=300$ & Oracle & Forward & Lasso & SCAD & Lasso + Fusion & SCAD + Fusion \\ 
\hline
\multicolumn{7}{l}{\textbf{Moderate Non-Terminal Event Rate}} \\ 
\multicolumn{7}{l}{\textit{~~Shared Support}} \\ 
~~~~Low-Dimension & 0.47 & 17.12 & 21.15 & 16.06 & 13.84 & 7.14 \\ 
~~~~High-Dimension & 0.49 & 53.07 & 28.79 & 177.87 & 27.06 & 173.04 \\ 
\multicolumn{7}{l}{\textit{~~Partially Non-Overlapping Support}} \\
~~~~Low-Dimension & 0.49 & 16.85 & 23.90 & 15.51 & 19.37 & 12.63 \\ 
~~~~High-Dimension & 0.54 & 53.45 & 28.65 & 184.61 & 28.72 & 183.47 \\ 
\multicolumn{7}{l}{\textbf{Low Non-Terminal Event Rate}} \\ 
\multicolumn{7}{l}{\textit{~~Shared Support}} \\ 
~~~~Low-Dimension & 0.88 & 19.05 & 21.14 & 19.63 & 16.75 & 10.27 \\ 
~~~~High-Dimension & 0.90 & 49.42 & 28.18 & 174.62 & 27.17 & 158.78 \\ 
\multicolumn{7}{l}{\textit{~~Partially Non-Overlapping Support}} \\
~~~~Low-Dimension & 0.67 & 18.17 & 22.16 & 17.85 & 20.02 & 13.35 \\ 
~~~~High-Dimension & 0.67 & 52.44 & 28.60 & 191.24 & 28.74 & 186.23 \\ 
\hline
\end{tabular}}
\end{center}
\end{table}

\subsection{False Inclusion Results}

\subsubsection{Main Results ($n=500,1000$),  Weibull and Piecewise Constant Models}

\begin{table}[H]
\caption{Mean count of false inclusions, Weibull model specification.\label{tab:falseinclwb}}
\begin{center}
\resizebox{0.95\textwidth}{!}{\begin{tabular}{lcccccc}

\hline
$n=500$ & Forward & Lasso & SCAD & Lasso + Fusion & SCAD + Fusion \\ 
\hline
\multicolumn{6}{l}{\textbf{Moderate Non-Terminal Event Rate}} \\ 
\multicolumn{6}{l}{\textit{~~Shared Support}} \\ 
~~~~Low-Dimension & 0.77 & 6.18 & 1.70 & 11.66 & 0.79 \\ 
~~~~High-Dimension & 23.68 & 3.52 & 18.93 & 18.23 & 16.13 \\ 
\multicolumn{6}{l}{\textit{~~Partially Non-Overlapping Support}} \\
~~~~Low-Dimension & 0.82 & 5.97 & 1.97 & 12.24 & 1.53 \\ 
~~~~High-Dimension & 23.59 & 0.86 & 23.95 & 6.18 & 16.89 \\ 
\multicolumn{6}{l}{\textbf{Low Non-Terminal Event Rate}} \\ 
\multicolumn{6}{l}{\textit{~~Shared Support}} \\ 
~~~~Low-Dimension & 0.94 & 3.97 & 1.82 & 11.41 & 0.89 \\ 
~~~~High-Dimension & 23.15 & 5.51 & 4.28 & 11.97 & 10.61 \\ 
\multicolumn{6}{l}{\textit{~~Partially Non-Overlapping Support}} \\
~~~~Low-Dimension & 0.82 & 4.13 & 1.99 & 11.06 & 1.67 \\ 
~~~~High-Dimension & 22.49 & 2.10 & 9.97 & 4.64 & 12.62 \\ 
\hline
$n=1000$ & Forward & Lasso & SCAD & Lasso + Fusion & SCAD + Fusion \\ 
\hline
\multicolumn{6}{l}{\textbf{Moderate Non-Terminal Event Rate}} \\ 
\multicolumn{6}{l}{\textit{~~Shared Support}} \\ 
~~~~Low-Dimension & 0.54 & 11.37 & 1.13 & 8.45 & 0.60 \\ 
~~~~High-Dimension & 11.79 & 7.32 & 10.41 & 19.43 & 6.47 \\ 
\multicolumn{6}{l}{\textit{~~Partially Non-Overlapping Support}} \\
~~~~Low-Dimension & 0.56 & 12.39 & 1.05 & 11.99 & 0.82 \\ 
~~~~High-Dimension & 11.57 & 6.93 & 12.16 & 17.41 & 5.02 \\ 
\multicolumn{6}{l}{\textbf{Low Non-Terminal Event Rate}} \\ 
\multicolumn{6}{l}{\textit{~~Shared Support}} \\ 
~~~~Low-Dimension & 0.55 & 5.25 & 2.08 & 8.28 & 0.41 \\ 
~~~~High-Dimension & 13.06 & 6.48 & 3.67 & 18.06 & 5.57 \\ 
\multicolumn{6}{l}{\textit{~~Partially Non-Overlapping Support}} \\
~~~~Low-Dimension & 0.43 & 10.79 & 1.67 & 11.83 & 1.06 \\ 
~~~~High-Dimension & 11.54 & 5.65 & 8.91 & 15.09 & 4.05 \\ 
\hline
\end{tabular}}
\end{center}
\end{table}

\begin{table}[H]
\caption{Mean count of false inclusions, piecewise constant model specification.\label{tab:falseinclpw}}
\begin{center}
\resizebox{\textwidth}{!}{\begin{tabular}{lcccccc}
\hline
$n=500$ & Forward & Lasso & SCAD & Lasso + Fusion & SCAD + Fusion \\ 
\hline
\multicolumn{6}{l}{\textbf{Moderate Non-Terminal Event Rate}} \\ 
\multicolumn{6}{l}{\textit{~~Shared Support}} \\ 
~~~~Low-Dimension & 0.74 & 6.28 & 1.61 & 11.30 & 0.81 \\ 
~~~~High-Dimension & 23.12 & 3.60 & 13.24 & 17.62 & 18.15 \\ 
\multicolumn{6}{l}{\textit{~~Partially Non-Overlapping Support}} \\
~~~~Low-Dimension & 0.86 & 5.87 & 1.81 & 12.48 & 1.45 \\ 
~~~~High-Dimension & 22.43 & 0.89 & 20.48 & 6.36 & 19.38 \\ 
\multicolumn{6}{l}{\textbf{Low Non-Terminal Event Rate}} \\ 
\multicolumn{6}{l}{\textit{~~Shared Support}} \\ 
~~~~Low-Dimension & 0.94 & 4.13 & 1.72 & 11.44 & 0.85 \\ 
~~~~High-Dimension & 23.96 & 6.06 & 4.68 & 11.73 & 9.92 \\ 
\multicolumn{6}{l}{\textit{~~Partially Non-Overlapping Support}} \\
~~~~Low-Dimension & 0.84 & 4.46 & 2.12 & 11.48 & 1.59 \\ 
~~~~High-Dimension & 23.69 & 2.23 & 10.14 & 5.18 & 13.05 \\ 
\hline
$n=1000$ & Forward & Lasso & SCAD & Lasso + Fusion & SCAD + Fusion \\ 
\hline
\multicolumn{6}{l}{\textbf{Moderate Non-Terminal Event Rate}} \\ 
\multicolumn{6}{l}{\textit{~~Shared Support}} \\ 
~~~~Low-Dimension & 0.53 & 11.44 & 1.00 & 8.29 & 0.61 \\ 
~~~~High-Dimension & 11.54 & 7.21 & 9.27 & 18.70 & 5.53 \\ 
\multicolumn{6}{l}{\textit{~~Partially Non-Overlapping Support}} \\
~~~~Low-Dimension & 0.56 & 11.90 & 0.94 & 11.56 & 0.84 \\ 
~~~~High-Dimension & 11.28 & 7.01 & 8.96 & 17.65 & 4.75 \\ 
\multicolumn{6}{l}{\textbf{Low Non-Terminal Event Rate}} \\ 
\multicolumn{6}{l}{\textit{~~Shared Support}} \\ 
~~~~Low-Dimension & 0.53 & 5.57 & 2.05 & 8.59 & 0.51 \\ 
~~~~High-Dimension & 13.18 & 6.78 & 3.67 & 18.73 & 4.64 \\ 
\multicolumn{6}{l}{\textit{~~Partially Non-Overlapping Support}} \\
~~~~Low-Dimension & 0.47 & 11.38 & 1.55 & 12.21 & 0.93 \\ 
~~~~High-Dimension & 11.77 & 6.47 & 9.83 & 16.45 & 3.89 \\ 
\hline
\end{tabular}}
\end{center}
\end{table}

\subsubsection{Small Sample Result ($n=300$),  Weibull and Piecewise Constant Models}

\begin{table}[H]
\caption{Mean count of false inclusions, Weibull and piecewise constant baseline hazard specifications, sample size $n=300$.\label{tab:falseinclsmall}}
\begin{center}
\resizebox{0.95\textwidth}{!}{\begin{tabular}{lcccccc}
\hline
\multicolumn{6}{l}{\ul{Weibull}} \\ 
$n=300$ & Forward & Lasso & SCAD & Lasso + Fusion & SCAD + Fusion \\ 
\hline
\multicolumn{6}{l}{\textbf{Moderate Non-Terminal Event Rate}} \\ 
\multicolumn{6}{l}{\textit{~~Shared Support}} \\ 
~~~~Low-Dimension & 1.22 & 2.76 & 1.94 & 12.70 & 1.27 \\ 
~~~~High-Dimension & 32.84 & 1.14 & 194.58 & 2.67 & 194.58 \\ 
\multicolumn{6}{l}{\textit{~~Partially Non-Overlapping Support}} \\
~~~~Low-Dimension & 1.08 & 1.20 & 2.29 & 6.69 & 1.75 \\ 
~~~~High-Dimension & 32.56 & 1.03 & 193.72 & 0.98 & 193.72 \\ 
\multicolumn{6}{l}{\textbf{Low Non-Terminal Event Rate}} \\ 
\multicolumn{6}{l}{\textit{~~Shared Support}} \\ 
~~~~Low-Dimension & 1.14 & 3.38 & 1.87 & 10.47 & 1.32 \\ 
~~~~High-Dimension & 25.76 & 4.15 & 200.79 & 5.56 & 200.57 \\ 
\multicolumn{6}{l}{\textit{~~Partially Non-Overlapping Support}} \\
~~~~Low-Dimension & 1.18 & 2.08 & 2.34 & 6.15 & 2.14 \\ 
~~~~High-Dimension & 29.12 & 1.71 & 216.09 & 1.08 & 216.09 \\ 
\hline
\multicolumn{6}{l}{\ul{Piecewise Constant}} \\ 
$n=300$ & Forward & Lasso & SCAD & Lasso + Fusion & SCAD + Fusion \\ 
\hline
\multicolumn{6}{l}{\textbf{Moderate Non-Terminal Event Rate}} \\ 
\multicolumn{6}{l}{\textit{~~Shared Support}} \\ 
~~~~Low-Dimension & 1.20 & 2.89 & 2.20 & 12.13 & 1.21 \\ 
~~~~High-Dimension & 35.64 & 1.09 & 162.64 & 3.42 & 158.58 \\ 
\multicolumn{6}{l}{\textit{~~Partially Non-Overlapping Support}} \\
~~~~Low-Dimension & 1.12 & 1.31 & 2.08 & 6.81 & 1.73 \\ 
~~~~High-Dimension & 35.82 & 1.17 & 169.88 & 1.10 & 169.27 \\ 
\multicolumn{6}{l}{\textbf{Low Non-Terminal Event Rate}} \\ 
\multicolumn{6}{l}{\textit{~~Shared Support}} \\ 
~~~~Low-Dimension & 1.18 & 3.66 & 1.76 & 10.72 & 1.27 \\ 
~~~~High-Dimension & 30.79 & 4.17 & 156.88 & 5.90 & 142.89 \\ 
\multicolumn{6}{l}{\textit{~~Partially Non-Overlapping Support}} \\
~~~~Low-Dimension & 1.22 & 2.05 & 2.45 & 6.42 & 2.17 \\ 
~~~~High-Dimension & 33.62 & 1.56 & 174.97 & 1.04 & 170.26 \\ 
\hline
\end{tabular}}
\end{center}
\end{table}

\subsection{False Exclusion Results}

\subsubsection{Main Results ($n=500,1000$), Weibull and Piecewise Constant Models}

\begin{table}[H]
\caption{Mean count of false exclusions, Weibull model specification.\label{tab:falseexclwb}}
\begin{center}
\resizebox{0.95\textwidth}{!}{\begin{tabular}{lcccccc}
\hline
$n=500$ & Forward & Lasso & SCAD & Lasso + Fusion & SCAD + Fusion \\ 
\hline
\multicolumn{6}{l}{\textbf{Moderate Non-Terminal Event Rate}} \\ 
\multicolumn{6}{l}{\textit{~~Shared Support}} \\ 
~~~~Low-Dimension & 10.74 & 8.85 & 8.58 & 0.24 & 2.66 \\ 
~~~~High-Dimension & 12.12 & 22.87 & 16.29 & 3.66 & 2.76 \\ 
\multicolumn{6}{l}{\textit{~~Partially Non-Overlapping Support}} \\
~~~~Low-Dimension & 9.90 & 9.40 & 8.21 & 2.80 & 6.37 \\ 
~~~~High-Dimension & 10.96 & 26.35 & 14.57 & 17.18 & 7.46 \\ 
\multicolumn{6}{l}{\textbf{Low Non-Terminal Event Rate}} \\ 
\multicolumn{6}{l}{\textit{~~Shared Support}} \\ 
~~~~Low-Dimension & 15.29 & 15.34 & 15.32 & 1.35 & 4.56 \\ 
~~~~High-Dimension & 16.33 & 20.57 & 20.65 & 11.43 & 3.81 \\ 
\multicolumn{6}{l}{\textit{~~Partially Non-Overlapping Support}} \\
~~~~Low-Dimension & 11.88 & 12.97 & 10.43 & 5.24 & 7.80 \\ 
~~~~High-Dimension & 13.71 & 23.98 & 19.54 & 19.89 & 10.26 \\ 
\hline
$n=1000$ & Forward & Lasso & SCAD & Lasso + Fusion & SCAD + Fusion \\ 
\hline
\multicolumn{6}{l}{\textbf{Moderate Non-Terminal Event Rate}} \\ 
\multicolumn{6}{l}{\textit{~~Shared Support}} \\ 
~~~~Low-Dimension & 3.49 & 2.15 & 2.86 & 0.04 & 0.91 \\ 
~~~~High-Dimension & 4.04 & 13.60 & 9.51 & 0.39 & 0.74 \\ 
\multicolumn{6}{l}{\textit{~~Partially Non-Overlapping Support}} \\
~~~~Low-Dimension & 3.56 & 1.42 & 3.22 & 0.59 & 3.18 \\ 
~~~~High-Dimension & 4.01 & 12.93 & 7.14 & 3.87 & 3.42 \\ 
\multicolumn{6}{l}{\textbf{Low Non-Terminal Event Rate}} \\ 
\multicolumn{6}{l}{\textit{~~Shared Support}} \\ 
~~~~Low-Dimension & 9.16 & 11.33 & 6.94 & 0.15 & 1.45 \\ 
~~~~High-Dimension & 10.67 & 17.81 & 18.49 & 1.02 & 1.37 \\ 
\multicolumn{6}{l}{\textit{~~Partially Non-Overlapping Support}} \\
~~~~Low-Dimension & 4.86 & 3.75 & 4.08 & 1.43 & 3.91 \\ 
~~~~High-Dimension & 5.42 & 16.32 & 13.27 & 6.89 & 4.66 \\ 
\hline
\end{tabular}}
\end{center}
\end{table}

\begin{table}[H]
\caption{Mean count of false exclusions, piecewise constant model specification.\label{tab:falseexclpw}}
\begin{center}
\resizebox{\textwidth}{!}{\begin{tabular}{lcccccc}
\hline
$n=500$ & Forward & Lasso & SCAD & Lasso + Fusion & SCAD + Fusion \\ 
\hline
\multicolumn{6}{l}{\textbf{Moderate Non-Terminal Event Rate}} \\ 
\multicolumn{6}{l}{\textit{~~Shared Support}} \\ 
~~~~Low-Dimension & 10.77 & 8.89 & 8.85 & 0.25 & 2.75 \\ 
~~~~High-Dimension & 12.30 & 22.60 & 16.85 & 3.89 & 2.64 \\ 
\multicolumn{6}{l}{\textit{~~Partially Non-Overlapping Support}} \\
~~~~Low-Dimension & 9.84 & 9.59 & 8.29 & 2.72 & 6.74 \\ 
~~~~High-Dimension & 11.07 & 26.18 & 15.02 & 16.50 & 7.40 \\ 
\multicolumn{6}{l}{\textbf{Low Non-Terminal Event Rate}} \\ 
\multicolumn{6}{l}{\textit{~~Shared Support}} \\ 
~~~~Low-Dimension & 15.08 & 15.30 & 15.25 & 1.61 & 4.42 \\ 
~~~~High-Dimension & 16.11 & 20.60 & 20.56 & 11.91 & 4.52 \\ 
\multicolumn{6}{l}{\textit{~~Partially Non-Overlapping Support}} \\
~~~~Low-Dimension & 11.57 & 12.87 & 10.05 & 5.24 & 7.69 \\ 
~~~~High-Dimension & 13.54 & 24.43 & 19.36 & 19.31 & 10.70 \\ 
\hline
$n=1000$ & Forward & Lasso & SCAD & Lasso + Fusion & SCAD + Fusion \\ 
\hline
\multicolumn{6}{l}{\textbf{Moderate Non-Terminal Event Rate}} \\ 
\multicolumn{6}{l}{\textit{~~Shared Support}} \\ 
~~~~Low-Dimension & 3.52 & 2.14 & 2.95 & 0.04 & 1.05 \\ 
~~~~High-Dimension & 4.05 & 13.60 & 9.18 & 0.36 & 0.85 \\ 
\multicolumn{6}{l}{\textit{~~Partially Non-Overlapping Support}} \\
~~~~Low-Dimension & 3.54 & 1.50 & 3.33 & 0.57 & 3.10 \\ 
~~~~High-Dimension & 3.96 & 12.86 & 7.71 & 3.75 & 3.49 \\ 
\multicolumn{6}{l}{\textbf{Low Non-Terminal Event Rate}} \\ 
\multicolumn{6}{l}{\textit{~~Shared Support}} \\ 
~~~~Low-Dimension & 8.28 & 11.05 & 6.49 & 0.17 & 1.14 \\ 
~~~~High-Dimension & 9.65 & 17.90 & 18.37 & 0.96 & 1.36 \\ 
\multicolumn{6}{l}{\textit{~~Partially Non-Overlapping Support}} \\
~~~~Low-Dimension & 4.67 & 3.45 & 4.08 & 1.32 & 3.86 \\ 
~~~~High-Dimension & 5.23 & 16.18 & 12.61 & 6.49 & 4.61 \\ 
\hline
\end{tabular}}
\end{center}
\end{table}

\subsubsection{Small Sample Result ($n=300$),  Weibull and Piecewise Constant Models}

\begin{table}[H]
\caption{Mean count of false exclusions, Weibull and piecewise constant baseline hazard specifications, sample size $n=300$.\label{tab:falseexclsmall}}
\begin{center}
\resizebox{0.95\textwidth}{!}{\begin{tabular}{lcccccc}
\hline
\multicolumn{6}{l}{\ul{Weibull}} \\ 
$n=300$ & Forward & Lasso & SCAD & Lasso + Fusion & SCAD + Fusion \\ 
\hline
\multicolumn{6}{l}{\textbf{Moderate Non-Terminal Event Rate}} \\ 
\multicolumn{6}{l}{\textit{~~Shared Support}} \\ 
~~~~Low-Dimension & 15.88 & 18.48 & 14.25 & 1.47 & 5.66 \\ 
~~~~High-Dimension & 17.69 & 27.59 & 14.30 & 24.21 & 14.30 \\ 
\multicolumn{6}{l}{\textit{~~Partially Non-Overlapping Support}} \\
~~~~Low-Dimension & 15.87 & 22.75 & 13.50 & 12.65 & 11.00 \\ 
~~~~High-Dimension & 17.64 & 27.69 & 14.06 & 27.51 & 14.06 \\ 
\multicolumn{6}{l}{\textbf{Low Non-Terminal Event Rate}} \\ 
\multicolumn{6}{l}{\textit{~~Shared Support}} \\ 
~~~~Low-Dimension & 17.90 & 17.59 & 17.92 & 5.89 & 8.90 \\ 
~~~~High-Dimension & 19.00 & 23.51 & 16.59 & 21.32 & 16.55 \\ 
\multicolumn{6}{l}{\textit{~~Partially Non-Overlapping Support}} \\
~~~~Low-Dimension & 17.12 & 19.96 & 15.34 & 13.28 & 11.46 \\ 
~~~~High-Dimension & 18.98 & 26.51 & 15.06 & 27.55 & 15.06 \\ 
\hline
\multicolumn{6}{l}{\ul{Piecewise Constant}} \\ 
$n=300$ & Forward & Lasso & SCAD & Lasso + Fusion & SCAD + Fusion \\ 
\hline
\multicolumn{6}{l}{\textbf{Moderate Non-Terminal Event Rate}} \\ 
\multicolumn{6}{l}{\textit{~~Shared Support}} \\ 
~~~~Low-Dimension & 15.91 & 18.23 & 13.83 & 1.71 & 5.92 \\ 
~~~~High-Dimension & 17.39 & 27.70 & 14.95 & 23.64 & 14.19 \\ 
\multicolumn{6}{l}{\textit{~~Partially Non-Overlapping Support}} \\
~~~~Low-Dimension & 15.72 & 22.58 & 13.42 & 12.55 & 10.90 \\ 
~~~~High-Dimension & 17.59 & 27.49 & 14.48 & 27.61 & 13.95 \\ 
\multicolumn{6}{l}{\textbf{Low Non-Terminal Event Rate}} \\ 
\multicolumn{6}{l}{\textit{~~Shared Support}} \\ 
~~~~Low-Dimension & 17.83 & 17.43 & 17.82 & 6.02 & 8.99 \\ 
~~~~High-Dimension & 18.56 & 24.00 & 17.45 & 21.27 & 15.64 \\ 
\multicolumn{6}{l}{\textit{~~Partially Non-Overlapping Support}} \\
~~~~Low-Dimension & 16.94 & 20.11 & 15.38 & 13.57 & 11.18 \\ 
~~~~High-Dimension & 18.78 & 27.04 & 16.02 & 27.70 & 15.74 \\ 
\hline
\end{tabular}}
\end{center}
\end{table}


\newpage
\section{Additional Algorithmic Details}\label{sec:algodetails}

\subsection{Tuning the Nesterov smoothing parameter $\mu$}
The Nesterov smoothing parameter $\mu$ defined in \eqref{eq:pensmooth} is important for optimization under the proposed structured fusion penalty. If $\mu$ is too large then the approximation may be too loose to induce fusion, while if $\mu$ is too small the approximation will be insufficiently smooth, and the optimization algorithm will exhibit poor performance. 

In our proposed optimization algorithm, we therefore avoid pre-specifying a single $\mu$ value, and instead adopt a simple algorithmic approach following \citet{hahn2020framework} called `progressive smoothing'. For each fixed set of regularization parameters $(\lambda_1,\lambda_2)$, we first iterate proximal gradient descent to convergence with $\mu$ large (e.g., $10^{-2}$ in our applications), and then decrease $\mu$ and further iterate to convergence, and so on until $\mu$ is sufficiently small. (In our applications, we decrease $\mu$ over the sequence $\{10^{-2},10^{-3},10^{-4},10^{-5},10^{-6}\}$ ). This approach allows for tight approximations of the fusion penalty to be achieved, while remaining computationally efficient because optimizing for each new $\mu$ typically requires only a handful of proximal gradient descent iterations from the preceding $\mu$. This progressive smoothing procedure is fully implemented in our \texttt{SemiCompRisksPen} package.

\section{Spline-Based Baseline Hazard Specifications}\label{sec:splines}

In this section, we summarize several spline-based specifications for the baseline hazards in the illness-death model, to which the proposed estimation framework can be readily extended. 

\subsection{Polynomial B-Spline on Log-Hazard Scale}

One choice is to specify each log baseline hazard as a polynomial B-spline function of degree $b_g$. This approach specifies 
\begin{equation}\label{eq:bs}
\log h_{0g}(t) = \sum_{j=1}^{k_{g}} \phi_{gj}B_{gj}(t)
\end{equation} where $k_g \geq b_g+1$ is the number of desired baseline parameters for the $g$th hazard, and $B_{gj}(\cdot)$ is the $j$th B-spline basis function \citep{deboor2001practical}. These basis functions are defined according to the placement of $k_g + b_g - 1$ knots, such that linear combinations of these basis functions can express a range of shapes constrained at those knots. Usually $b_g=3$ such that the resulting spline function must be continuously twice-differentiable. B-splines are only well-defined over the interval spanned by the knots, so the boundary knots are typically fixed at 0 and the maximum observed time in the $g$th hazard, with inner knots typically placed at appropriate deciles. While extremely flexible, this specification requires numerical integration to compute the cumulative hazard, making it slower in practice than other specifications.

Using general notation over $g\in \{1,2,3\}$, $r\in \{1,2,3\}$, $j=1,\dots,k_{g}$, and $l=1,\dots,k_{r}$, the first two derivatives of the cause-specific log-baseline are
\begin{align}
\frac{\partial}{\partial\phi_{gj}} \log h_{0g}(t) = & B_{gj}(t)
\\ \frac{\partial^2}{\partial\phi_{gj}\partial\phi_{rl}} \log h_{0g}(t) = & 0
\end{align}
The cause-specific cumulative hazard and its first two derivatives do not have closed form, but can be written as
\begin{align}
H_{0g}(t) = & \int_{0}^t \exp\left(\sum_{j=1}^{k_{g}} \phi_{gj}B_{gj}(s)\right)  ds
\\ \frac{\partial}{\partial\phi_{gj}} H_{0g}(t) = & \int_{0}^t B_{gj}(s)\exp\left(\sum_{j'=1}^{k_{g}} \phi_{gj'}B_{gj'}(s)\right)  ds
\\ \frac{\partial^2}{\partial\phi_{gj}\partial\phi_{rl}} H_{0g}(t) = & \int_{0}^t B_{gj}(s)B_{gl}(s)\exp\left(\sum_{j'=1}^{k_{g}} \phi_{gj'}B_{gj'}(s)\right)\mathbb{I}(g=r)  ds
\end{align}

\subsection{Restricted Cubic Spline on Log-Cumulative Hazard Scale}

A final spline-based approach follows \citet{royston2002flexible} in specifying the log-cumulative baseline hazard as a natural (or `restricted') cubic spline function. Whereas B-splines are only defined over range spanned by the chosen knots, natural cubic splines extend linearly beyond the boundary knots.  Set $z = \log(t)$, and to generate $k_g$ total parameters, consider a set of $k_g$ knots $0 \leq z_g^{(1)} < \dots < z_g^{(k_g)}$. Define $v_{g1}(z)=1$, $v_{g2}(z)=z$, and for $j = 3, \dots, k_{g}$ define natural cubic spline basis functions as
\begin{align}
v_{gj}(z) = (z-z_g^{(j)})_{+}^3 - \zeta_j(z-z_g^{(1)})_{+}^3 - (1-\zeta_j)(z-z_g^{(k_g)})_{+}^3,
\end{align} where $\zeta_{gj} = (z_g^{(k_g)}-z_g^{(j)})/(z_g^{(k_g)}-z_g^{(1)})$ and $(z)_{+} = \max(0,z)$.  Then this model specifies the baseline hazard as
\begin{equation}\label{eq:rp}
\log H_{0g}(t) = \sum_{j=1}^{k_g}\phi_{gj}v_{gj}(z).
\end{equation} By specifying the spline model of the log-cumulative hazard, evaluation does not require numerical integration. However, while the log-cumulative hazard is constrained to be monotonically increasing, there is no such constraint inherently on the natural cubic spline. While in principle this might require putting formal constraints on $\bphi_g$ during optimization, in practice standard unconstrained methods suffice as long as the starting point is feasible and there is a modest amount of data \citep{herndon1990restricted,royston2002flexible}. The Weibull model is a special case of the Royston-Parmar model, fixing no internal knots (e.g., setting $k_g=2$). 

The cumulative cause-specific hazard and its first two derivatives are
\begin{align}
H_{0g}(t) = & \exp\left(\sum_{j=1}^{k_g}\phi_{gj}v_{gj}(z)\right)
\\ \frac{\partial}{\partial\phi_{gj}} H_{0g}(t) = & v_{gj}(z)\exp\left(\sum_{j=1}^{k_g}\phi_{gj}v_{gj}(z)\right)
\\ \frac{\partial^2}{\partial\phi_{gj}\partial\phi_{rl}} H_{0g}(t) = & v_{gj}(z)v_{gl}(z)\exp\left(\sum_{j=1}^{k_g}\phi_{gj}v_{gj}(z)\right)\mathbb{I}(g=r)
\end{align}

Now, define $v'_{g1}(z) = 0$, and $v'_{g2}(z)=1$, and for $j = 3, \dots, k_{g}$, 
\begin{align}
v'_{gj}(z) = 3(z-z_g^{(j)})_{+}^2 - 3\zeta_j(z-z_g^{(1)})_{+}^2 - 3(1-\zeta_j)(z-z_g^{(k_g)})_{+}^2.
\end{align}
Using general notation over $g\in \{1,2,3\}$, $r\in \{1,2,3\}$, $j=1,\dots,k_{g}$, and $l=1,\dots,k_{r}$, the cause-specific log-baseline hazard and its first two derivatives are
\begin{align}
\log h_{0g}(t) = & \log\left(\sum_{j=1}^{k_g}\phi_{gj}v'_{gj}(z)\right) + \sum_{j=1}^{k_g}\phi_{gj}v_{gj}(z) - \log t
\\ \frac{\partial}{\partial\phi_{gj}} \log h_{0g}(t) = & \frac{v'_{gj}(z)}{\sum_{j'=1}^{k_g}\phi_{gj'}v'_{gj'}(z)} + v_{gj}(z)
\\ \frac{\partial^2}{\partial\phi_{gj}\partial\phi_{rl}} \log h_{0g}(t) = &-  \frac{v'_{gj}(z)v'_{gl}(z)}{\left(\sum_{j'=1}^{k_g}\phi_{gj'}v'_{gj'}(z)\right)^2}\mathbb{I}(g=r)
\end{align}

\end{document}